\definecolor{red}{rgb}{.7,0,0}
\newcommand{\N}{\mathbb N}
\newcommand{\R}{\mathbb{R}}
\newcommand{\C}{\mathbb{C}}
\newcommand{\Z}{\mathbb{Z}}
\newcommand{\E}{\mathbb{E}\,}
\newcommand{\z}{\zeta}
\renewcommand{\a}{\alpha}
\renewcommand{\b}{\beta}
\renewcommand{\d}{\delta}
\newcommand{\s}{\sigma}
\newcommand{\e}{\varepsilon}
\newcommand{\LV}{\mathrm{LV}} 
\newcommand{\Oh}{\mathcal{O}}
\newcommand{\ind}[1]{\mathds{1}_{#1}}
\newcommand{\ddd}{\cdot\ldots\cdot}
\def\ND{\mathcal{N}}
\newtheorem{thm}{Theorem}
\newtheorem{lemma}[thm]{Lemma}
\newtheorem{cor}[thm]{Corollary}
\newtheorem{prop}[thm]{Proposition}
\newtheorem{defi}[thm]{Definition}
\newtheorem{conj}{Conjecture}
\theoremstyle{remark} 
\newtheorem{remark}[thm]{Remark}
\newtheorem{example}[thm]{Example}
\numberwithin{equation}{section}
\numberwithin{thm}{section}
\title{The real tau-conjecture is true on average} 
\date{\today}
\author{Ir\'en\'ee Briquel}
\address{Universit\'e de Cergy-Pontoise}
\email{irenee.briquel@u-cergy.fr} 
\author{Peter B\"urgisser}
\thanks{PB partially supported by DFG grant BU 1371/2-2 
and by the ERC under the European's Horizon~2020 research and innovation programme (grant agreement no.~787840).}
\address{Institute of Mathematics, Technische Universit\"at Berlin}
\email{pbuerg@math.tu-berlin.de}
\begin{document}

\keywords{zeros of random polynomials, Descartes rule, sparsity, depth four arithmetic circuits, tau-conjecture, complexity theory}

\subjclass[2010]{26C10, 60G60, 68Q17}

\maketitle

\begin{abstract}
Koiran's real $\tau$-conjecture claims that 
the number of real zeros of a structured polynomial given 
as a sum of $m$ products of $k$ real sparse polynomials, 
each with at most $t$ monomials,  
is bounded by a polynomial in $mkt$. 
This conjecture has a major consequence in complexity theory 
since it would lead to superpolynomial lower bounds for the arithmetic circuit size 
of the permanent. We confirm the conjecture in a probabilistic sense by 
proving that if the coefficients involved in the description of $f$
are independent standard Gaussian random variables, then 
the expected number of real zeros of $f$ is $\Oh(mk^2t)$.
\end{abstract}


\section{Introduction}

We study the number of real zeros of real univariate polynomials. 
A polynomial~$f$ is called $t$-sparse if it has at most~$t$ monomials. 
Descartes rule states that a $t$-sparse polynomial~$f$ has at most $t-1$ 
positive real zeros, no matter what is the degree of $f$. 
Therefore, a product $f_1\cdots f_k$ of $k$ many $t$-sparse polynomials $f_j$ 
can have at most $k(t-1)$ positive real zeros. 
What can we say about the number of zeros of a sum of $m$ many products? 
So we consider real univariate polynomials $F$ of the following structure
\begin{equation}\label{eq:D4-poly}
 F = \sum_{i=1}^m \prod_{j=1}^{k_i} f_{ij}
\end{equation}
where all $f_{ij}$ are $t$-sparse. 
In other words, $F$ is given by a depth four arithmetic circuit with the structure
$\Sigma\Pi\Sigma\Pi$, where the parameters $m$, $k:=\max_i k_i$, and $t$ bound the fan-in at the different levels 
except at the lowest (since we don't require a bound on the degrees of the $f_{ij}$). 

The following conjecture was put forward by Koiran~\cite{koiran:11}.

\begin{conj}[Real $\tau$-conjecture]\label{conj:RealTau}
The number of real zeros of a polynomial $F$ of the form \eqref{eq:D4-poly} 
is bounded by a polynomial in $m$, $k$, and $t$. 
\end{conj}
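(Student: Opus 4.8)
The plan is to argue by induction on the number $m$ of summands, converting real zeros of $F$ into real zeros of a logarithmic derivative via Rolle's theorem and then controlling the structural complexity of that derivative. For $m=1$ the statement is exactly Descartes' rule: $F=\prod_{j=1}^{k_1} f_{1j}$ has at most $k(t-1)$ positive real zeros, and applying the same bound to $F(-x)$ and adding the possible zero at $x=0$ gives $\Oh(kt)$ real zeros in total, with no dependence on the degrees. For the inductive step, set $g_i:=\prod_{j=1}^{k_i} f_{ij}$, so that $F=\sum_{i=1}^m g_i$. Away from the finitely many real zeros of $g_m$, the zeros of $F$ coincide with those of $G:=F/g_m=1+\sum_{i=1}^{m-1} g_i/g_m$, and by Rolle's theorem the number of zeros of $G$ on any interval free of poles is at most one more than the number of zeros of $G'=\sum_{i=1}^{m-1}(g_i/g_m)\,L_i$, where $L_i:=g_i'/g_i-g_m'/g_m=\sum_{j}f_{ij}'/f_{ij}-\sum_{j}f_{mj}'/f_{mj}$ is a sum of at most $2k$ logarithmic derivatives of $t$-sparse polynomials.

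The bookkeeping then goes as follows. Each $f_{ij}'$ is again $t$-sparse, so $L_i$ is a ratio of a small sum of sparse products by another small sum of sparse products; clearing the common denominator of all the $L_i$ and of the factors $g_i/g_m$ turns $G'$, up to a nonvanishing factor, into a polynomial $\widetilde F$ of the same $\Sigma\Pi\Sigma\Pi$ shape, to which the induction hypothesis would apply once one tracks how the parameters $(m,k,t)$ — or, better, a sparsity parameter together with a separate degree parameter — have changed. The degrees of the $f_{ij}$, though unbounded a priori, only ever enter through Descartes-type estimates, which are degree-free; so one may, at the cost of a harmless change of variables, always reduce to the case where all degrees are polynomially bounded, which keeps the sparsity of the cleared-denominator polynomial under control level by level.

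The main obstacle — and the reason Conjecture~\ref{conj:RealTau} is still open rather than a theorem — lies in the arithmetic of this induction: passing from $F$ to $\widetilde F$ replaces a sum of $m$ products of $k$ factors by a sum of roughly $mk$ products of comparably many factors (the extra factor $L_i$ being itself a sum of $2k$ terms). A naive induction therefore multiplies the three parameters at every one of the $m$ steps and yields a bound that is \emph{exponential} in $m$, not polynomial. To close the argument one would need either genuine cancellation among the ratios $g_i/g_m$, or a global substitute for the step-by-step Rolle argument — a Wronskian / Chebyshev-system estimate for $\{g_1,\dots,g_m\}$ that reads off the polynomial dependence on $m$ in one shot rather than paying a multiplicative price $m$ times. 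Khovanskii's fewnomial theory does bound the zero count, but again only exponentially in the total number of monomials, so it does not suffice either. Controlling this blow-up is exactly the gap that the present paper sidesteps by passing to the random model, where independence of the Gaussian coefficients supplies the quantitative control that the worst-case combinatorics does not.
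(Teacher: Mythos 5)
The statement you were asked about is Conjecture~\ref{conj:RealTau} itself, which is open; the paper does not prove it, and neither do you --- indeed your text candidly says so. What you have written is an accurate diagnosis of why the natural deterministic attack fails, not a proof, so it cannot be accepted as one. The fatal step is exactly where you locate it: after dividing by $g_m$ and applying Rolle, the derivative $G'=\sum_{i<m}(g_i/g_m)L_i$ has, once denominators are cleared, a $\Sigma\Pi\Sigma\Pi$ representation whose number of summands and whose sparsity grow multiplicatively (roughly $m\mapsto mk$ summands, each factor $L_i$ itself a sum of up to $2k$ log-derivative terms), so iterating the reduction $m$ times yields a bound exponential in $m$ --- this is precisely the obstruction already visible in the Wronskian approach of Koiran--Portier--Tavenas, which likewise gives only exponential bounds. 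A secondary unjustified claim in your sketch is the assertion that one may ``reduce to the case where all degrees are polynomially bounded'' by a harmless change of variables: no such reduction is known, and since the degrees $d_i$ and the exponents in the $S_{ij}$ directly govern the sparsity of the cleared-denominator polynomial, this step would need a real argument, not a remark.

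For comparison, the paper sidesteps the deterministic combinatorics entirely: its Theorem~\ref{th:main} is an average-case statement, proved by bounding the expected number of zeros through a Rice-type inequality (Theorem~\ref{th:rice}). The structured polynomial $F$ is rewritten, after normalizing each $f_{ij}(x)$ by its standard deviation $\a_{ij}(x)^{1/2}$, as $p_1(x)^{-1}R(x)$ with $R(x)=\sum_i u_i q_i(x)x^{d_i}$, where the $u_i$ are independent products of Gaussians; the conditional expectations $\E\big(|F'(x)|\mid F(x)=a\big)\rho_{F(x)}(a)$ are then controlled via the convenient-density estimates of Propositions~\ref{pro:basicEbound} and~\ref{pro:BD2} and the logarithmic-variation bound of Proposition~\ref{pro:new}. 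The randomness (independence plus the slow blow-up of the product-Gaussian density $\varpi_k$ near $0$) is what replaces the cancellation you correctly observe is missing in the worst case. So: your assessment of the difficulty is sound and essentially matches the paper's own framing, but as a proof of Conjecture~\ref{conj:RealTau} it has an acknowledged, unbridged gap, and it is not the route the paper takes for the result it actually establishes.
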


Koiran~\cite{koiran:11} proved that the real $\tau$-conjecture implies a 
major conjecture in complexity theory, namely the separation of complexity classes  
$\mathrm{VP}^0 \ne  \mathrm{VNP}^0$ over $\C$.
In  Tavenas' PhD thesis~\cite{tavenas-thesis} 
it is shown that the real $\tau$-conjecture also implies 
that $\mathrm{VP}\ne  \mathrm{VNP}$ over $\C$. 
Tavenas also shows that a seemingly much weaker upper bound on the number of real zeros of $F$ 
is sufficient to deduce $\mathrm{VP}\ne  \mathrm{VNP}$: in fact, 
an upper bound polynomial in $m,t, 2^{\max_i k_i}$ is sufficient~\cite[\S 2.1, Cor.~3.23]{tavenas-thesis}.   
In other words, the real $\tau$-conjecture implies that the permanent of $n$ by $n$ matrices 
requires arithmetic circuits of superpolynomial size. 
For known upper bounds on the number of real zeros of polynomials of
the form~$F$, we refer to~\cite{koir-port-tav:15} and the references
given there. 

The motivation behind Conjecture~\ref{conj:RealTau} is Shub and Smale's $\tau$-conjecture~\cite{shub-smale:95} 
asserting that the number of integer zeros of a polynomial computed by an arithmetic circuit 
is polynomially bounded by the size of the circuit. 
If true, it gives a superpolynomial lower bound on the circuit complexity of the permanent polynomial~\cite{buerg:09}.
Moreover, it also entails the separation $\mathrm{P}_\C\ne\mathrm{NP}_\C$ 
in the Blum-Shub-Smale model~\cite{shub-smale:95,BCSS:98}. 
One drawback of the $\tau$-conjecture is that, by referring to integer zeros, it leads to number theory, 
which is notorious for its hard problems.  
The $\tau$-conjecture is false when we replace ``integer zeros' by ``real zeros''. 
Koiran's observation is that when restricting to depth four circuits, the conjecture may be true 
and we can still derive lower bounds for general circuits. 
We refer to Hrubes~\cite{hrubes:13} for statements  equivalent to the real $\tau$-conjecture 
that are related to complex zero counting. 
A $\tau$-conjecture for the Newton polygons of bivariate polynomials,
having the same strong complexity theoretic implications,  
has been formulated by Koiran et al.\ in~\cite{kptt:15}. 
Hrubes~\cite{hrubes:19} recently showed that the real $\tau$-conjecture 
implies this conjecture on Newton polytopes.

In this work, we prove that the real $\tau$-conjecture is true for random polynomials. 
More specifically, let $k_1,\ldots, k_m$ and $t$ be positive integers 
and for $1\le i\le m$ and $1 \le j \le k_i$ we fix supports $S_{ij}\subseteq\N$ with 
$|S_{ij}| \le t$ for the $t$-sparse polynomials $f_{ij}$. 
We choose the coefficients $u_{ijs}$ of the polynomials 
$$
 f_{ij}(x) = \sum_{s\in S_{ij}} u_{ijs} x^s
$$
as independent standard Gaussian random variables. 
The resulting $F$ given by~\eqref{eq:D4-poly}
is a structured random polynomial and we investigate the random variable 
defined as the number of real zeros of $F$. 

Our main result states that the expectation of the 
number of real zeros of $F$ is polynomially 
bounded in $m$, $k:=\max_i k_i$, and $t$. 
In fact, we get an at most quadratic bound 
in the number of parameters!

\begin{thm}\label{th:main}
The expectation of the number of real zeros of a polynomial $F$ 
of the form \eqref{eq:D4-poly} is bounded as 
$\Oh( m k^2 t)$
if the coefficient $u_{ijs}$ are independent and standard Gaussian. 
Thus the real $\tau$-conjecture is true on average.
\end{thm}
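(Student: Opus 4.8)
The plan is to use the Kac–Rice formula (integral-geometric approach to counting real zeros of random functions), combined with a trick to control the expected density zero-by-zero using the structure of $F$. Since $F$ is a polynomial of possibly huge degree, the key idea is to integrate the Kac–Rice density not against Lebesgue measure on $\R$, but against a measure adapted to the logarithmic scale, exploiting that on the positive axis each factor $f_{ij}$ contributes only $t$ monomials and hence, after the substitution $x=e^y$, behaves like an exponential sum with $t$ terms.

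First I would reduce to counting zeros on the positive half-line $\R_{>0}$: by symmetry (replacing $x$ by $-x$ only permutes supports and flips signs of some Gaussian coefficients, which does not change the distribution), the expected number of real zeros is at most twice the expected number of positive real zeros plus $1$ (for a possible zero at the origin, which contributes nothing to the expectation). Next, via $x = e^y$ I would write $f_{ij}(e^y) = \sum_{s\in S_{ij}} u_{ijs} e^{sy}$, a Gaussian exponential sum, so that $g(y):=F(e^y) = \sum_{i=1}^{m}\prod_{j=1}^{k_i} f_{ij}(e^y)$ and the positive real zeros of $F$ correspond bijectively to the real zeros of $g$. The expected number of zeros of $g$ on $\R$ is then $\int_{\R}\rho(y)\,dy$, where $\rho$ is the Kac–Rice density $\rho(y) = \E\!\left[\,|g'(y)|\ \big|\ g(y)=0\,\right] p_{g(y)}(0)$ and $p_{g(y)}$ is the density of the (non-stationary) Gaussian-ish random variable $g(y)$.

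The main step is to bound $\int_\R \rho(y)\,dy$ by $\Oh(mk^2t)$. Here I expect the heart of the argument to be a \emph{local} estimate: on any interval where the ``dominant'' monomial exponent of $g$ does not change, $\rho$ integrates to a quantity linear in the number of monomials and in $k$; and there are $\Oh(mkt)$ such breakpoints because each product $\prod_j f_{ij}$ expands into an exponential sum whose set of exponents is a sumset of the $S_{ij}$, and what matters is only the \emph{number of distinct consecutive dominant exponents}, which is governed by the cardinalities $|S_{ij}|\le t$ and the number of factors $k_i$, not by the degrees. Concretely, I would split $\R = \bigcup_\ell I_\ell$ into intervals on each of which a single term $\prod_j f_{ij}$ and within it a single monomial is expected to dominate, estimate $\rho$ on each $I_\ell$ by a one-dimensional Gaussian computation (the conditional expectation of $|g'|$ given $g=0$ is controlled by $\sqrt{\Var g'}$, and on $I_\ell$ the variances of $g$ and $g'$ are comparable up to the ``spread'' of exponents, which is $\Oh(k t)$-bounded after normalization), and sum: $\sum_\ell \int_{I_\ell}\rho \le \sum_\ell \Oh(\sqrt{k t}) \cdot (\text{length correction})$. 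A more robust route, which I would actually pursue, is to avoid explicit intervals and instead bound $\rho(y)$ pointwise by $\frac{1}{\pi}\sqrt{(\log V(y))''}$-type quantities where $V(y) = \E\, g(y)^2$ is the variance function — this is the standard bound $\rho(y)\le \frac1\pi\sqrt{\partial_y^2 \log V(y)}$ valid for centered Gaussian processes (and $g$ is indeed centered Gaussian for each fixed $y$, being a polynomial in jointly Gaussian $u_{ijs}$ — wait, it is not Gaussian, it is a sum of products; but $F(e^y)$ at a fixed $y$ is a fixed polynomial in the Gaussians, not Gaussian itself).

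The genuine obstacle, then, is that $g(y)$ is \emph{not} a Gaussian process: it is a sum of products of independent Gaussian exponential sums, so the clean Kac–Rice–Gaussian machinery does not apply directly. I expect the paper's key technical contribution is exactly handling this: either (i) conditioning on all but one factor in each product so that $g$ becomes conditionally Gaussian — i.e., fix $i$, and for the product $\prod_{j=1}^{k_i}f_{ij}$ single out one index $j=j(i)$ and condition on $\{u_{i'j's}: (i',j')\neq (i,j(i))\}$, making $g$ an affine-Gaussian function of the remaining variables, apply the Gaussian Kac–Rice bound conditionally, and then take expectation over the conditioning — the resulting bound will involve $\E$ of a $\log$-variance second derivative, which by independence and the $t$-sparsity of the surviving factor is controlled; or (ii) using a union-bound / Rolle-type argument over the $m$ summands and $k$ factors to reduce to the known Descartes-type count on each factor after perturbation. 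Route (i) is the one I would develop: the factor $k^2$ in the final bound $\Oh(mk^2t)$ strongly suggests that one loses a factor $k$ from choosing which of the $k_i$ factors to ``Gaussianize'' and another factor $k$ (or $t$) from the spread of exponents in the product, while the factor $m$ comes from the outer sum and $t$ from Descartes/sparsity of each individual factor. Making the conditional Kac–Rice estimate uniform — i.e., bounding $\E_{\text{cond}}\int_\R \sqrt{\partial_y^2 \log V_{\text{cond}}(y)}\,dy$ independently of the (unbounded) supports and only in terms of $|S_{ij}|$ and $k_i$ — is where the real work lies, and I would expect it to hinge on a lemma bounding, for a random exponential sum $\sum_{s\in S} u_s e^{sy}$ with $|S|\le t$, the expected integral of $\sqrt{\partial_y^2 \log(\text{second moment})}$ by $\Oh(t)$, which is a self-contained one-dimensional statement provable by the logarithmic-scale change of variables and a monomial-domination partition of $\R$ into $\Oh(t)$ pieces.
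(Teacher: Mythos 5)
Your skeleton (Kac--Rice on a logarithmic scale, with sparsity entering only through logarithmic derivatives and dominance counts) is in the same spirit as the paper, but the step you yourself flag as ``where the real work lies'' is precisely the missing idea, and as you formulate it it is doubtful. You propose to condition on all factors but one in each product, so that the process becomes conditionally Gaussian, apply the Edelman--Kostlan type bound $\rho(y)\le\frac1\pi\sqrt{\partial_y^2\log V_{\mathrm{cond}}(y)}$, and then bound $\int_\R\sqrt{\partial_y^2\log V_{\mathrm{cond}}}\,dy$ \emph{uniformly in the conditioning}, claiming it is controlled by the $t$-sparsity of the surviving factor. But the conditional variance is $V_{\mathrm{cond}}(y)=\sum_i P_i(e^y)^2\,\a_{ij(i)}(e^y)$, where the $P_i$ are the frozen products of $k_i-1$ sparse polynomials: these are not $t$-sparse (they carry up to $t^{\,k-1}$ monomials with signed, unbounded coefficients), they can have many real zeros, and near points where several $P_i$ are simultaneously small the conditional density spikes. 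A bound on the conditional expected zero count that is uniform over all admissible $P_i$ is essentially a deterministic-multiplier version of the statement being proved; it is neither established by your dominance-breakpoint heuristic (which applies to positive-coefficient exponential sums, not to squares of signed sparse products) nor plausibly attainable without averaging over the conditioning. And once you average, you are facing exactly the distribution of products of independent Gaussians, whose density $\varpi_k$ is unbounded at the origin --- the difficulty your reduction was meant to avoid.

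The paper closes this gap by a different mechanism: it never Gaussianizes. At each fixed $x$ one writes $f_{ij}(x)=\a_{ij}(x)^{1/2}v_{ij}$ with $v_{ij}$ standard Gaussian, so that $F(x)=p_1(x)^{-1}R(x)$ with $R(x)=\sum_i u_i q_i(x)x^{d_i}$, where $u_i=v_{i1}\cdots v_{ik_i}$ has the non-Gaussian density $\varpi_{k_i}$ and the $q_i$ are the deterministic weights of \eqref{def:qi} (Proposition~\ref{th:mainbound}). The Rice step is then a Rice \emph{inequality} (Theorem~\ref{th:rice}) valid for such coefficients, and the quantitative heart is the growth estimate $\varpi_k(a)\le e\,|a|^{\frac{1}{2k}-1}$ of Lemma~\ref{le:d2bounds}, fed into Proposition~\ref{th:EBound} with $\d=(2k_1)^{-1}$, together with the logarithmic-variation bounds of Proposition~\ref{pro:new} (including the splitting of $[0,1]$ at $\e=e^{-kt/d}$) to make the resulting density integrable near $x=0$; this is where the factor $k^2$ actually arises ($\d^{-1}\sim 2k$ times sums of order $k_it$), not from a choice of which factor to Gaussianize. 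Without an analogue of this density estimate and averaging step, your proposal does not yield the claimed $\Oh(mk^2t)$ bound.
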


Our result can be interpreted in two ways: on the one hand, 
it supports the real $\tau$-conjecture since we show it is true on average; 
on the other hand it says that for finding a counterexample to the 
real $\tau$-conjecture, it is not sufficient to look at generic examples.

We don't think the assumption of Gaussian distributions is relevant. 
In fact, we have a partial result confirming this (Theorem~\ref{th:SP1}).
If we assume the coefficients $u_{ijs}$ are independent random variables whose distribution have 
densities satisfying some mild assumptions, then 
the expected number of real zeros of $F$ in $[0,1]$ is bounded 
by a polynomial in $k_1+\ldots + k_m$ and $t$, provided $0\in S_{ij}$ for all $i,j$.
The latter condition means that all the $f_{ij}$ almost surely have a nonzero constant coefficient. 

The main proof technique is the Rice formula from the theory of random fields, which has 
to be analyzed very carefully in order to achieve the good upper bounds. 
(In fact, we rely on a ``Rice inequality'', which requires less assumptions.) 
An interesting intermediate step of the proof is to express the 
expected number of real zeros of the random structured~$F$ from~\eqref{eq:D4-poly} 
in terms of the expected number of real zeros of random linear combinations 
$R(x) = \sum_{i=1}^m u_i q_i(x) x^{d_i}$ of certain weight functions $q_i(x) x^{d_i}$. 
The deterministic functions $q_i(x)$ are obtained by multiplying and dividing 
sparse sums of squares in a way reflecting the build-up of the arithmetic circuit forming~$F$;  
see~\eqref{def:qi}. 
The randomness comes from independent coefficients~$u_i$, 
whose distribution is the one of a product of $k_i$ standard Gaussians. 

It would be interesting to strengthen our result by concentration statements, showing that 
it is very unlikely that a random $F$ of the above structure can have many real zeros.

\subsection*{Outline of paper} 
Section~\ref{se:prelim} provides hands-on information on how to 
deal with conditional expectations, which is mainly basic calculus. 
In Section~\ref{se:Rice} we outline the idea of the Rice formula and state 
a weak version of it (Theorem~\ref{th:rice}), which requires only few technical 
assumptions. 
In Section~\ref{se:BCE} we prepare the ground by proving general estimates 
on conditional expectations of random linear combinations. 
Section~\ref{se:zerosLC} develops general results of independent interest 
on the  expected number of real zeros of random linear combinations 
$\sum_{i=1}^m w_i(x) u_i$ of weight functions~$w_i$, for independent 
random coefficients $u_i$ having densities satisfying some mild assumptions. 
We upper bound this in terms of quantities $\LV(w_i)$, for which we coined 
the name {\em logarithmic variations}, 
and which are crucial for achieving good estimations 
(see Definition~\ref{def:def-LV}). 
Finally, combining everything, we provide the proof of the main results in 
Section~\ref{se:GSPSP}.

\subsection*{Acknowledgments} 
We thank Pascal Koiran and Mario Kummer for helpful discussions. 
Peter B\"urgisser is grateful to the late Mario Wschebor for
introducing him into the wonders of the Rice formula. 
We thank the anonymous referees whose comments led to an improved 
presentation.

\section{Preliminaries}\label{se:prelim}

We provide some background on conditional expectations in a general continuous setting, 
relying on some results from calculus related to the coarea formula. 
Then we discuss some specific properties pertaining to the distribution of products of 
Gaussian random variables.

\subsection{Conditional expectations}\label{se:cond-exp}

We fix a smooth function $f\colon\R^N\to\R$ 
with the property that $\{u\in\R^N : \nabla f(u) = 0 \}$ has measure zero. 
In most of our applications, $f$ will be a nonconstant polynomial function, which satisfies this property. 
By Sard's theorem, almost all $a\in\R$ are regular values of $f$.  
For those $a$, the fiber $f^{-1}(a)$ is a smooth hypersurface in $\R^N$. 

Suppose we are further given a probability distribution  on $\R^N$ with the density $\rho$. 
To analyze its pushforward measure with respect to $f$, we define 
for a regular value $a\in \R$
\begin{equation}\label{eq:pushf_dens}
 \rho_f(a) := \int_{f^{-1}(a)} \frac{\rho}{\|\nabla f\|}\, df^{-1}(a) \ \in\  [0,\infty];  
\end{equation}
here $df^{-1}(a)$ denotes the volume element of the hypersurface $f^{-1}(a)$. 
The coarea formula is a crucial tool going back to Federer~\cite{federer:59}, 
see \cite[Thm.~III.5.2, p.~138]{chavel:06} for a comprehensive account. 
We only need its smooth version~\cite[p.~159]{chavel:06};  
see also~\cite[Appendix]{howard:93} for a short and self-contained proof. 
The smooth coarea formula implies that 
$\rho_f$ defined in \eqref{eq:pushf_dens} is a probability density on $\R$, namely 
the density of the random variable $f(a)$. 
More precisely, $\rho_f$ is the 
{\em pushforward measure} with respect to $f$ of the measure on $\R^N$ with density $\rho$.

Let us point out the following simple rule, which we will use all the time: for $\lambda\in\R^*$
\begin{equation}\label{eq:TR}
 \rho_{\lambda f}(\lambda a) = \frac{1}{|\lambda|} \rho_f(a) .
\end{equation}

We view now $u\in\R^N$ as a random variable with the density~$\rho$. 
Let $a\in\R$ be a regular value of~$f$ such that $\rho_f(a) >0$. 
We want to define a conditional probability measure on the hypersurface $H:=f^{-1}(a)$ 
that captures the idea that we constrain $u$ to lie in $H$. 
We do this by defining the {\em conditional density} for $u\in H$ as 
\begin{equation*}
\rho_H (u) := \frac{1}{\rho_f(a)} \frac{\rho(u)}{\|\nabla f(u)\|} .
\end{equation*}
Note that  we indeed have $\int_H \rho_H\, dH =1$ by construction, 
where $dH$ denotes the volume measure of $H$. 
(As a warning, let us point out that in general, $\rho_H$~does not only depend 
on $H$, but also on the representation of $H$ by the function~$f$.)
Using the conditional density, 
we can define the {\em conditional expectation} 
$$
  \E(Z \mid f = a) := \int_H Z \rho_H dH \ \in\ [0,\infty]
$$
of a nonnegative measurable function $Z\colon\R^N\to [0,\infty]$.
(This quantity is only defined for regular values $a$ such that $\rho_f(a) >0$.)
In our application, we will always use the following equivalent formula
\begin{equation}\label{eq:Erho}
 \E(Z\mid f = a) \, \rho_f(a) = \int_H Z\,\frac{\rho}{\|\nabla f\|}\, dH , 
\end{equation}
which is valid for all regular values $a$ of $f$,  
when interpreting the left-hand side as $0$ if $\rho_f(a)=0$. 
Thus by Sard's theorem, the equation makes sense for almost all $a\in\R$

After defining all these notions, we summarize our discussion by stating the following 
important fact, which is an immediate consequence of the smooth coarea formula
(cf.~\cite[p.~159]{chavel:06} or~\cite[Appendix]{howard:93}). 

\begin{prop}\label{pro:coarea}
Let $f\colon\R^N\to\R$ be a smooth function such that 
$\{u\in\R^N : \nabla f(u) = 0 \}$ has measure zero. 
Moreover, let $\rho$ be a probability density on $\R^N$ and 
$Z\colon\R^N\to [0,\infty]$ be measurable. Then we have 
\begin{equation*}\label{eq:coarea}
\E (Z) = \int_\R \E(Z\mid f = a) \, \rho_f(a) \, da .
\end{equation*}
\end{prop}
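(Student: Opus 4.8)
The plan is to obtain the identity directly from the smooth coarea formula by feeding it the appropriate nonnegative integrand. Recall that the coarea formula, in the smooth form cited above, asserts that for a smooth $f\colon\R^N\to\R$ and any measurable $g\colon\R^N\to[0,\infty]$ one has
\begin{equation*}
 \int_{\R^N} g(u)\,\|\nabla f(u)\|\,du \;=\; \int_\R\Big(\int_{f^{-1}(a)} g\,\, df^{-1}(a)\Big)\,da ,
\end{equation*}
the inner integral being over the (almost everywhere smooth) hypersurface $f^{-1}(a)$. First I would discard the critical set $C:=\{u\in\R^N:\nabla f(u)=0\}$; by hypothesis it has Lebesgue measure zero, so it contributes nothing to any integral over $\R^N$ of a nonnegative function, and on $\R^N\setminus C$ the function
\[
 g(u):=\frac{Z(u)\,\rho(u)}{\|\nabla f(u)\|}
\]
is well defined, measurable, and nonnegative, with $g(u)\,\|\nabla f(u)\|=Z(u)\rho(u)$.

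Applying the coarea formula to this $g$, the left-hand side becomes $\int_{\R^N} Z\rho\,du=\E(Z)$. For the right-hand side, Sard's theorem ensures that almost every $a\in\R$ is a regular value of $f$; for such $a$ the fiber $f^{-1}(a)$ is a smooth hypersurface contained in $\R^N\setminus C$, so $g$ restricts to it as a genuine function and the inner integral is exactly the right-hand side of \eqref{eq:Erho}, namely $\E(Z\mid f=a)\,\rho_f(a)$ --- with the convention that this is $0$ when $\rho_f(a)=0$, which is consistent, since $\rho_f(a)=0$ forces $\rho=0$ almost everywhere on $f^{-1}(a)$ and hence $\int_{f^{-1}(a)}Z\rho/\|\nabla f\|\,df^{-1}(a)=0$. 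Discarding the measure-zero set of critical values does not affect the outer integral, so combining the two sides gives $\E(Z)=\int_\R\E(Z\mid f=a)\,\rho_f(a)\,da$, as claimed.

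The only delicate point --- the ``main obstacle'', modest as it is here --- is the bookkeeping with null sets: one has to be sure that the measure-zero critical set in $\R^N$ and the measure-zero set of critical values in $\R$ may both be removed without changing either side, and that the conditional expectation $\E(Z\mid f=a)=\int_{f^{-1}(a)}Z\rho_H\,df^{-1}(a)$ defined via the conditional density indeed matches the normalized coarea fiber integral --- but this is precisely what identity \eqref{eq:Erho} records, so no new work is needed. Measurability in $a$ of the fiber integral is part of the coarea statement and need not be reproven, and no integrability assumption on $Z$ is required, because everything in sight is nonnegative and the asserted equality is allowed to take the value $+\infty$ in $[0,\infty]$.
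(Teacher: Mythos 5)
Your proof is correct and follows exactly the route the paper takes: the paper states Proposition~\ref{pro:coarea} as an immediate consequence of the smooth coarea formula (citing Chavel and Howard), and your argument simply spells out that application with the integrand $Z\rho/\|\nabla f\|$, together with the routine null-set bookkeeping via Sard's theorem and the identification of the fiber integral with $\E(Z\mid f=a)\,\rho_f(a)$ from \eqref{eq:Erho}. Nothing essential differs from the paper's intended argument.
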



We next discuss how to compute the right-hand side in concrete situations. 
As a first step, we express the volume element of the hypersurface $H$ in local coordinates.
If $\partial_{u_1}f \ne 0$, then by the implicit function theorem, 
we can locally express $u_1$ as a function of $u_2,\ldots,u_N$. 
The following lemma is well known. For the understanding of the following, 
it is helpful provide the proof.

\begin{lemma}\label{le:dH_loc_coord}
We have 
$$
 dH = \frac{\|\nabla f\|}{|\partial_{u_1} f|} \, du_2\cdots du_ N .
$$
\end{lemma}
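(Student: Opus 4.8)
The plan is to realize $H=f^{-1}(a)$ locally as the graph of the implicit function and compute the induced volume element via the Gram determinant of the parametrization. Concretely, since $\partial_{u_1}f\neq 0$, the implicit function theorem gives a smooth function $g$ with $u_1 = g(u_2,\ldots,u_N)$ on $H$, so that the map
$$
 \phi\colon (u_2,\ldots,u_N)\longmapsto \bigl(g(u_2,\ldots,u_N),\, u_2,\ldots,u_N\bigr)
$$
is a local parametrization of $H$. The volume element of a parametrized hypersurface is $dH = \sqrt{\det\bigl(D\phi^{T}D\phi\bigr)}\; du_2\cdots du_N$, so the task reduces to evaluating this Gram determinant.

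Next I would compute $D\phi$: it is the $N\times(N-1)$ matrix whose first row is the gradient $v^{T}:=(\partial_{u_2}g,\ldots,\partial_{u_N}g)$ and whose remaining $N-1$ rows form the identity matrix $I_{N-1}$. Hence $D\phi^{T}D\phi = I_{N-1} + v v^{T}$. By the matrix determinant lemma, $\det(I_{N-1}+vv^{T}) = 1 + \|v\|^{2}$.

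It remains to identify $1+\|v\|^2$ with $\|\nabla f\|^2/(\partial_{u_1}f)^2$. Differentiating the identity $f\bigl(g(u_2,\ldots,u_N),u_2,\ldots,u_N\bigr)=a$ with respect to $u_j$ for $j=2,\ldots,N$ gives $\partial_{u_1}f\cdot \partial_{u_j}g + \partial_{u_j}f = 0$, i.e. $\partial_{u_j}g = -\partial_{u_j}f/\partial_{u_1}f$. Therefore
$$
 1+\|v\|^{2} = 1 + \sum_{j=2}^{N}\frac{(\partial_{u_j}f)^{2}}{(\partial_{u_1}f)^{2}} = \frac{\sum_{j=1}^{N}(\partial_{u_j}f)^{2}}{(\partial_{u_1}f)^{2}} = \frac{\|\nabla f\|^{2}}{(\partial_{u_1}f)^{2}},
$$
and taking the square root yields $dH = \dfrac{\|\nabla f\|}{|\partial_{u_1}f|}\, du_2\cdots du_N$, as claimed.

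There is no real obstacle here: the only points requiring a moment of care are invoking the implicit function theorem to justify the graph parametrization (legitimate since $\partial_{u_1}f\neq 0$ on a neighborhood) and the bookkeeping in the Gram determinant computation; both are routine. If one prefers to avoid the matrix determinant lemma, the same conclusion follows from the classical formula $\sqrt{1+\|\nabla g\|^{2}}$ for the area element of a graph, which is exactly what the computation above reproduces.
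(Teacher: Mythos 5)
Your proof is correct and follows essentially the same route as the paper: both realize $H$ locally as the graph of the implicit function, compute the Gram determinant $\det(I+vv^T)=1+\|v\|^2$ (you via the matrix determinant lemma, the paper via an orthogonal change of basis — an immaterial difference), and then substitute $\partial_{u_j}g=-\partial_{u_j}f/\partial_{u_1}f$ from implicit differentiation.
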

 
\begin{proof}
Generally, if we parametrize $H$ by $u=\psi(t_1,\ldots,t_{N-1})$, using local 
coordinates $t_1,\ldots t_{N-1}$, it is well known that the volume element of $H$
is given by $dH = \sqrt{\det((D\psi)^T D\psi)}\,dt_1\ldots t_{N-1}$. 
In our situation, we locally write $u_1=h(u_2,\ldots,u_N)$ 
and use the parametrization 
$\psi(u_2,\ldots,u_N) := (h(u_2,\ldots,u_N),u_2,\ldots,u_N)$ of $H$.  
A straightforward calculation shows 
$(D\psi)^T D\psi = I + \nabla h (\nabla h)^T$.
Moreover, 
$\det( I + \nabla h (\nabla h)^T) = 1 + \|\nabla h \|^2$. 
(In order to see this, use the orthogonal matrix $S\in O(N)$ such that 
$S\nabla h = (0,\ldots,0,\|\nabla h\|)$.) 
Hence the volume element of $H$ 
satisfies 
\begin{equation*}
 dH = \sqrt{1 + \|\nabla h\|^2} \, du_2\cdots du_{N} . 
\end{equation*}
By implicit differention we get 
$\partial_{u_i}h = -\partial_{u_i}f/\partial_{u_1}f$. 
Hence,
$$
 1 + \|\nabla h\|^2 = \frac{\|\nabla f\|^2}{(\partial_{u_1}f)^2} ,
$$
and the assertion follows.
\end{proof}

Assume now that $H$ is parametrized when $(u_2,\ldots,u_N)$ runs over 
(an open dense subset of) $\R^{N-1}$.
Then, due to Lemma~\ref{le:dH_loc_coord}, we can express the pushforward density $\rho_f$ as follows:
\begin{equation}\label{eq:rholin} 
 \rho_f(a) = \int_{\R^{N-1}} \frac{\rho}{|\partial_{u_1}f|}\, du_2\cdots du_{N} .
\end{equation}
Moreover, Formula~\eqref{eq:Erho} reads as 
\begin{equation}\label{eq:Elin} 
 \E(Z\mid f = a) \, \rho_f(a) = \int_{\R^{N-1}} Z\, \frac{\rho}{|\partial_{u_1}f|}\, du_2\cdots du_{N} .
\end{equation}

\begin{example}\label{ex:cond_hyperplane}
Consider the linear function 
$f(u) = \sum_{i=1}^N w_i u_i$ for a nonzero $w\in\R^N$. Then 
$H=f^{-1}(a)$ is a hyperplane and $\nabla f= w$. We have by definition
$$
 \rho_f(a) = \frac{1}{\|w\|} \int_H \rho\, dH, \quad 
 \rho_H(u) = \Big(\int_H \rho dH\Big)^{-1}\, \rho(u) .
$$
If $w_1 = \partial_{u_1}f \ne 0$, Formula~\eqref{eq:Elin} gives
\begin{equation*}
 \E(Z\mid f = a) \, \rho_f(a) = \frac{1}{|w_1|} \int_{\R^{N-1}} Z \rho \, du_2\cdots du_{N} .
\end{equation*} 
In the special case $f(u)=u_N$, we retrieve the known notion 
of the marginal distribution 
$\rho_{u_N}(a) =  \int_{\R^{N-1}} \rho(u_1,\ldots,u_{N-1},a)\, du_1\cdots du_{N-1}$, 
and the conditional density of $Z$ satisfies 
\begin{equation}\label{eq:marg}
 \E(Z \mid u_N = a) \, \rho_{u_N}(a) 
 = \int_{\R^{N-1}} Z(u_1,\ldots,u_{N-1},a) \rho(u_1,\ldots,u_{N-1},a)\, du_1\cdots du_{N-1} .
\end{equation}
\end{example}


\begin{example}\label{ex:product_rv}
Consider the product function $f(y) = y_1\ddd y_k$, 
and for nonzero $a\in\R$ the smooth hypersurface 
\begin{equation*}
 C_a := \{y \in \R^k :  y_1\ddd y_k = a \} .
\end{equation*}
If $\rho$ is the joint density of $y\in\R^k$, 
then the pushforward density $\rho_f$ of the product $f(y)$ satisfies, 
by \eqref{eq:pushf_dens} and Lemma~\ref{le:dH_loc_coord}, that 
\begin{equation}\label{eq:rfa} 
\rho_f(a) = \int_{C_a} \frac{\rho}{\|\nabla f\|}\, dC_a = 
\int_{\R^{k-1}} \frac{\rho}{\partial_{y_1}f} \, dy_2\cdots dy_k = 
\int_{\R^{k-1}} \rho\, \frac{dy_2}{|y_2|} \cdots \frac{dy_k}{|y_k|} ,
\end{equation}
since $\partial_{y_1}f= y_2\cdots y_k$. 
We also note that 
$\|\nabla f(y)\| = |a| (\sum_{i=1}^k y_i^{-2})^{\frac12}$.   
Moreover, \eqref{eq:Erho} combined with Lemma~\ref{le:dH_loc_coord}, 
reads as 
\begin{equation} \label{eq:CErho}
 \E(Z \mid f = a) \, \rho_f(a) 
 = \int_H Z \frac{\rho}{\|\nabla f\|} \, dC_a 
 = \int_{\R^{k-1}} Z \rho \, \frac{dy_2}{|y_2|} \cdots \frac{dy_k}{|y_k|} . 
\end{equation}
\end{example}

\subsection{Products of Gaussians}\label{se:prod_gauss}

In the sequel, we denote by $\varpi_k$ the density of the product $y_1\ddd y_k$ of independent 
standard Gaussian distributed random variables $y_1,\ldots,y_k$; see~\cite{springer-thompson:70}. 
According to \eqref{eq:rfa} we have for $a\in\R^*$
\begin{equation}\label{eq:dens_prod}
 \varpi_k(a) = \int_{(y_2,\ldots,y_k)\in\R^{k-1}} \varphi(\frac{a}{y_2\ddd y_k}) \varphi(y_2)\ddd\varphi(y_k)\, 
   \frac{dy_2}{|y_2|}\cdots \frac{dy_k}{|y_k|} ,
\end{equation}
where $\varphi(y) = (2\pi)^{-\frac12} e^{-\frac{y^2}{2}}$ denotes the density of 
the standard Gaussian distribution.

More generally, if $y_i\sim\ND(0,\s_i^2)$ are independent centered Gaussians with variance $\s_i^2$, 
then we may write 
$y_i=\s_i\tilde{y}_i$ with $\tilde{y}_i \sim\ND(0,1)$. 
The density $\rho_f$ of the product 
$f(y) = y_1\ddd y_k = \s_1\ddd\s_k \tilde{y}_1\ddd \tilde{y}_k$ then can be expressed 
via \eqref{eq:TR} as 
\begin{equation}\label{eq:pT}
 \rho_f(a) = \frac{1}{\s_1\ddd\s_k}\, \varpi_k\Big(\frac{|a|}{\s_1\ddd\s_k}\Big) .
\end{equation}


It is easy to see that the density $\varpi_k$ of the product of $k$~standard Gaussians is unbounded for $k\ge 2$:  
we have $\lim_{a\to 0} \varpi_k(a) = \infty$, which causes some technical problems. 
However, the following lemma states that 
the growth of $\varpi_k$ for $a\to 0$ is slow, which  
will be needed for the proof of Theorem~\ref{th:main}:
more specifically, for guaranteeing the assumption~\eqref{eq:Cbound} 
so that Proposition~\ref{th:EBound}
can be applied to the random linear combination
$R(x) = \sum_{i=1}^m u_i q_i(x) x^{d_i}$, 
where the coefficients~$u_i$ are independent random variables 
with the distribution~$\varpi_{k_i}$. 


\begin{lemma}\label{le:d2bounds} 
\begin{enumerate}
\item $\varpi_k$ is monotonically decreasing on $(0,\infty)$ and $\varpi_k(-a)=\varpi_k(a)$. 

\item For $0 < \d \le \frac12$ and $a\in\R^*$ we have 
$\varpi_2(a) \le |a|^{\d-1}$.

\item For $a\in\R^*$ we have $\varpi_k(a) \ \le\  e\; |a|^{\frac{1}{2k}-1}$.
\end{enumerate}
\end{lemma}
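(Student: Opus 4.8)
For part (1), the symmetry $\varpi_k(-a)=\varpi_k(a)$ is immediate: $-y_1\cdots y_k=(-y_1)y_2\cdots y_k$ has the same law as $y_1\cdots y_k$ since $-y_1$ is again standard Gaussian. For monotonicity I condition on $Z:=y_2\cdots y_k$: given $Z=z\neq 0$ the product $Y=y_1\cdots y_k$ is $\ND(0,z^2)$, with density $a\mapsto|z|^{-1}\varphi(a/z)$, so for $k\ge2$ one has $\varpi_k(a)=\int_{\R}|z|^{-1}\varphi(a/z)\,\varpi_{k-1}(z)\,dz$, while $\varpi_1=\varphi$ directly; this is just \eqref{eq:dens_prod} with the variables $y_2,\ldots,y_k$ regrouped. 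Since $\varphi$ is even and decreasing on $[0,\infty)$, for $0<a<a'$ we get $\varphi(a/z)\ge\varphi(a'/z)$ for every $z\neq0$, and as $\varpi_{k-1}\ge0$ this yields $\varpi_k(a)\ge\varpi_k(a')$.

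Parts (2) and (3) both rest on the elementary bound $e^{-x}\le(c/e)^{c}x^{-c}$ for all $x,c>0$ (obtained by maximising $x^{c}e^{-x}$, or from $\ln x\le\ln c+(x-c)/c$). Its role is to turn the Gaussian factor $e^{-a^2/(2(\cdots)^2)}$ simultaneously into the desired power of $|a|$ and into a positive power of the remaining variables, which is precisely what makes the $dy/|y|$–integrals converge. For (2): by \eqref{eq:dens_prod} with $k=2$ and symmetry, $\varpi_2(a)=\frac1\pi\int_0^\infty e^{-\frac12(a^2/y^2+y^2)}\,dy/y$; putting $c:=\frac{1-\d}{2}\in[\frac14,\frac12)$ and applying the bound to $e^{-a^2/(2y^2)}$ replaces that factor by $(c/e)^{c}2^{c}|a|^{\d-1}y^{2c}$, leaving $\frac{(c/e)^{c}2^{c}}{\pi}|a|^{\d-1}\int_0^\infty y^{2c-1}e^{-y^2/2}\,dy=\frac{(c/e)^{c}2^{2c-1}\Gamma(c)}{\pi}|a|^{\d-1}$. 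It then remains to check the constant is $\le1$: for $c\in[\frac14,\frac12)$ we have $(c/e)^{c}\le(4e)^{-1/4}$, $2^{2c-1}\le1$, $\Gamma(c)\le\Gamma(\frac14)$, and $(4e)^{-1/4}\Gamma(\frac14)/\pi<1$.

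For (3) the same computation is carried out in all $k-1$ variables. With $\Pi:=y_2\cdots y_k$, \eqref{eq:dens_prod} gives $\varpi_k(a)=(2\pi)^{-k/2}\int_{\R^{k-1}}\exp(-\frac12(a^2/\Pi^2+\sum_{i\ge2}y_i^2))\prod_{i\ge2}dy_i/|y_i|$; applying the bound to $e^{-a^2/(2\Pi^2)}$ with $c:=\frac12-\frac1{4k}$ and using $\Pi^{2c}=\prod_{i\ge2}|y_i|^{2c}$, each factor contributes $\int_{\R}|y|^{2c-1}e^{-y^2/2}\,dy=2^{c}\Gamma(c)$, and the power of $|a|$ is $|a|^{-2c}=|a|^{\frac1{2k}-1}$. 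This yields $\varpi_k(a)\le(c/e)^{c}\,2^{ck}\Gamma(c)^{k-1}(2\pi)^{-k/2}|a|^{\frac1{2k}-1}$, and since $ck=\frac k2-\frac14$ the constant equals $(c/e)^{c}2^{-1/4}\pi^{-1/2}(\Gamma(c)/\Gamma(\frac12))^{k-1}$. To bound this by $e$ uniformly in $k$ I use that $c=\frac12-\frac1{4k}$ lies within $\frac1{4k}$ of $\frac12$ and that $t\mapsto\ln\Gamma(t)$ is Lipschitz on $[\frac38,\frac12]$ with constant $<3$, so $\Gamma(c)/\Gamma(\frac12)\le e^{3/(4k)}$ and hence $(\Gamma(c)/\Gamma(\frac12))^{k-1}\le e^{3/4}$; together with $(c/e)^{c}<1$ and $2^{-1/4}\pi^{-1/2}<\frac12$ this gives a constant $<\frac12e^{3/4}<e$. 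The case $k=1$ is covered by the same formula with empty product (or directly, $\sup_a\varphi(a)|a|^{1/2}<e$).

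The one genuinely delicate step is the last one: extracting the sharp exponent $\frac1{2k}-1$ forces the non-obvious choice $c=\frac12-\frac1{4k}$, after which one must see that the combinatorial constant $2^{ck}\Gamma(c)^{k-1}(2\pi)^{-k/2}$ does not blow up as $k\to\infty$. This ``slow growth'' is precisely the near-cancellation of $\Gamma(c)^{k-1}$ against $\pi^{k/2}$ as $c\to\frac12$, which is why a uniform estimate of $\Gamma$ near $\frac12$ — rather than a crude bound like $\Gamma(c)\le\Gamma(\frac38)$ — is needed; everything else is routine calculus.
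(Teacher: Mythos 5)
Your proof is correct, and for parts (2) and (3) it takes a genuinely different route from the paper's.

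For (1) you condition on $y_2\cdots y_k$ and use that $\varphi$ is even and decreasing, rather than differentiating under the integral sign as the paper does; this is a mild reorganization, not a real difference. The interesting divergence is in (2)--(3). For (2) the paper splits $\int_0^\infty$ into $\int_0^1$ plus $\int_1^\infty$ and bounds the tail crudely, whereas you apply $e^{-x}\le(c/e)^c x^{-c}$ on the whole range and evaluate the resulting integral exactly as $2^{c-1}\Gamma(c)$; both give $\varpi_2(a)\le|a|^{\d-1}$. For (3) the paper reduces recursively to (2), getting $\varpi_k(a)\le|a|^{\d-1}(\E|y|^{-\d})^{k-2}$, then uses the closed form $\E|y|^{-\d}=\pi^{-1/2}2^{-\d/2}\Gamma(\frac{1-\d}{2})\le 1+2\d$ and the choice $\d=1/(2k)$ so that $(1+1/k)^k<e$. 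You instead apply the exponential bound to all $k-1$ variables simultaneously, which produces a constant $(c/e)^c\,2^{-1/4}\pi^{-1/2}\bigl(\Gamma(c)/\Gamma(\tfrac12)\bigr)^{k-1}$ with $c=\frac12-\frac1{4k}$, and you must then argue that $\Gamma(c)^{k-1}$ is nearly cancelled by $\pi^{(k-1)/2}=\Gamma(\tfrac12)^{k-1}$; your Lipschitz estimate $|\psi|<3$ on $[\tfrac38,\tfrac12]$ is correct (one has $\psi(3/8)\approx-2.78$, $\psi(1/2)\approx-1.96$, and $\psi$ is increasing) and delivers the uniform bound $e^{3/4}$. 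The paper's recursive route is slicker because the $(1+1/k)^k<e$ identity does the uniform-in-$k$ work automatically; your direct route makes the near-cancellation of $\Gamma(c)^{k-1}$ against $\pi^{k/2}$ visible, at the cost of a separate digamma estimate. Both are sound. (Incidentally, the paper's auxiliary claim $\frac1{\sqrt\pi}\Gamma(\frac{1-\d}{2})\le 1+2\d$ fails at $\d=\tfrac12$, where the left side is $\approx2.046$, but this doesn't matter since the paper only invokes it with $\d=1/(2k)\le\tfrac14$; your proof sidesteps that statement entirely.)
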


\begin{proof}
(1) 
Taking the derivative in~\eqref{eq:dens_prod} we obtain, 
using symmetry, that 
$$
\varpi'_k(a) = 2^{k-1}\int_{(y_2,\ldots,y_k)\in\R_+^{k-1}} \varphi'\big(\frac{a}{y_2\ddd y_k}\big)\varphi(y_2)\ddd\varphi(y_k)\, 
   \frac{dy_2}{y_2^2}\cdots \frac{dy_k}{y_k^2}  .
$$
Since $\varphi'(y) \le 0$ for $y\ge 0$, we see that $\varpi'_k(a) \le 0$ for $a >0$. 
It is clear that $\varpi_k(-a)=\varpi_k(a)$. 

(2) By \eqref{eq:dens_prod} we have 
$$
 \varpi_2(a) = \int_\R \varphi\big(\frac{a}{y}\big) \varphi(y) \frac{dy}{|y|} 
 = \frac{2}{2\pi} \int_0^\infty \frac{1}{y} e^{-\frac{a^2}{2y^2}}\, e^{-\frac{y^2}{2}}\, dy ,
$$
which we bound as follows:
\begin{equation}\label{eq:om2}
 \varpi_2(a)  \ \le\ \frac{1}{\pi} \int_0^1 \frac{1}{y} e^{-\frac{a^2}{2y^2}}\, dy 
 + \frac{1}{\pi} \int_1^\infty e^{-\frac{y^2}{2}}\, dy 
 \ \le\  \frac{1}{\pi} \int_0^1 \frac{1}{y} e^{-\frac{a^2}{2y^2}}\, dy + \frac{1}{\sqrt{2\pi}} .  
\end{equation}
Let $0 < \d\le\frac12$. 
Since 
$2 x^{\frac{1-\d}{2}}e^{-x} \le 1$ for $x\ge 0$, we obtain 
$e^{-\frac{a^2}{2y^2}}\le  2^{-\frac12}\, |a|^{\d-1}\, y^{1-\d}$ for all $y>0$. 
Integrating, we obtain
$$
 \int_0^1 \frac{1}{y} e^{-\frac{a^2}{2y^2}} \, dy \ \le\ 
    2^{-\frac12} |a|^{\d-1}  \int_0^1 y^{-\d}\, dy  \ = \ 
    2^{-\frac12} \frac{|a|^{\d-1} }{1-\d}\,  \ \le\ 2^{\frac12}  |a|^{\d-1} .
$$
Altogether, we get from \eqref{eq:om2} for $|a| \le 1$, 
$$
 \varpi_2(a) \ \le\  \frac{\sqrt{2}}{\pi} \, |a|^{\d-1} + \frac{1}{\sqrt{2\pi}} 
 \ \le\ \Big( \frac{\sqrt{2}}{\pi}  + \frac{1}{\sqrt{2\pi}} \Big) \, |a|^{\d-1}  \ < \  |a|^{\d-1} .
$$
One can check that 
$a^{1-\d}\varpi_2(a) \le a\varpi_2(a) < 1$ for $a\ge 1$. The assertion follows.

(3) The case $k=1$ follows from (1). 
Suppose now $k\ge 2$. 
We have by \eqref{eq:dens_prod} 
\begin{eqnarray*}
\varpi_k(a) &=&\int_{(y_2,\ldots,y_k)\in\R^{k-1}} \varphi\big(\frac{a}{y_2\ddd y_k}\big)\varphi(y_2)\ddd\varphi(y_k)\, 
   \frac{dy_2}{|y_2|}\cdots \frac{dy_k}{|y_k|}  \\ 
 &=& \int_{(y_3,\ldots,y_k)\in\R^{k-2}}  \left[ \int_{y_2\in\R} \varphi\big(\frac{a}{y_2\ddd y_k}\big)\varphi(y_2) \frac{dy_2}{|y_2|}\right]
   \varphi(y_3) \ddd\varphi(y_k)\, \frac{dy_3}{|y_3|}\cdots \frac{dy_k}{|y_k|} \\ 
 &=& \int_{(y_3,\ldots,y_k)\in\R^{k-2}}  
    \varpi_2\big(\frac{a}{y_3\ddd y_k}\big) \varphi(y_3) \ddd\varphi(y_k)\,\frac{dy_3}{|y_3|}\cdots \frac{dy_k}{|y_k|} .  
\end{eqnarray*}
By item~(2) we can bound this by 
$$
\varpi_k(a) \ \le\  |a|^{\d-1} \Big( \int_{y\in\R} |y|^{-\d} \varphi(y)\, dy\Big)^{k-2}
= |a|^{\d-1} \Big(\E |y|^{-\d}\Big)^{k-2} .
$$
Is well known that 
$$
 \E |y|^{-\d} = \frac{1}{\sqrt{\pi}}\, 2^{-\frac{\d}{2}}\, \Gamma(\frac{1-\d}{2}) \ \le\ 
  \frac{1}{\sqrt{\pi}} \, \Gamma(\frac{1-\d}{2}) . 
$$ 
The Taylor expansion
$\frac{1}{\sqrt{\pi}} \, \Gamma(\frac{1-\d}{2}) = 1 + 0.9819... \cdot\d + O(\d^2)$ 
gives the growth for small $\d$: 
it is straightforward to verify 
that $\frac{1}{\sqrt{\pi}} \, \Gamma(\frac{1-\d}{2}) \le\ 1+ 2\d$  
for $0 <\d \le \frac12$.
Setting $\d= 1/(2k)$, we obtain 
$\Big(\E |y|^{-\d}\Big)^{k-2} \le (1+2\d)^k = (1+\frac{1}{k})^k < e$
and assertion follows.
\end{proof}

\section{The Rice Formula}\label{se:Rice}

\subsection{Outline}

The Rice formula is a major tool in the theory of random fields. 
It gives a concise integral expression for the expected number of zeros 
of random functions. For comprehensive treatments we refer 
to~\cite{adler-taylor:07,azais-wschebor:09}.
 
We are going to apply this formula in the following special situation. 
Let $\R[X]_{\le D}$ denote the finite dimensional space of polynomials of degree at most~$D$ 
in the single variable~$X$. 
We study a family of structured polynomials given by a parametrization 
$\R^N \to \R[X]_{\le D}, u \mapsto F_u(X)$, 
where $F_u(X)$ is a polynomial function in the parameter~$u$ and the variable~$X$. 
In our case of interest, it is the parametrization of polynomials by arithmetic circuits of depth four 
in terms of their parameters.

Here is a rough outline of the method. 
We fix a probability density on the space $\R^N$ of parameters. Its pushforward measure on 
$\R[X]_{\le D}$ defines a class of random polynomial functions $F\colon\R\to\R$. 
(It is common to notationally drop the dependence on the parameter $u$.) 
The number $\#\{ x\in [0,1] : F(x) = 0\}$ of real zeros of $F$ then becomes a random variable,  
whose expectation we wish to analyze. For this, let us assume that for almost all $x\in\R$, 
the real random variable~$F(x)$ has a density, denoted by $\rho_{F(x)}$. 
Moreover, we assume that the conditional expectation 
$\E\left( |F'(x)| \mid F(x) = 0 \right)$ is well defined for almost all $x\in\R$. 
The {\em Rice Formula} states that, under some technical 
assumptions, 
\begin{equation*}
 \E \left( \#\{ x\in [0,1]: F(x) = 0\} \right) =
 \int_0^1 \E\left( |F'(x)| \mid F(x) = 0 \right) \rho_{F(x)}(0)\, dx .
\end{equation*}
While the idea behind this formula can be easily explained 
(e.g, see \cite[\S3.1]{azais-wschebor:09}), the rigorous justification 
can be quite hard, especially in case of nongaussian distributions 
that we encounter in our work; compare \cite[Thm.~3.4]{azais-wschebor:09}). 
For this reason, we will rely on a weaker version of the Rice formula,
tailored to our situation, that only claims the inequality $\le$ above, 
but has the advantage of requiring less assumptions. 
This is the topic of the next subsection. 
Let us emphasize that we do not attempt to state this weaker version of the Rice formula 
in the greatest generality possible. 


\subsection{A Rice inequality}\label{se:technicalD}

Let $\R^N\times I \to \R,\, (u,x)\mapsto F_u(x)$ be a polynomial function, where 
$I$ is a compact interval. 
We think of $F$ as a parametrization of structured polynomial functions in the variable~$x$ 
in terms of the parameters $u_1,\ldots,u_N$. 
We assume that for all $x\in I$, the polynomial function
$$
 F(x)\colon\R^N\to\R,\, u\mapsto F_u(x)
$$ 
is not constant and thus 
$\{u\in\R^N : \nabla F(x)(u) = 0\}$ has measure zero.

\begin{example} 
\begin{enumerate}
\item Fix integers $0=d_1< d_2 < \ldots < d_t$. Then 
$F_u(x) := u_1+ u_2 x^{d_2} + \ldots + u_t x^{d_t}$ 
parametrizes sparse polynomials with support $\{d_1,\ldots,d_t\}$. 
Note that for all $x\in\R$, $F(x)$~is a nonconstant linear function 
(of the argument~$u$). In particular, $F(x)$ does not have singular values. 

\item Fix integers $0=d_1< d_2 < \ldots < d_t$ and 
$0=e_1< e_2 < \ldots < e_t$. The family 
$F_{u,v}(x) := (u_1+ u_2 x^{d_2} + \ldots + u_t x^{d_t})(v_1+ v_2 x^{e_2} + \ldots + v_t x^{e_t})$
parameterizes products of two sparse polynomials with supports given by 
$\{d_1,\ldots,d_t\}$ and $\{e_1,\ldots,e_t\}$. The set of singular points of $F(x)$ 
consists of the pairs $(u,v)$ such that 
$u_1+ u_2 x^{d_2} + \ldots + u_t x^{d_t}=0, v_1+ v_2 x^{e_2} + \ldots + v_t x^{e_t} =0$.
Thus, for all $x\in\R$, $F(x)$ is surjective and $0$ is its only singular value. 
(We generalize this example in Lemma~\ref{le:SingF}.)
\end{enumerate}
\end{example}

Following Section~\ref{se:cond-exp}, 
if a probability distribution with a density~$\rho$ is given on the space~$\R^N$ of parameters, 
for all $x\in I$,
$F(x)$ becomes a random variable with a well-defined density~$\rho_{F(x)}$.

The following ``Rice inequality'' is the version of Rice's formula that we apply in this paper. 
It is essentially Aza\"{i}s and Wschebor~\cite[Exercise~3.9, p.~69]{azais-wschebor:09}.
We state it in a way that makes the method convenient to apply in our setting. 
We provide the proof for lack of a suitable reference. 

\begin{thm}\label{th:rice}
Let $\R^N\times [x_0,x_1] \to \R,\, (u,x)\mapsto F_u(x)$ be a smooth function 
such that, for all $x\in [x_0,x_1]$, 
$\{u\in\R^N : \nabla F(x)(u) = 0\}$ has measure zero.
Moreover, we assume that, for almost all $u\in\R^N$, 
the function $[x_0,x_1] \to\R$ has only finitely many zeros. 
Further, let a probability density~$\rho$ be given on~$\R^N$.
We assume there exists an integrable function $g\colon [x_0,x_1] \to [0,\infty]$ 
and $\e>0$ such that for all $x \in [x_0,x_1]$ 
and almost all $a\in (-\e,\e)$ we have 
$$
 \E(|F'(x)| \mid F(x) = a)\, \rho_{F(x)}(a) \ \le\ g(x) .
$$
Then, for a random $u$ with the density $\rho$, we can bound the expected number of zeros 
of the random function $x\mapsto F_u(x)$ in the interval $[x_0,x_1]$ as follows: 
$$
 \E \left( \#\{ x\in [x_0,x_1] : F(x) = 0\} \right)  \ \le\  \int_{x_0}^{x_1} g(x)\, dx. 
$$
\end{thm}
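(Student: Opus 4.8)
The plan is to establish the Rice inequality via the classical Kac--Rice counting idea, using the area/coarea machinery set up in Section~\ref{se:prelim} rather than the more delicate full Rice formula. First I would reduce to a counting integral: for almost every $u$ the function $x\mapsto F_u(x)$ on $[x_0,x_1]$ is smooth with finitely many zeros, and I want a formula that expresses the number of these zeros as an integral over $x$ of a ``surface delta'' at level $0$. Concretely, for a smooth real function $\phi$ on $[x_0,x_1]$ with only simple zeros, one has $\#\{x : \phi(x)=0\} = \int_{x_0}^{x_1}\d_0(\phi(x))\,|\phi'(x)|\,dx$ in the distributional sense, i.e.\ $\#\{x: \phi(x)=0\} = \lim_{\e\to 0}\frac{1}{2\e}\int_{x_0}^{x_1}\ind{|\phi(x)|<\e}\,|\phi'(x)|\,dx$. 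To avoid worrying about multiple zeros and endpoint issues, I would instead use the one-sided bound: for any $\e>0$,
$$
 \#\{x\in[x_0,x_1] : F_u(x)=0\} \ \le\ \liminf_{\e\to 0}\frac{1}{2\e}\int_{x_0}^{x_1} \ind{|F_u(x)|<\e}\,|F_u'(x)|\,dx ,
$$
which holds pointwise in $u$ (away from a null set) by a short elementary argument: around each simple zero $x^*$ the integrand contributes $\to 1$, and near a zero of higher multiplicity or an accumulation the left side is still dominated since $|F'|$ does not vanish faster than the measure of $\{|F|<\e\}$ shrinks; here the hypothesis that $F_u$ has finitely many zeros for a.e.\ $u$ is what I would lean on, together with the fact that $F_u$ is a nonzero polynomial in $x$ so its zeros are isolated and of finite order.

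Second, I would take expectations over $u$ and push the $\liminf$ outside using Fatou's lemma:
$$
 \E\,\#\{x\in[x_0,x_1] : F(x)=0\} \ \le\ \liminf_{\e\to 0}\frac{1}{2\e}\int_{x_0}^{x_1} \E\big(\ind{|F(x)|<\e}\,|F'(x)|\big)\,dx ,
$$
where I have also used Tonelli to swap $\E$ (an integral over $\R^N$ with density $\rho$) with $\int_{x_0}^{x_1}dx$, everything being nonnegative. Third, for each fixed $x$ I would rewrite the inner expectation via the conditioning formula \eqref{eq:Erho}: applying Proposition~\ref{pro:coarea} with the function $f = F(x)\colon\R^N\to\R$ (nonconstant, hence with critical set of measure zero by assumption) and $Z = \ind{|F(x)|<\e}\,|F'(x)|$,
$$
 \E\big(\ind{|F(x)|<\e}\,|F'(x)|\big) = \int_{-\e}^{\e} \E\big(|F'(x)| \mid F(x)=a\big)\,\rho_{F(x)}(a)\,da ,
$$
since on the fiber $F(x)^{-1}(a)$ the indicator is identically $1$ precisely when $|a|<\e$. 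By the standing hypothesis this integrand is $\le g(x)$ for almost all $a\in(-\e,\e)$ once $\e$ is small enough, so the inner expectation is $\le 2\e\,g(x)$. Dividing by $2\e$, taking $\liminf_{\e\to 0}$, and invoking integrability of $g$ (with dominated convergence, or simply monotonicity of the bound) yields $\E\,\#\{x : F(x)=0\}\le \int_{x_0}^{x_1} g(x)\,dx$, as claimed.

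The main obstacle is the first step: making rigorous the pointwise inequality $\#\{x: F_u(x)=0\}\le \liminf_{\e\to0}\frac{1}{2\e}\int \ind{|F_u(x)|<\e}|F_u'(x)|\,dx$ uniformly enough to survive taking expectations. The clean way is to fix $u$ outside the null set where $F_u$ has infinitely many zeros or $\nabla F(x)(u)=0$ for some relevant $x$, note that $F_u$ is then a genuine nonzero univariate polynomial, list its finitely many real zeros $x_1^*<\cdots<x_r^*$ in $[x_0,x_1]$ with multiplicities, and on small disjoint neighborhoods of each $x_j^*$ perform the change of variables $y=F_u(x)$ (or handle higher-order zeros by a direct estimate showing the local contribution is at least the multiplicity, which only helps the inequality direction) while bounding the contribution away from all zeros by $0$ in the limit since $|F_u|$ is bounded below there; one should also be slightly careful at the endpoints $x_0,x_1$, but a zero exactly at an endpoint can contribute at most a bounded amount and in any case only strengthens the right-hand side. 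A secondary technical point is justifying the Fatou/Tonelli interchange and checking that the exceptional null set of $u$'s (where $F_u$ misbehaves) genuinely has $\rho$-measure zero, which follows from the assumption that $\{u : \nabla F(x)(u)=0\}$ is null for each $x$ together with the a.e.\ finiteness of the zero set; this is the point at which the hypotheses of the theorem are all consumed.
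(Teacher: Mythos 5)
Your proposal is correct and follows essentially the same route as the paper's proof: establish the Kac-style one-sided counting bound for a.e.\ fixed $u$, take expectations via Fatou and Tonelli to move the $\liminf$ and swap the order of integration, apply the coarea identity (Proposition~\ref{pro:coarea}) to rewrite the inner expectation as $\int_{-\delta}^{\delta}\E(|F'(x)|\mid F(x)=a)\,\rho_{F(x)}(a)\,da$, and then invoke the hypothesis to bound this by $2\delta\,g(x)$ for $\delta<\e$. The paper packages the first step as a separate citation of Kac's counting formula (Lemma~\ref{pro:kac}) rather than rederiving it inline, and one minor slip in your sketch is the claim that a zero of multiplicity $m$ contributes ``at least the multiplicity'' to the normalized local Kac integral --- the local contribution actually tends to $1$, not $m$ --- but since the one-sided bound only needs $\ge 1$ this does not affect the argument.
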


The starting point for the proof of Theorem~\ref{th:rice} 
is Kac's counting formula~\cite[Lemma~1 and Remark~1]{kac:43}. 
A turning point of function is a point where its derivative changes sign.

\begin{lemma}
\label{pro:kac}
A $C^1$ function $f\colon [x_0,x_1] \to \R$ with only finitely many turning points 
satisfies
$$
 N(f) := \#\{ x\in (x_0,x_1) : f(x) = 0\} \ \le\   
\lim_{\d\to 0} \frac{1}{2\d} \int_{x_0}^{x_1} \ind{\{|f(x)| < \d\}} \;|f'(x)| \, dx .
$$  
In fact, for sufficiently small $\d>0$, the right-hand side equals $N(f)+\eta$, 
where $\eta=0,\frac12,1$ according to as none, one or both of the numbers $x_0,x_1$
are zeros of $f$. 
\end{lemma}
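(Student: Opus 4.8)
The plan is to prove Kac's counting inequality by reducing it to a local statement near each zero and controlling the contributions from the rest of the interval. First I would exploit the hypothesis that $f$ has only finitely many turning points: this partitions $[x_0,x_1]$ into finitely many closed subintervals on each of which $f$ is monotone (strictly monotone on the interior, since between two consecutive turning points the derivative keeps a constant sign and vanishes only at isolated points). On a subinterval where $f$ is strictly monotone, the substitution $y=f(x)$ turns $\int \ind{\{|f(x)|<\d\}}\,|f'(x)|\,dx$ exactly into $\int \ind{\{|y|<\d\}}\,dy$, i.e.\ the length of $f(\text{subinterval})\cap(-\d,\d)$; for small $\d$ this is either $2\d$ (if $0$ lies in the open image), $\d$ (if $0$ is an endpoint value attained at $x_0$ or $x_1$ only), or $0$ (if the subinterval's image stays away from $0$). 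Summing over the monotone pieces, and noting that each zero of $f$ lies in the closure of at most two monotone pieces and contributes a total of $2\d$ to the sum (with the boundary corrections $\eta$ exactly as in the statement when $x_0$ or $x_1$ is itself a zero, since such an endpoint lies in only one piece), I get that for all sufficiently small $\d>0$,
\begin{equation*}
 \frac{1}{2\d}\int_{x_0}^{x_1}\ind{\{|f(x)|<\d\}}\,|f'(x)|\,dx = N(f)+\eta.
\end{equation*}

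The one point needing care is that a turning point may itself be a zero of $f$, or two zeros may be joined by a single monotone arc; I would handle this by first observing that since $f$ has finitely many turning points it has finitely many zeros, choosing $\d$ smaller than $\tfrac12\min|f(c)|$ over all turning points $c$ with $f(c)\neq 0$ and smaller than the analogous quantity for endpoints, so that on each monotone piece the set $\{|f|<\d\}$ is an interval abutting exactly the zero(s) of $f$ in that piece. A turning point $c$ with $f(c)=0$ is a point where $f$ touches $0$ from one side; the two adjacent monotone pieces then each contribute $\d$, again totalling $2\d$, so the formula is unaffected — this is the bookkeeping I expect to be the only mildly delicate part. Taking $\d\to 0$ gives the limit statement, and since the right-hand side is constant and equal to $N(f)+\eta\ge N(f)$ for all small $\d$, the claimed inequality $N(f)\le \lim_{\d\to 0}\frac{1}{2\d}\int_{x_0}^{x_1}\ind{\{|f(x)|<\d\}}\,|f'(x)|\,dx$ follows immediately.

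The main obstacle is thus not analytic but combinatorial: cleanly accounting for zeros that coincide with turning points or with the endpoints $x_0,x_1$, so that the boundary term $\eta$ comes out exactly right. Everything else is the elementary change-of-variables $y=f(x)$ on monotone arcs, which requires no more than $f\in C^1$ and strict monotonicity on the arc interiors.
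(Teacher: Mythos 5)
The paper does not actually prove this lemma — it only cites Kac~\cite[Lemma~1 and Remark~1]{kac:43} and moves directly to the proof of Theorem~\ref{th:rice}. Your proposal therefore supplies a proof where the paper supplies a reference, and the proof you sketch is the standard elementary one: split $[x_0,x_1]$ at the finitely many turning points into monotone arcs, change variables $y=f(x)$ on each arc to convert $\int\ind{\{|f|<\delta\}}|f'|\,dx$ into the Lebesgue measure of $f(\text{arc})\cap(-\delta,\delta)$, then for $\delta$ below the minimum of $|f|$ over all turning points and endpoints that are not zeros, read off $2\delta$ per interior zero (whether or not it is itself a turning point, since a touching zero splits its $2\delta$ between two abutting arcs) and $\delta$ per endpoint zero. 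The accounting and the resulting $N(f)+\eta$ are correct.

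One small caveat, which is really a caveat about the lemma's hypotheses rather than your argument: you assert that finitely many turning points forces finitely many zeros and strict monotonicity on arc interiors. For a general $C^1$ function this is false — $f$ can be constant (even constantly $0$) on a subinterval without its derivative ever changing sign, in which case $N(f)=\infty$ while the integral stays finite. The lemma is thus tacitly restricted to functions that are not locally constant, which is automatic for the analytic functions and polynomials to which the paper applies it; it would be cleaner to state this assumption explicitly rather than derive strict monotonicity from the turning-point count alone. Beyond that, the proof is sound and self-contained.
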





\begin{proof}[Proof of Theorem~\ref{th:rice}]
In the setting of this theorem, 
we apply Lemma~\ref{pro:kac} to $f=F_u$ for a random $u\in\R^N$. 
Taking expectations over $u$ and using Fatou's lemma, we obtain 
(for convenience, we drop the index~$u$)
$$
 \E(N(F)) \ \le\  \liminf_{\d\to 0} \ \E\Big( \frac{1}{2\d} \int_{x_0}^{x_1} \ind{\{|F(x)| < \d\}} \;|F'(x)| \, dx \Big) .
$$
Due to Tonelli's lemma (nonnegative integrands), 
we can interchange the integral over $x$ and the expectation. 
We obtain
$$
 \E(N(F)) \ \le\  \liminf_{\d\to 0} \int_{x_0}^{x_1} J_\d(x)\, dx ,
$$ 
where we have put
$$
 J_\d(x) := \frac{1}{2\d}\;\E \Big(\ind{\{ |F(x)| < \d\} } \;|F'(x)|\Big) .
$$
Proposition~\ref{pro:coarea} gives for $x\in (x_0,x_1)$, 
$$
  J_\d(x) = \frac{1}{2\d}\; \int_{-\d}^{\d} \E \Big( |F'(x)|\;\mid\; |F(x)| = a \Big) \rho_{F(x)}(x)\, da .
$$
By assumption, the integrand is upper bounded by $g(x)$ for almost all $a \in (-\e,\e)$, hence 
we obtain $J_\d(x) \le g(x)$ for $\d < \e$. Therefore,
$$
 \E(N(F)) \ \le\  \liminf_{\d\to 0} \int_{x_0}^{x_1} J_\d(x)\, dx \ \le\ \int_{x_0}^{x_1} g(x)\, dx .
$$ 
Finally, 
$\E(\#\{ x\in (x_0,x_1) : f(x) = 0\}) = \E(N(F))$ since 
$F(x_0)=0$ and $F(x_1)=0$ happens with probability zero.
\end{proof}

\section{Conditional expectations of random linear combinations}\label{se:BCE}

Throughout, we assume that $u_1,\ldots,u_m$ are independent real random variables 
having the densities $\varphi_1,\ldots,\varphi_m$, respectively. 
We fix real weights $w_1,\ldots,w_m$,
not all being zero, and study the random variable
$$
 f := w_1 u_1 + \cdots + w_m u_m .
$$ 
We shall study bounds for the quantity
$\E \big( |u_i| \mid f =a \big) \rho_f(a)$.  
Since $\nabla f = w \ne 0$, there is no singular value of $f$.

We begin with a simple bound on the density $\rho_f$ of $f$. It is only useful if the 
densities $\varphi_i$ are bounded (which is not the case for $\varphi=\varpi_k$). 

\begin{lemma}\label{le:rhoZbound}
Suppose that $\|\varphi_i\|_\infty \le A$ for all $i$. 
Then $\|\rho_f\|_\infty \ \le\  \frac{A}{\max_i|w_i|}$. 
In particular, we have $\|\rho_f\|_\infty \le A$ if $w_i=1$ for some~$i$.
\end{lemma}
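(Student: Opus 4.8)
The plan is to reduce the bound on $\|\rho_f\|_\infty$ to a single one-dimensional marginalization, exactly in the spirit of Example~\ref{ex:cond_hyperplane}. First I would pick an index $i_0$ achieving $|w_{i_0}| = \max_i |w_i|$; since $w\neq 0$ this is nonzero, so in particular $\partial_{u_{i_0}} f = w_{i_0}\neq 0$. Applying Formula~\eqref{eq:rholin} (with the roles of the coordinates permuted so that $u_{i_0}$ plays the role of $u_1$), we get
\begin{equation*}
 \rho_f(a) = \frac{1}{|w_{i_0}|} \int_{\R^{m-1}} \rho\big(u_1,\ldots,\widehat{u_{i_0}},\ldots,u_m\big)\, \prod_{i\neq i_0} du_i ,
\end{equation*}
where on the slice $f = a$ the coordinate $u_{i_0}$ is determined by the linear relation, and $\rho = \varphi_1\otimes\cdots\otimes\varphi_m$ by independence.

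The key step is then to bound the integrand by pulling out $\varphi_{i_0}$: on the hyperplane $f=a$ the value of $u_{i_0}$ is $(a - \sum_{i\neq i_0} w_i u_i)/w_{i_0}$, so $\varphi_{i_0}$ evaluated there is at most $\|\varphi_{i_0}\|_\infty \le A$. What remains, $\prod_{i\neq i_0}\varphi_i(u_i)$, integrates to $1$ over $\R^{m-1}$ since each $\varphi_i$ is a probability density. Hence $\rho_f(a) \le A/|w_{i_0}| = A/\max_i|w_i|$ for every $a$, giving the claimed $\|\rho_f\|_\infty \le A/\max_i|w_i|$. The special case $w_i = 1$ for some $i$ follows immediately, since then $\max_i|w_i|\ge 1$.

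There is no serious obstacle here; the only point requiring a little care is the bookkeeping when $i_0\neq 1$, i.e.\ making sure Formula~\eqref{eq:rholin} is being invoked after a harmless relabeling of coordinates (the choice of which coordinate to solve for in the implicit function theorem is arbitrary among indices with nonzero partial derivative). One should also note that the hypothesis $\|\varphi_i\|_\infty\le A$ is used only for the single index $i_0$, so the statement could in fact be sharpened, but that is not needed. I would keep the write-up to essentially the two displayed lines above plus one sentence of justification.
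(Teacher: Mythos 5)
Your proof is correct and follows essentially the same route as the paper: apply Formula~\eqref{eq:rholin} after solving for the coordinate with nonzero (and here maximal) weight, bound the corresponding density factor by $A$, and observe that the remaining product of densities integrates to~$1$. The paper writes this out for $u_1$ and then remarks that the same argument works for any $w_i$, whereas you select the maximizing index $i_0$ up front; the two presentations are equivalent.
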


\begin{proof}
For $a\in\R$ we have by \eqref{eq:rholin} 
\begin{equation*}
 \rho_f(a) = \frac{1}{|w_1|}\, \int_{\R^{k-1}} \varphi_1(w_1^{-1}(a - w_2u_2 -\cdots w_m u_m)) 
     \varphi_2(u_2) \ddd \varphi_m(u_m) du_2\cdots du_m ,
\end{equation*}
which we can bound as 
$$
 \rho_f(a) \ \le\ \frac{A}{|w_1|} \cdot 
   \int_{\R^{k-1}} \varphi_2(u_2)\cdots \varphi_m(u_m) du_2\cdots du_m = \frac{A}{|w_1|} .
$$
Since the same argument works for $w_i$, this finishes the proof. 
\end{proof}

\begin{defi}\label{def:convenient}
We call a probability density $\varphi$ on $\R$ {\em convenient} 
if $\varphi$ is monotonically decreasing on $(0,\infty)$ and 
symmetric around the origin, i.e., 
$\varphi(-u)=\varphi(u)$ for all~$u\in\R$. 
Moreover, we require
$$
  \E_\varphi := \int_\R |u| \varphi(u)\, du < \infty .
$$ 
\end{defi}

Clearly, a distribution with a convenient density $\varphi$ is centered: 
$\int_\R u \varphi(u)\, du = 0$. 
The densities~$\varpi_k$ of the products of independent Gaussian random variables 
provide examples of convenient densities (see Section~\ref{se:prod_gauss}). 
Note that 
$\E_{\varpi_k} = (\E_\varphi)^k \le 1$ with $\varphi$ denoting 
the density of the standard Gaussian distribution.


\begin{lemma}\label{le:trick}
 Let $\varphi$ and $\psi$ be densities on $\R$ 
and assume that $\varphi$ is convenient. Then:
\begin{enumerate}
\item $|u|\varphi(u) \le \frac12$.
\item $\int_\R |u| \varphi(u)\psi(u) du \ \le\  1$.
\end{enumerate}
\end{lemma}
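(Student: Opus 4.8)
The plan is to prove~(1) first and then bootstrap it to~(2). For~(1), fix $u>0$ (the case $u<0$ follows by symmetry, and $u=0$ is trivial). Since $\varphi$ is monotonically decreasing on $(0,\infty)$, we have $\varphi(v)\ge\varphi(u)$ for all $v\in(0,u)$, hence
\begin{equation*}
 \int_0^u \varphi(v)\,dv \ \ge\ u\,\varphi(u).
\end{equation*}
On the other hand, using symmetry of $\varphi$, $\int_0^u\varphi(v)\,dv = \tfrac12\int_{-u}^u\varphi(v)\,dv \le \tfrac12\int_\R\varphi(v)\,dv = \tfrac12$. Combining the two displays gives $u\,\varphi(u)\le\tfrac12$, which is~(1).

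For~(2), I would split the integral at $|u|=1$. On the region $|u|\le 1$ we simply bound $|u|\le 1$ and use that $\varphi$ is a density:
\begin{equation*}
 \int_{|u|\le 1} |u|\,\varphi(u)\psi(u)\,du \ \le\ \int_{|u|\le 1} \varphi(u)\psi(u)\,du \ \le\ \int_\R \varphi(u)\psi(u)\,du.
\end{equation*}
Hmm — that last bound is not obviously $\le\tfrac12$, so this naive split needs adjustment. Instead, the cleaner route is to apply~(1) pointwise: $|u|\varphi(u)\le\tfrac12$ for every $u$, so
\begin{equation*}
 \int_\R |u|\,\varphi(u)\,\psi(u)\,du \ =\ \int_\R \big(|u|\varphi(u)\big)\,\psi(u)\,du \ \le\ \frac12\int_\R \psi(u)\,du \ =\ \frac12 \ \le\ 1,
\end{equation*}
since $\psi$ is a density. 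This immediately yields~(2), in fact with the sharper constant $\tfrac12$.

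I expect no real obstacle here: the only subtlety is recognizing that~(1) is the right pointwise estimate to feed into~(2), rather than trying to estimate $\int|u|\varphi\psi$ directly by Cauchy–Schwarz or by using $\E_\varphi<\infty$ (which would only give a bound depending on $\|\psi\|_\infty$, not the clean universal constant). The convenience hypotheses — monotonicity and symmetry — are used exactly once, in the proof of~(1); the finiteness of $\E_\varphi$ is not even needed for this lemma, though it is part of the definition of "convenient".
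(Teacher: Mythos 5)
Your proof of~(1) is essentially identical to the paper's: monotonicity of $\varphi$ on $(0,\infty)$ gives $u\varphi(u)\le\int_0^u\varphi$, and symmetry gives $\int_0^u\varphi\le\tfrac12$. For~(2), however, you take a genuinely different and in fact simpler route. The paper fixes the half-line $(0,\infty)$, writes $u=\int_0^u dt$, applies Fubini to get $\int_0^\infty\int_t^\infty\varphi(u)\psi(u)\,du\,dt$, then uses monotonicity of $\varphi$ to pull out $\varphi(t)$ and bound the inner $\psi$-integral by $1$, obtaining $\tfrac12$ on each half-line and hence $1$ in total. You instead feed the pointwise bound from~(1) directly into the integrand: $|u|\varphi(u)\psi(u)\le\tfrac12\psi(u)$, and integrate to get $\tfrac12$. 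Both are correct, but your argument avoids the Fubini interchange entirely, uses monotonicity only once (inside the proof of~(1)), and produces the sharper constant $\tfrac12$ rather than~$1$. Since only the weaker bound $\le 1$ is used downstream (in Proposition~\ref{pro:basicEbound}), the improvement is cosmetic for the paper, but your version is cleaner and easier to check. One tiny nit: your discarded first attempt (splitting at $|u|=1$) is a dead end you already flagged yourself, so no harm there; the final argument as written is complete.
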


\begin{proof} 
(1) We have  
$u\varphi(u) \le \int_0^u\varphi(t)\, dt \le
  \int_0^\infty\varphi(t)\, dt = \frac12$, 
for $u>0$.

(2) By Fubini, 
\begin{eqnarray*}
 \int_0^\infty u \varphi(u)\psi(u) du &=&
 \int_0^\infty \Big(\int_0^u dt \Big) \varphi(u)\psi(u) du \\
 &=& \int_{0 \le t \le u} \varphi(u)\psi(u) dt\, du = \int_0^\infty \int_t^\infty \varphi(u)\psi(u) du dt .
\end{eqnarray*}
Now we use that $\varphi$ is monotonically decreasing on $(0,\infty)$ to upper bound this by 
$$
 \int_0^\infty \varphi(t) \int_t^\infty \psi(u) du \, dt 
\le \int_0^\infty \varphi(t) dt = \frac12 .
$$
The assertion follows by the symmetry of $\varphi$. 
\end{proof}

\begin{prop}\label{pro:basicEbound}
Consider $f = w_1 u_1 + \cdots + w_m u_m$, where 
$(w_1,\ldots,w_m)\ne 0$. 
If the density $\varphi_i$ of~$u_i$ is convenient, then we have 
for any $a\in\R$
\begin{equation*}\label{eq:BD1}
  |w_i|\, \E \big( |u_i| \mid f =a  \big) \rho_f(a) \ \le\ 1 .
\end{equation*}
\end{prop}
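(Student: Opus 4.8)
The plan is to reduce everything to a single one-dimensional integral and then invoke part~(2) of Lemma~\ref{le:trick}. First I would use the symmetry of the densities to assume $w_i \ne 0$ (if $w_i = 0$ the left-hand side is $0$ and there is nothing to prove), and then, after rescaling $u_i$ by $|w_i|$ and using the transformation rule~\eqref{eq:TR} together with the fact that a rescaled convenient density is still convenient, reduce to the case $w_i = 1$. Concretely, writing $f = u_i + g$ where $g := \sum_{j \ne i} w_j u_j$ is independent of $u_i$, it suffices to prove $\E(|u_i| \mid u_i + g = a)\,\rho_f(a) \le 1$.

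The key computational step is to apply Example~\ref{ex:cond_hyperplane}, more precisely Formula~\eqref{eq:Elin} (or the marginal formula~\eqref{eq:marg} applied to the variable $u_i$), to express the left-hand side. Since $u_i$ and $g$ are independent, the joint density factors, and the conditioning $\{u_i + g = a\}$ can be parametrized by $u_i$ with $g = a - u_i$. This yields
\begin{equation*}
 \E(|u_i| \mid f = a)\, \rho_f(a) \ =\ \int_\R |u_i|\, \varphi_i(u_i)\, \rho_g(a - u_i)\, du_i ,
\end{equation*}
where $\rho_g$ denotes the density of $g$ (which exists and is a genuine probability density as long as not all $w_j$, $j\ne i$, vanish; if $g\equiv 0$ the statement reduces directly to Lemma~\ref{le:trick}(1) after noting $\rho_f$ is then a shifted delta, so this degenerate case should be handled separately or absorbed). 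Setting $\psi(u) := \rho_g(a - u)$, which is again a probability density on $\R$, the integral is exactly $\int_\R |u|\,\varphi_i(u)\,\psi(u)\,du$.

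Finally, Lemma~\ref{le:trick}(2) bounds this last integral by $1$, since $\varphi_i$ is convenient. Undoing the initial rescaling restores the factor $|w_i|$ and gives $|w_i|\,\E(|u_i| \mid f = a)\,\rho_f(a) \le 1$, as claimed. The main obstacle, and the only point requiring care, is the bookkeeping around the degenerate case where $g \equiv 0$ (i.e.\ $w_j = 0$ for all $j \ne i$): there $\rho_g$ is not an ordinary density and the conditional expectation formula must be interpreted directly — but in that case $f = w_i u_i$, so $\E(|u_i|\mid f=a)\rho_f(a) = |a/w_i|\cdot |w_i|^{-1}\varphi_i(a/w_i)$ and the bound follows from Lemma~\ref{le:trick}(1). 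Everything else is the routine application of the conditional-expectation machinery set up in Section~\ref{se:prelim}.
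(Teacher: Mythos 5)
Your proposal is correct and follows essentially the same route as the paper: write $f$ as (a rescaled copy of) $u_i$ plus the independent remainder, express $\E(|u_i|\mid f=a)\,\rho_f(a)$ via~\eqref{eq:Elin} as a one-dimensional integral of $|u|\varphi_i(u)$ against a shifted density, and conclude with Lemma~\ref{le:trick}(2); the paper does exactly this with $v_1:=w_iu_i$ and $v_2:=\sum_{j\ne i}w_ju_j$. Your explicit treatment of the degenerate case $w_j=0$ for all $j\ne i$ (via Lemma~\ref{le:trick}(1)) is a small point the paper passes over silently, but it does not change the argument.
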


\begin{proof} 
We begin with a general observation. 
Let $v_1$ and $v_2$ be independent random variables with 
the densitites $\psi_1$ and $\psi_2$ and assume $\psi_1$ to be convenient. 
Consider the sum $g(v_1,v_2) := v_1 + v_2$. By \eqref{eq:Elin} we have for $a\in\R$, 
$$
\E( |v_1| \mid g(v_1,v_2) = a)\, \rho_g(a) = \int_\R |v_1| \psi_1(v_1) \psi_2(a-v_1)\, dv_1       
$$
and Lemma~\ref{le:trick}(2) implies 
$\E \big( |v_1| \mid v_1 + v_2=a \big) \rho_g(a) \ \le\ 1$. 
Applying this observation to $v_1:=w_iu_i$ and $v_2:= \sum_{j\ne i} w_j u_j$ 
yields the assertion.
\end{proof}

We provide now another bound on the conditional expectation, which is better 
for small weights $w_i$. For this we need a stronger assumption on the densities.
We will have to deal with unbounded densities, 
namely with the density $\varpi_k$ of the product of $k\ge 2$ standard 
Gaussian random variables. 
Lemma~\ref{le:d2bounds} will allow us to apply the following result to these densities. 

\begin{prop}\label{pro:BD2}
Suppose $u_i$ has a convenient density $\varphi_i$ with $\E_{\varphi_i} \le B$, 
for $i=2,\ldots,m$. 
Further, assume the density $\varphi_1$ of $u_1$ satisfies 
\begin{equation}\label{eq:Cbound}
 \forall u\ \varphi_1(u) \le C\, |u|^{\d -1} 
\end{equation}
for some constants $C>0$ and $0 <\d\le 1$. Then, 
for all $w_2,\ldots,w_{m}\in\R$, the random linear combination 
$f := u_1 + w_2 u_2+ \cdots + w_m u_m$ 
satisfies for $i \ge 2$ and all $a\in\R$,
\begin{equation*}
 \E \big( |u_i| \mid f =a\big) \rho_f(a) \ \le\
C\big(\d^{-1} + B\big) |w_i|^{\d-1} .
\end{equation*}
\end{prop}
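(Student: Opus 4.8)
The plan is to condition first on the value $b := u_1$, splitting $f$ as $f = b + \big(w_2u_2 + \cdots + w_mu_m\big)$. Write $h := w_2u_2 + \cdots + w_mu_m$ for the "rest", with density $\rho_h$ (well-defined provided $(w_2,\ldots,w_m)\ne 0$; the degenerate case $w=0$ is trivial since then $\rho_f = \varphi_1$ is bounded and the left side vanishes). Since $u_1$ is independent of $(u_2,\ldots,u_m)$, formula~\eqref{eq:Elin} applied to the sum $f = u_1 + h$ gives, for each $a\in\R$,
\begin{equation*}
 \E\big(|u_i| \mid f = a\big)\, \rho_f(a)
 \ =\ \int_\R \varphi_1(b)\,\E\big(|u_i| \mid h = a - b\big)\,\rho_h(a-b)\, db .
\end{equation*}
First I would handle the inner factor. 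By Proposition~\ref{pro:basicEbound} applied to the linear combination $h = w_2u_2 + \cdots + w_mu_m$ (whose coefficient densities are convenient), we have the bound $|w_i|\,\E(|u_i| \mid h = c)\,\rho_h(c) \le 1$ for every $c$, so the integrand is at most $|w_i|^{-1}\varphi_1(b)\,\big(\text{something}\le 1\big)$ — but that only recovers the weaker bound and loses the factor $|w_i|^{\d-1}$. So instead I would keep $\rho_h$ multiplicatively and use a \emph{different} bound for $\E(|u_i|\mid h=c)\rho_h(c)$ together with a bound on $\rho_h$ itself.

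The better route: bound $\E(|u_i| \mid h = c)\,\rho_h(c)$ using the convexity/monotonicity machinery of Lemma~\ref{le:trick}. Splitting off $u_i$: with $v_1 := w_iu_i$, $v_2 := \sum_{j\ne i,\, j\ge 2} w_ju_j$, formula~\eqref{eq:Elin} gives $\E(|u_i|\mid h=c)\rho_h(c) = \tfrac1{|w_i|}\int_\R |v_1|\,\psi_{v_1}(v_1)\psi_{v_2}(c - v_1)\,dv_1$ where $\psi_{v_1}$ is the (convenient, since $w_i\ne0$) density of $w_iu_i$, namely $\psi_{v_1}(v) = |w_i|^{-1}\varphi_i(v/w_i)$. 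Now $|v_1|\psi_{v_1}(v_1) = |u_i|\varphi_i(u_i) \le \tfrac12\,\E$-type bounds, but more usefully $\int_\R|v_1|\psi_{v_1}(v_1)dv_1 = |w_i|\,\E_{\varphi_i} \le |w_i|B$, and $\psi_{v_2} \le$ ... — here I would use that a density integrates to one to get $\E(|u_i|\mid h=c)\rho_h(c) \le |w_i|^{-1}\cdot\|\psi_{v_2}\|_\infty\cdot|w_i|B$? No — $\psi_{v_2}$ need not be bounded. The cleaner idea, and the one I would pursue, is to \textbf{not} split $h$, but instead substitute $b \mapsto a - c$ in the outer integral and use the hypothesis~\eqref{eq:Cbound} on $\varphi_1$ directly:
\begin{equation*}
 \E\big(|u_i| \mid f = a\big)\, \rho_f(a)
 \ =\ \int_\R \varphi_1(a - c)\,\E\big(|u_i| \mid h = c\big)\,\rho_h(c)\, dc
 \ \le\ C\int_\R |a-c|^{\d-1}\,\E\big(|u_i| \mid h = c\big)\,\rho_h(c)\, dc .
\end{equation*}
Then I would further expand $\E(|u_i|\mid h=c)\rho_h(c)$ via~\eqref{eq:Elin} as $\int_{\R^{m-2}}|u_i|\,\varphi_2(u_2)\cdots\varphi_m(u_m)$ restricted to $\{w_2u_2+\cdots+w_mu_m = c\}$ in the appropriate coordinate form, substitute back so the $c$-integral becomes an unconstrained integral over all of $(u_2,\ldots,u_m)$, yielding
\begin{equation*}
 \E\big(|u_i| \mid f = a\big)\, \rho_f(a)
 \ \le\ C\int_{\R^{m-1}} \Big|a - \textstyle\sum_{j\ge2} w_ju_j\Big|^{\d-1}\,|u_i|\;\varphi_2(u_2)\cdots\varphi_m(u_m)\, du_2\cdots du_m .
\end{equation*}
Now integrate out $u_i$ \emph{last}: fix all $u_j$, $j\ne i$, and write the integral over $u_i$ as $\int_\R |u_i|\,\varphi_i(u_i)\,|A - w_iu_i|^{\d-1}\,du_i$ where $A := a - \sum_{j\ne i,\,j\ge2}w_ju_j$. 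Change variables $s := w_iu_i$: this is $|w_i|^{-2}\int_\R|s|\,\varphi_i(s/w_i)\,|A-s|^{\d-1}ds = |w_i|^{-1}\int_\R |s|\,\psi_i(s)\,|A-s|^{\d-1}ds$, where $\psi_i(s)=|w_i|^{-1}\varphi_i(s/w_i)$ is convenient with $\E_{\psi_i} = |w_i|\,\E_{\varphi_i}\le|w_i|B$. So the key one-dimensional estimate I need is:
\begin{equation*}
 \int_\R |s|\,\psi(s)\,|A - s|^{\d-1}\, ds \ \le\ \big(\d^{-1} + \E_\psi\big)\quad\text{for every }A\in\R,
\end{equation*}
for any convenient density $\psi$ and $0<\d\le1$. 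I would prove this by splitting $|s| \le |s - A| + |A|$ wait — better: split the integration region into $\{|A - s| \le 1\}$ and $\{|A-s|>1\}$. On the far region $|A-s|^{\d-1}\le1$ so the contribution is $\le\int|s|\psi(s)ds = \E_\psi$. On the near region, $|s| \le |A| + 1$ is not bounded; instead use that on $|A-s|\le1$ we have $|s|\le |A-s|+|A|$, hmm, still has $|A|$. The honest fix: use $|s|\psi(s)\le\tfrac12$ from Lemma~\ref{le:trick}(1) on the near region, giving $\le\tfrac12\int_{|A-s|\le1}|A-s|^{\d-1}ds = \tfrac12\cdot\tfrac{2}{\d} = \d^{-1}$. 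Adding the two contributions gives exactly $\d^{-1}+\E_\psi$. Multiplying by $|w_i|^{-1}\cdot C$ and integrating out the remaining $u_j$ (each $\varphi_j$ a probability density, total mass $1$), and bounding $\E_{\psi_i}\le|w_i|B$ so that $C|w_i|^{-1}(\d^{-1}+|w_i|B) = C|w_i|^{-1}\d^{-1} + CB$...

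That last line does not match the claimed $C(\d^{-1}+B)|w_i|^{\d-1}$, so I have the scaling off and would need to be more careful about \emph{where} the factor of $w_i$ sits. The discrepancy signals the correct move: I should \emph{not} have the free coefficient on $u_1$. Re-reading the statement: $f = u_1 + w_2u_2 + \cdots + w_mu_m$, so $u_1$ has coefficient $1$ and it is $u_i$ for $i\ge2$ whose coefficient is small. The resolution is to change variables in the \emph{$u_1$-integral}, not the $u_i$-integral: write $f = u_1 + w_ih$ where... no. The genuinely right bookkeeping: after $\E(|u_i|\mid f=a)\rho_f(a) = \int\varphi_1(a-c-w_i r)\cdots$ I would substitute to make $w_i$ appear as $|w_i|^{\d-1}$ out of $\varphi_1(w_iu_i + \cdots)\le C|w_iu_i + \cdots|^{\d-1}$ — the point being that $|x+w_iu_i|^{\d-1}$ integrated against $|u_i|\varphi_i(u_i)$ scales like $|w_i|^{\d-1}$ by the substitution $s = w_iu_i$ combined with $\E_{\psi_i}\le|w_i|B$ and $|s|\psi_i(s)\le\tfrac12$. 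So the corrected one-dimensional lemma, which is what I would actually state and prove, is:
\begin{equation*}
 \int_\R |u|\,\varphi_i(u)\,|x + w_i u|^{\d-1}\, du \ \le\ \big(\d^{-1} + B\big)\,|w_i|^{\d-1}\qquad(x\in\R),
\end{equation*}
proved exactly as above (substitute $s=w_iu$, split on $|x+s|\lessgtr|w_i|$, use $|s|\psi_i(s)\le\tfrac12$ near and $\E_{\psi_i}\le|w_i|B$ far, noting $\int_{|x+s|\le|w_i|}|x+s|^{\d-1}ds = 2|w_i|^\d/\d$). Then multiply by $C$, integrate out the remaining probability densities $\varphi_j$ ($j\ne 1,i$, $j\ge2$), and we are done.

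\textbf{Main obstacle.} The delicate point is the one-dimensional integral estimate $\int|u|\varphi_i(u)|x+w_iu|^{\d-1}du$: one must split the real line at the scale $|x+w_iu|\asymp|w_i|$ and on the singular piece \emph{resist} the temptation to bound $|u|$ pointwise by something involving $x$, instead invoking $|u|\varphi_i(u)\le\tfrac12$ (Lemma~\ref{le:trick}(1)) to trade the $|u|\varphi_i(u)$ factor for the constant $\tfrac12$, so that what remains is the pure integral $\int_{|x+w_iu|\le|w_i|}|x+w_iu|^{\d-1}du$ which is \emph{translation-invariant in $x$} and evaluates to $2|w_i|^{\d-1}/\d$. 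Getting the power of $w_i$ to come out as $|w_i|^{\d-1}$ rather than $|w_i|^{-1}$ hinges entirely on this substitution being performed on the argument of $\varphi_1$ (the unbounded density) and not on the argument of $\varphi_i$. Everything else — independence, the reduction via~\eqref{eq:Elin}, integrating out the clean probability densities — is routine.
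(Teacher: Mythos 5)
After the acknowledged false starts, your final argument is correct and is essentially the paper's proof: expand $\E(|u_i|\mid f=a)\,\rho_f(a)$ via \eqref{eq:Elin}, replace $\varphi_1$ by the pointwise bound \eqref{eq:Cbound}, and prove the one-dimensional estimate $\int_\R |u|\,\varphi_i(u)\,|x+w_iu|^{\delta-1}\,du \le (\delta^{-1}+B)\,|w_i|^{\delta-1}$ by splitting at the scale $|w_i|$, using Lemma~\ref{le:trick}(1) on the singular piece and $\E_{\varphi_i}\le B$ on the rest, before integrating out the remaining probability densities. The paper merely organizes this differently (it treats $m=2$ first, factoring out $|w|^{\delta-1}$ and splitting at scale $1$, then conditions on the other variables), but the key estimate and the source of the factor $|w_i|^{\delta-1}$ are identical.
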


\begin{proof}
Using the symmetry of $\varphi_i$, 
we can assume w.l.o.g.\ that all the weights $w_i$ are positive.
We first provide the proof in the case $m=2$. 
Let $f = u_1 + wu_2$ with $w>0$. 
By \eqref{eq:Elin} we have 
\begin{equation}\label{eq:ErhoZ0}
 I :=\E ( |u_2| \mid f = a ) \, \rho_f(a) 
 = \int_\R |u_2| \varphi_1(a-wu_2)\varphi_2(u_2) du_2 . 
\end{equation}
By assumption, we have $\varphi_1(a-wu_2) \le C |a-wu_2|^{\d-1}$ for all $u_2\in\R$.
Using this, we obtain
$$
 I \ \le\  C\int_\R |a-wu_2|^{\d-1} |u_2| \varphi_2(u_2)\, du_2 \ \le\ 
 C |w|^{\d-1} \int_\R \Big|\frac{a}{w} - u_2\Big|^{\d-1} |u_2|\varphi_2(u_2)\, du_2 .
$$ 
We bound this integral by splitting according to whether 
$\big|\frac{a}{w} - u_2\big|$ is smaller or larger than one. 
Using that $|u_2|\varphi_2(u_2) \le \frac12$,
which holds since $\varphi_2$ is convenient (see Lemma~\ref{le:trick}(1)), 
we get 
\begin{equation*}\
\begin{split}
 \int_\R \big|\frac{a}{w} - u_2\big|^{\d-1} |u_2|\varphi_2(u_2)\, du_2 \ &\le\ 
 \frac12 \int_{|u_2-a/w| \le 1} \big|\frac{a}{w} - u_2\big|^{\d-1} \, du_2 
 + \int_{|u_2-a/w| \ge 1}  |u_2|\varphi_2(u_2)\, du_2 \\
 &\le\ \frac12 \int_{-1}^1 |x|^{\d-1}\, dx \ +\ \E_{\varphi_2} 
   \le \frac{1}{\d} \ +\  B . 
\end{split}
\end{equation*}
We have thus shown that 
$\E ( |u_2| \mid f = a ) \, \rho_f(a) \le C' |w|^{\d-1} $, 
where  $C':=C(\d^{-1} +  B)$, 
settling the case $m=2$.  

We now turn to the general case $m\ge 2$.  
Let $f := u_1 + w_2 u_2 + \cdots + w_{m} u_{m}$
and w.l.o.g.\ $i=2$. 
As for \eqref{eq:ErhoZ0}, 
\begin{eqnarray*}
 \lefteqn{\E \big( |u_2| \mid f =a\big) \, \rho_f(a) =} \\
& \int_{\R^{m-1}} \int_\R |u_ 2| \varphi_1( (a-w_3 u_3 - \cdots - w_{m} u_{m}) - w_2u_2)\varphi_2(u_2)\, du_2 \; 
   \varphi_3(u_3)\cdots\varphi_{m}(u_{m})  du_3\cdots du_{m} .
\end{eqnarray*}
We bound the inner integral using the case $m=2$ and obtain,  
$$
 \E \big( |u_2| \mid f =a\big) \, \rho_f(a) \ \le\ 
 C' |w_2|^{\d-1} \int_{\R^{m-1}} \varphi_3(u_3)\cdots\varphi_{m}(u_{m})  du_3\cdots du_{m} 
 =  C' |w_2|^{\d-1} ,
$$
which finishes the proof.
\end{proof}

\section{Random linear combinations of functions}\label{se:zerosLC}

Throughout this section we fix analytic functions $w_1,\ldots,w_m\colon [x_0,x_1]\to\R$
and study for $u\in\R^m$ their linear combination 
$$
 F(x) := \sum_{i=1}^m w_i(x) u_i .
$$
We assume that $w_1,\ldots,w_m$ do not have a common zero in $[x_0,x_1]$. 
Note that $\nabla F(x) = (w_1(x),\ldots,w_m(x)) \ne 0 $ for all~$x$.


\begin{lemma}\label{le:LIN} 
The set of $u\in\R^m$ such that $\sum_{i=1}^m w_i(x) u_i$ has finitely many zeros 
is of measure zero. 
\end{lemma}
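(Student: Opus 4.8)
The plan is to show that the set of parameters $u$ for which $F_u(x):=\sum_{i=1}^m w_i(x)u_i$ does not have finitely many zeros in $[x_0,x_1]$ coincides with the set of $u$ for which $F_u$ vanishes identically on $[x_0,x_1]$, and that this latter set is a proper linear subspace of $\R^m$, hence of Lebesgue measure zero. Equivalently, for every $u$ outside a null set the analytic function $F_u$ has only finitely many zeros in $[x_0,x_1]$, which is precisely the finiteness hypothesis needed in order to apply Theorem~\ref{th:rice} to the random linear combination $\sum_i w_i u_i$.

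First I would record the analyticity dichotomy. For fixed $u\in\R^m$ the function $F_u$ is a finite linear combination of the analytic functions $w_1,\dots,w_m$, hence analytic on the compact interval $[x_0,x_1]$. A real-analytic function on a compact interval that is not identically zero has only isolated zeros there, and hence, by compactness, only finitely many of them; contrapositively, if $F_u$ has infinitely many zeros in $[x_0,x_1]$ then $F_u$ vanishes identically on $[x_0,x_1]$.

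Next I would identify the exceptional set. Put $W:=\{u\in\R^m : w_1(x)u_1+\dots+w_m(x)u_m=0\ \text{for all}\ x\in[x_0,x_1]\}$, the kernel of the linear map $\R^m\to C^\omega([x_0,x_1])$, $u\mapsto F_u$; this is a linear subspace of $\R^m$, and by the previous step every $u$ for which $F_u$ fails to have finitely many zeros lies in $W$. It then suffices to show that $W$ has measure zero, for which it is enough that $W\ne\R^m$. This is where the standing hypothesis enters: since $w_1,\dots,w_m$ have no common zero in $[x_0,x_1]$, they cannot all be the zero function (otherwise every point of the nonempty interval would be a common zero), so $w_i(x^\ast)\ne 0$ for some index $i$ and some $x^\ast\in[x_0,x_1]$, and therefore the standard basis vector $e_i$ does not lie in $W$. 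A proper linear subspace of $\R^m$ is a Lebesgue null set, so $W$ is null, and the lemma follows.

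I do not expect a genuine obstacle here; the statement is a routine preparatory fact whose only role is to secure the almost-sure finiteness needed in Theorem~\ref{th:rice}. The one point worth a word of care is that $w_1,\dots,w_m$ may well be linearly dependent, so that $W$ need not reduce to $\{0\}$; this is immaterial, since the argument uses nothing beyond $W\subsetneq\R^m$, which the no-common-zero hypothesis delivers.
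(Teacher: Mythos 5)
Your proof is correct and takes essentially the same route as the paper: analyticity on the compact interval forces any $F_u$ with infinitely many zeros to vanish identically, and the set of such $u$ is a proper linear subspace of $\R^m$ (the paper exhibits it via a basis of the span of $w_1,\dots,w_m$, you via the kernel of $u\mapsto F_u$ and the observation that some $w_i$ is not identically zero), hence a null set. You also correctly read the statement in its intended form, namely that for almost all $u$ the combination has only finitely many zeros.
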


\begin{proof}
W.l.o.g.\ we can assume that $w_1,\ldots,w_k$ is a basis of the span of $w_1,\ldots,w_m$, 
where $k\ge 1$. 
If we write  
$w_i = \sum_{j=1}^k \lambda_{ij} w_j$, 
for $i> k$ with $\lambda_{ij}\in\R$, then
$F(x) = \sum_{i=1}^m w_i(x) u_i = \sum_{j=1}^k w_j(x) v_j$, 
where $v_j = u_j + \sum_{i=k+1}^m \lambda_{ij} w_j$. 
If the analytic function $F(x)$ has infinitely many zeros in $[x_0,x_1]$, then 
it must vanish identically and thus $v_j = u_j + \sum_{i=k+1}^m \lambda_{ij} w_j =0$ 
for all $j\le k$. Since the set of $u\in\R^m$ satisfying these conditions lie in a lower 
dimensional subspace, the assertion follows. 
\end{proof}

We note that any family of polynomials without common zeros in $[x_0,x_1]$ 
satisfies the above assumptions. 
For instance, we can take the 
family of monomials 
$w_i(x) = x^{d_i}$ with $d_1=0 \le d_2\le \ldots \le d_m$, which 
amounts to studying the random fewnomial 
$F(x) = \sum_{i=1}^m u_i x^{d_i}$.

We assume now that the $u_1,\ldots,u_m$ are independent real random variables 
with the densities $\varphi_1,\ldots,\varphi_m$ and 
consider random linear combination $F(x)$.
(Notationally, we again drop the dependence on $u$.) 
Our goal is to bound the expected number of real zeros of $F$ via the Rice inequality. 

We begin with a simple estimation. 

\begin{prop}\label{cor:Varbound}
Suppose $A,B$ are constants such that 
\begin{equation*}
 \forall i\quad \|\varphi_i\|_\infty \le A, \quad \E_{\varphi_i} \le B .
\end{equation*}
Then $F(x) := u_1 + \sum_{i=2}^m w_i(x) u_i $ satisfies 
for all $x\in [x_0,x_1]$ and $a\in\R$, 
$$
 \E \big(|F'(x)| \mid F(x) = a \big) \rho_{F(x)}(a) \ \le\ 
  A B\, \sum_{i=2}^m |w'_i(x)| .
$$
Moreover, we have  
$$
 \E \#\{x\in [x_0,x_1] : F(x) = 0 \} \ \le\ 
 A B\, \sum_{i=2}^m \int_{x_0}^{x_1} |w'_i(x)| dx.
$$
\end{prop}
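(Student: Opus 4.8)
The plan is to derive the pointwise bound on the conditional expectation first, and then feed it into the Rice inequality (Theorem~\ref{th:rice}) to get the bound on the expected zero count.

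For the pointwise bound, I would start from the observation that $F(x) = u_1 + \sum_{i=2}^m w_i(x) u_i$ is a random linear combination of the fixed (for each $x$) coefficients $1, w_2(x),\ldots,w_m(x)$, and $F'(x) = \sum_{i=2}^m w_i'(x) u_i$ (the $u_1$ term drops out since its coefficient in $F$ is the constant $1$). By linearity of conditional expectation, $\E(|F'(x)| \mid F(x)=a)\rho_{F(x)}(a) \le \sum_{i=2}^m |w_i'(x)|\, \E(|u_i| \mid F(x)=a)\rho_{F(x)}(a)$. Now apply Proposition~\ref{pro:basicEbound} with $f = F(x)$, weight vector $(1, w_2(x),\ldots,w_m(x))$, which is nonzero because its first component is $1$: since each $\varphi_i$ is convenient, we get $|w_i(x)|\,\E(|u_i|\mid F(x)=a)\rho_{F(x)}(a) \le 1$. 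However, this is not quite what we want because the bound $1/|w_i(x)|$ blows up near zeros of $w_i$. Instead I would use the cruder route: bound $\rho_{F(x)}(a) \le A$ by Lemma~\ref{le:rhoZbound} (valid since the weight on $u_1$ is $1$ and all $\varphi_i$ are bounded by $A$), and bound $\E(|u_i| \mid F(x)=a)$ directly. The conditional expectation $\E(|u_i|\mid F(x)=a)\rho_{F(x)}(a)$, by formula~\eqref{eq:Elin} applied with $u_i$ playing the role of the free variable, equals $\int |u_i|\varphi_i(u_i)\,[\text{conditional density of the rest}]$, and since $\int|u_i|\varphi_i(u_i)\,du_i = \E_{\varphi_i}\le B$ while the remaining marginal integrates against a probability density bounded by... — the clean way is: $\E(|u_i|\mid F(x)=a)\rho_{F(x)}(a) = \int_{\R^{m-1}} |u_i|\, \varphi_1(a - \sum_{j\ge 2, j\ne i} w_j u_j - w_i u_i)\prod_{j\ge 2}\varphi_j(u_j)\,\prod du_j \le A\int |u_i|\varphi_i(u_i)\,du_i = A\,\E_{\varphi_i} \le AB$, bounding $\varphi_1 \le A$ and integrating out everything else. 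Multiplying by $|w_i'(x)|$ and summing over $i=2,\ldots,m$ gives the first claimed inequality.

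For the second statement, I would invoke Theorem~\ref{th:rice} with the function $g(x) := AB\sum_{i=2}^m |w_i'(x)|$. The hypotheses of that theorem are met: $F$ is a smooth (indeed analytic in $x$, polynomial-or-analytic) function of $(u,x)$, $\nabla F(x) = (w_1(x),\ldots,w_m(x)) \ne 0$ for all $x$ (stated at the start of the section) so the zero-gradient set is empty, Lemma~\ref{le:LIN} supplies that for almost all $u$ the function $x\mapsto F_u(x)$ has only finitely many zeros, and the pointwise bound just proved gives exactly $\E(|F'(x)|\mid F(x)=a)\rho_{F(x)}(a) \le g(x)$ for all $x$ and all $a$ (in particular all $a$ in any neighborhood of $0$). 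Also $g$ is integrable on the compact interval $[x_0,x_1]$ since each $w_i$ is analytic, hence $w_i'$ is continuous. Theorem~\ref{th:rice} then yields $\E\#\{x\in[x_0,x_1]: F(x)=0\} \le \int_{x_0}^{x_1} g(x)\,dx = AB\sum_{i=2}^m \int_{x_0}^{x_1} |w_i'(x)|\,dx$, as desired.

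The only delicate point — and the place where the "moreover" part of the statement genuinely needs care — is verifying that the integrability/finiteness hypotheses of the Rice inequality really hold for the random linear combination; the pointwise estimate itself is a routine computation once one writes $\rho_{F(x)}(a)$ via~\eqref{eq:rholin}. I do not expect a serious obstacle here, since Lemma~\ref{le:LIN} was proved precisely to handle the "finitely many zeros a.s." assumption, and the continuity of $w_i'$ on a compact interval handles integrability. One small subtlety is that the analyticity assumption (rather than mere smoothness) on the $w_i$ is what makes Lemma~\ref{le:LIN} applicable — so I would make sure to state that we are using the standing analyticity hypothesis of the section.
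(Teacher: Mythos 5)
Your proof is correct and follows essentially the same approach as the paper: bound $|F'(x)| \le \sum_{i\ge 2}|w_i'(x)||u_i|$, then show $\E(|u_i|\mid F(x)=a)\rho_{F(x)}(a) \le AB$ using $\|\varphi_1\|_\infty \le A$ (so that the constant-coefficient term $u_1$ controls the density) and $\E_{\varphi_i} \le B$, and finally invoke Theorem~\ref{th:rice}. The only cosmetic difference is that the paper groups the remaining summands into a single random variable $v := u_1 + \sum_{j\ne i, j\ge 2} w_j u_j$ and applies Lemma~\ref{le:rhoZbound} to its density, while you bound $\varphi_1$ directly inside the explicit $(m-1)$-dimensional integral from~\eqref{eq:Elin}; these are equivalent computations.
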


\begin{proof}
We have $F'(x) = \sum_{i=2}^m w'_i(x) u_i$, hence 
$|F'(x)| \le \sum_{i=2}^m |w'_i(x)| \cdot |u_i|$.
If we put $w_j := w_j(x)$ for fixed $x$, we have 
\begin{equation*}
 \E \big(|F'(x)| \mid F(x) = a \big) \ \le\ 
 \sum_{i=2}^m |w'_i(x)| \cdot \E\big( |u_i| \ \big|\  \sum_{j=1}^m w_j u_j = a\big) .
\end{equation*}
Then $u_2$ and 
$v:=u_1 + w_3u_3+\ldots + w_m u_m$  
are independent random variables and $F(x) = w_2 u_2 + v$.
Let $\vartheta$ denote the density of $v$. 
By Lemma~\ref{le:rhoZbound} we have 
$\|\vartheta\|_\infty \le A$. 
Hence, by~\eqref{eq:Elin}, 
$$
 \E\big( |u_2| \mid F(x) =a)\, \rho_{F(x)}(a) = 
 \int_\R |u_2| \varphi_2(u_2) \vartheta(a -w_2u_2)\, du_2 \ \le\ A \cdot \E_{\varphi_2} \ \le\ AB .
$$
The same bound holds for all $u_i$ with $i\ge 2$ and the first assertion follows. 

By the assumptions on the functions $w_i$ made at the beginning of Section~\ref{se:zerosLC},
we can apply Theorem~\ref{th:rice} and the second assertion follows. 
\end{proof}

The following corollary is of independent interest. 

\begin{cor}
In the situation of Proposition~\ref{cor:Varbound}, 
if all functions $w_i$ are monotonically increasing, then
$$
 \E \#\{x\in [x_0,x_1] : F(x) = 0 \} \ \le\ A B\,\sum_{i=2}^m (w_i(x_1) -w_i(x_0)) .
$$
\end{cor}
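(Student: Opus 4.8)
The plan is to derive this corollary as an immediate consequence of the second assertion of Proposition~\ref{cor:Varbound}. That proposition already gives
$$
 \E \#\{x\in [x_0,x_1] : F(x) = 0 \} \ \le\
 A B\, \sum_{i=2}^m \int_{x_0}^{x_1} |w'_i(x)|\, dx ,
$$
so the only thing left to do is to evaluate the integrals $\int_{x_0}^{x_1} |w'_i(x)|\, dx$ under the extra hypothesis that each $w_i$ is monotonically increasing on $[x_0,x_1]$.

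The key observation is that for a monotonically increasing (differentiable) function $w_i$ we have $w'_i(x) \ge 0$ for all $x$, hence $|w'_i(x)| = w'_i(x)$, and therefore by the fundamental theorem of calculus
$$
 \int_{x_0}^{x_1} |w'_i(x)|\, dx \ =\ \int_{x_0}^{x_1} w'_i(x)\, dx \ =\ w_i(x_1) - w_i(x_0) .
$$
Substituting this into the bound from Proposition~\ref{cor:Varbound} yields exactly the claimed inequality. (Strictly speaking $w_i$ is analytic on $[x_0,x_1]$, so $w'_i$ is continuous and the fundamental theorem of calculus applies without any subtlety; one could equally note that a monotone increasing $w_i$ is differentiable a.e.\ with nonnegative derivative and still get $\int |w'_i| = \int w'_i \le w_i(x_1)-w_i(x_0)$, which suffices.)

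There is essentially no obstacle here: the corollary is a one-line specialization obtained by recognizing total variation of a monotone function as the net increment. The only point requiring any care is to confirm that ``monotonically increasing'' gives $|w'_i| = w'_i$ pointwise (or a.e.), which is immediate; and that the hypotheses of Proposition~\ref{cor:Varbound} — in particular that $w_1 \equiv 1$ (the constant term) and that the $w_i$ have no common zero, together with the boundedness and first-moment bounds on the densities — are exactly the ones assumed to be in force ``in the situation of Proposition~\ref{cor:Varbound}''. So the proof is simply: apply Proposition~\ref{cor:Varbound}, then replace each $\int_{x_0}^{x_1} |w'_i|\,dx$ by $w_i(x_1)-w_i(x_0)$.
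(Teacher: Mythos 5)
Your proof is correct and is exactly the intended argument (the paper leaves this corollary without a written proof precisely because it is this immediate specialization): apply the second bound of Proposition~\ref{cor:Varbound} and use that for monotonically increasing $w_i$ one has $\int_{x_0}^{x_1}|w'_i(x)|\,dx = w_i(x_1)-w_i(x_0)$. Nothing further is needed.
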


In particular, in the case of monomials $w_i(x) = x^{d_i}$ with 
$d_1=0 < d_2 < \ldots < d_m$, the random fewnomial 
$F(x) = \sum_{i=1}^m u_i x^{d_i}$ satisfies
$\E \#\{x\in [0,1] : F(x) = 0 \} \ \le\ A B\, (m-1)$, 
which can be seen as a probabilistic version of Descartes rule. 

\begin{remark} 
Better bounds can be obtained for particular probability distributions of the coefficients $u_i$. 
For instance, one can show that 
$\E \#\{x\in \R : F(x) = 0\} =\Oh(\sqrt{m} \log m)$
for the random sparse polynomial
$F(x) = \sum_{i=1}^m u_i x^{d_i}$ with independent standard Gaussian coefficients; see~\cite{betc:19}. 
\end{remark}


Following Proposition~\ref{pro:BD2}, we now provide an estimation, which is better for 
small values of~$w_i(x)$. It is relevant that this does not require the density~$\varphi_i$ 
to be bounded. This estimation can be applied to the distributions 
of products of independent Gaussians, which will be of importance for the proof 
of the main result.

\begin{prop}\label{th:EBound}
Suppose $u_i$ has a convenient density $\varphi_i$ with $\E_{\varphi_i} \le B$, 
for $i=2,\ldots,m$. 
Further, assume there are $C\ge 1$ and $0 <\d\le 1$ such that 
the density $\varphi_1$ of $u_1$ satisfies $\varphi_1(u) \le C\, |u|^{\d -1}$ for all~$u$. 
Then, for all $w_2,\ldots,w_m\in\R$, the random linear combination 
$F(x) := u_1 + \sum_{i=2}^m w_i(x) u_i $ satisfies 
for all $x\in [x_0,x_1]$ and all $a\in\R$: 
$$
 \E \big(|F'(x)| \mid F(x) = a \big) \rho_{F(x)}(a) \ \le\ 
  C(\d^{-1} + B) \, \sum_{i=2}^m \frac{|w'_i(x)|}{\max\{ |w_i(x)|, |w_i(x)|^{1-\d}\} }.
$$
\end{prop}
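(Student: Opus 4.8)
The plan is to reduce Proposition~\ref{th:EBound} to the pointwise bound already proved in Proposition~\ref{pro:BD2}. Fix $x\in[x_0,x_1]$ and abbreviate $w_i:=w_i(x)$ and $w_i':=w_i'(x)$. Since $F'(x)=\sum_{i=2}^m w_i' u_i$, the triangle inequality gives $|F'(x)|\le\sum_{i=2}^m |w_i'|\,|u_i|$, hence by linearity of conditional expectation
$$
 \E\big(|F'(x)|\mid F(x)=a\big)\rho_{F(x)}(a)\ \le\ \sum_{i=2}^m |w_i'|\cdot\E\big(|u_i|\mid \textstyle\sum_{j=1}^m w_j u_j=a\big)\rho_{F(x)}(a).
$$
So it suffices to bound each summand $\E(|u_i|\mid \sum_j w_j u_j=a)\rho_{F(x)}(a)$ by $C(\d^{-1}+B)\,\max\{|w_i|,|w_i|^{1-\d}\}^{-1}$.

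Next I would distinguish two regimes according to the size of $|w_i|$. If $|w_i|\ge 1$, then $\max\{|w_i|,|w_i|^{1-\d}\}=|w_i|$, and I apply Proposition~\ref{pro:BD2} directly: writing $f=u_1+\sum_{j\ge 2}w_j u_j$, that proposition (with the roles of the index set unchanged, noting $w_1=1$) yields $\E(|u_i|\mid f=a)\rho_f(a)\le C(\d^{-1}+B)\,|w_i|^{\d-1}=C(\d^{-1}+B)/|w_i|^{1-\d}$; since $|w_i|\ge1$ and $\d\le1$ we have $|w_i|^{1-\d}\le|w_i|$, so this is $\le C(\d^{-1}+B)/\max\{|w_i|,|w_i|^{1-\d}\}$ — wait, this inequality goes the wrong way. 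The cleaner route is: for $|w_i|\ge1$, group $u_1$ together with $u_i$ by exploiting symmetry — more precisely, apply Proposition~\ref{pro:BD2} after dividing through by $w_i$, i.e. write $F(x)/w_i = (u_1+\sum_{j\ne 1,i}w_j u_j)/w_i + u_i$ and use that $\varphi_1(w_i \cdot)\le C|w_i|^{\d-1}|\cdot|^{\d-1}$ still has the required form with constant $C|w_i|^{\d-1}$; this produces the factor $|w_i|^{-1}$ directly in the large-weight case. In the regime $|w_i|\le 1$, $\max\{|w_i|,|w_i|^{1-\d}\}=|w_i|^{1-\d}$, and Proposition~\ref{pro:BD2} applied as stated (leaving $u_1$ as the anchor coefficient) gives exactly $\le C(\d^{-1}+B)|w_i|^{\d-1}=C(\d^{-1}+B)/|w_i|^{1-\d}$, which is the desired bound.

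Summing over $i=2,\ldots,m$ then gives the claimed estimate
$$
 \E\big(|F'(x)|\mid F(x)=a\big)\rho_{F(x)}(a)\ \le\ C(\d^{-1}+B)\sum_{i=2}^m\frac{|w_i'(x)|}{\max\{|w_i(x)|,|w_i(x)|^{1-\d}\}}.
$$
One should also handle the degenerate cases where some $w_i(x)=0$ (then the corresponding term of $\max\{|w_i|,|w_i|^{1-\d}\}$ is $0$ but so is the contribution of $u_i$ to $F(x)$ at this $x$, and since $|w_i(x)|=0$ the monomial weight $|w_i'(x)|$ term is controlled by a limiting argument, or one simply notes the inequality is vacuous as the right side is $+\infty$), and where all weights among $w_2,\dots,w_m$ vanish so that $F(x)=u_1$ deterministically — then $F'(x)=0$ and both sides are handled trivially. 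The main obstacle is precisely the bookkeeping in the large-weight regime $|w_i|\ge1$: one must renormalize so that Proposition~\ref{pro:BD2}'s hypothesis~\eqref{eq:Cbound} on the anchor density is preserved with a controlled constant, and verify that the resulting bound assembles into the single expression $\max\{|w_i|,|w_i|^{1-\d}\}^{-1}$ uniformly across both regimes; everything else is a direct invocation of the preceding propositions together with the triangle inequality.
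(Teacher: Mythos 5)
Your first step---the triangle inequality $|F'(x)| \le \sum_{i\ge 2} |w_i'(x)|\,|u_i|$, the reduction to bounding each term $\E(|u_i|\mid F(x)=a)\rho_{F(x)}(a)$, and the observation that the target for each $i$ is $C(\d^{-1}+B)/\max\{|w_i|,|w_i|^{1-\d}\}$---matches the paper. The small-weight input is also correct: a direct application of Proposition~\ref{pro:BD2} gives $C(\d^{-1}+B)|w_i|^{\d-1}$, which is exactly the target when $|w_i|\le 1$.

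The large-weight case, however, does not work as proposed. You divide $F$ by $w_i$ so that $u_i$ has coefficient~$1$, take $\tilde u_1 := u_1/w_i$ as the new anchor, and claim its density constant is $C|w_i|^{\d-1}$. But the density of $u_1/w_i$ is $|w_i|\,\varphi_1(w_i\,\cdot)$, not $\varphi_1(w_i\,\cdot)$; with the Jacobian the correct constant is $C|w_i|^{\d}$. Carrying this through Proposition~\ref{pro:BD2} and undoing the density scaling $\rho_{F/w_i}(a/w_i)=|w_i|\rho_F(a)$ from \eqref{eq:TR} reproduces the bound $C(\d^{-1}+B)|w_i|^{\d-1}$ verbatim. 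That is unavoidable: Proposition~\ref{pro:BD2} is already stated for arbitrary weights and is therefore invariant under the rescaling you perform, so the rescaling can never improve it. In particular it does not yield the factor $|w_i|^{-1}$.

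The ingredient that actually gives $|w_i|^{-1}$, and the one the paper uses, is a different lemma: Proposition~\ref{pro:basicEbound}, which assumes only that $\varphi_i$ is convenient (no hypothesis on the anchor density) and gives $\E(|u_i|\mid F(x)=a)\rho_{F(x)}(a)\le 1/|w_i|$ directly. Since $C(\d^{-1}+B)\ge C\ge 1$, this is $\le C(\d^{-1}+B)/|w_i|$; combined with the bound from Proposition~\ref{pro:BD2}, the pointwise minimum of the two is $C(\d^{-1}+B)/\max\{|w_i|,|w_i|^{1-\d}\}$ for every $w_i$, with no case split on $|w_i|$ needed at all. So the gap in your write-up is the missing invocation of Proposition~\ref{pro:basicEbound}; the renormalization offered in its place does not close it.
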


\begin{proof}
Put $C' :=  C(\d^{-1} + B)$. 
Proposition~\ref{pro:BD2} gives for $i \ge 2$, $a\in\R$, and $x\in [x_0,x_1]$, 
$$
 \E \big(|u_i| \mid F(x) = a \big) \rho_{F(x)}(a) \ \le\ \frac{C'}{|w_i(x)|^{1-\d}} .
$$
On the other hand, Proposition~\ref{pro:basicEbound} gives 
$$
 \E \big(|u_i| \mid F(x) = a \big) \rho_{F(x)}(a) \ \le\ \frac{1}{|w_i(x)|} .
$$
Therefore, since $C' \ge C \ge 1$,
$$
 \E \big( |u_i| \mid F(x) = a \big) \rho_{F(x)}(a) \ \le\ 
 C'\, \frac{1}{\max\{ |w_i(x)|, |w_i(x)|^{1-\d} \} } .
$$
The assertion follows now with $|F'(x)| \le \sum_{i=2}^m |w'_i(x)| |u_i|$. 
\end{proof}

In order to make effective use of Proposition~\ref{th:EBound} 
for certain structured weight functions having product form, 
we introduce the following notion, related to the total variation
$\int_0^1 |q'(x)| dx$ of a function $q$.

\begin{defi}\label{def:def-LV}
The {\em logarithmic variation} of a function 
$q\colon [x_0,x_1] \to (0,\infty)$ is defined as 
\begin{equation*}
\LV(q) := \int_{x_0}^{x_1} \left|\frac{d}{dx} \ln q(x) \right| \, dx = \int_{x_0}^{x_1} \frac{|q'(x)|}{q(x)}\, dx .
\end{equation*}
\end{defi}

The logarithmic variation has the following basic properties, whose proof is obvious.
\begin{lemma}\label{le:new}
\begin{enumerate}
\item If $q$ is monotonically increasing, then 
$\LV(q) = \ln q(x_1) - \ln q(x_0)$. 

\item $LV(q_1\cdot q_2) \le \LV(q_1) + \LV(q_2)$. 

\item $\LV(q^{r}) = |r|\, \LV(q)$ for $r\in\R$. 

\end{enumerate}
\end{lemma}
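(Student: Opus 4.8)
The plan is to reduce all three statements to the elementary calculus of the logarithmic derivative $\frac{d}{dx}\ln q = q'/q$ together with the fundamental theorem of calculus; the entire content of the lemma is the interaction of $\ln$ with products and powers. For part (1), since $q$ takes values in $(0,\infty)$ and is monotonically increasing, the composition $\ln q$ is monotonically increasing as well, so $\frac{d}{dx}\ln q \ge 0$ pointwise; hence $\bigl|\frac{d}{dx}\ln q\bigr| = \frac{d}{dx}\ln q$, and integrating over $[x_0,x_1]$ and applying the fundamental theorem of calculus gives $\LV(q) = \int_{x_0}^{x_1} \frac{d}{dx}\ln q(x)\, dx = \ln q(x_1) - \ln q(x_0)$.

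For part (2), I would use the additivity of the logarithm on positive functions: $\ln(q_1 q_2) = \ln q_1 + \ln q_2$, so $\frac{d}{dx}\ln(q_1 q_2) = \frac{d}{dx}\ln q_1 + \frac{d}{dx}\ln q_2$. The pointwise triangle inequality then yields $\bigl|\frac{d}{dx}\ln(q_1 q_2)\bigr| \le \bigl|\frac{d}{dx}\ln q_1\bigr| + \bigl|\frac{d}{dx}\ln q_2\bigr|$, and integrating over $[x_0,x_1]$ gives $\LV(q_1 q_2) \le \LV(q_1) + \LV(q_2)$. For part (3), I would similarly use $\ln(q^r) = r\ln q$ for $q>0$ and $r\in\R$; differentiating gives $\frac{d}{dx}\ln(q^r) = r\,\frac{d}{dx}\ln q$, hence $\bigl|\frac{d}{dx}\ln(q^r)\bigr| = |r|\,\bigl|\frac{d}{dx}\ln q\bigr|$ pointwise, and integrating produces $\LV(q^r) = |r|\,\LV(q)$.

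The only point deserving a word of care is regularity: for the fundamental theorem of calculus to apply one needs $q$ — and therefore $\ln q$, which is legitimate since $q>0$ — to be absolutely continuous, which is automatic in all our applications where the relevant functions $q_i$ are built from smooth (indeed analytic) functions, and which is tacitly assumed in Definition~\ref{def:def-LV}. Beyond this, no genuine obstacle arises; the lemma is a bookkeeping device whose proof is a direct computation, so I do not expect any step to be hard.
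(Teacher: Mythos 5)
Your proof is correct and is exactly the standard computation the paper has in mind: the paper omits the argument entirely, stating only that the proof of Lemma~\ref{le:new} is obvious, and your reduction to the logarithmic derivative, the triangle inequality, and the fundamental theorem of calculus is precisely that obvious argument. No further comment is needed.
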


For reasons to become clear in the next section, we assign to a finite subset $S\subseteq\N$ 
the sparse sum of squares with ``support'' $S$ defined as the polynomial 
\begin{equation*}
 \a_S(x) := \sum_{s\in S} x^{2s} .
\end{equation*}
We will assume $0\in S$, hence $\a_S(x) \ge 1$ for all $x\in\R$ and $\a_S(0)=1$. 
Moreover, $\a_S(1)=|S|$.

Assume now we have a family of subsets $S_i\subseteq\N$ satisfying $0 \in S_i$ and $|S_i| \le t$, 
for $1\le\ i\le \ell$. We choose 
$1\le k \le \ell$ and define the function 
$$
 q(x) := \Big(\frac{\a_{S_1}(x)\ddd \a_{S_k}(x)}{\a_{S_{k+1}}(x) \ddd \a_{S_\ell}(x)} \Big)^{\frac12} .
$$ 

\begin{prop}\label{pro:new}
Let $d\in\N$ and $0 < \d \le 1$. The function $w\colon [0,1] \to [0,\infty),\, x\mapsto q(x) x^d$ 
satisfies $\LV(q) \le \frac12 \ell \ln t$. 
Moreover, 
$$
 \int_{0}^{1} \frac{|w'(x)|}{\max\{ w(x), w(x)^{1-\d}\}}\, dx \ \le\ 
  2 \LV(q) + kt + \frac{1}{\d} .
$$
\end{prop}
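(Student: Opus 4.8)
The plan is to split the estimate into two parts: the bound on $\LV(q)$ and the bound on the integral involving $w$. For the first part, I would write $q$ as a product of the $\a_{S_i}^{\pm 1/2}$ and use Lemma~\ref{le:new}(2) and (3) to get $\LV(q) \le \frac12 \sum_{i=1}^\ell \LV(\a_{S_i})$. Since $0\in S_i$, the polynomial $\a_{S_i}$ is monotonically increasing on $[0,1]$ with $\a_{S_i}(0)=1$ and $\a_{S_i}(1)=|S_i|\le t$, so Lemma~\ref{le:new}(1) gives $\LV(\a_{S_i}) = \ln\a_{S_i}(1) - \ln\a_{S_i}(0) = \ln|S_i| \le \ln t$. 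Summing yields $\LV(q) \le \frac12\ell\ln t$, the first claim.

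For the integral, the key observation is that on $[0,1]$, since $x\mapsto x^d$ is increasing and each $\a_{S_i}$ is increasing, the whole structure has a predictable monotonicity pattern only for the numerator factors; the denominator factors $\a_{S_{k+1}},\ldots,\a_{S_\ell}$ pull in the opposite direction. I would first handle the logarithmic-derivative contribution: writing $|w'(x)|/\max\{w(x),w(x)^{1-\d}\} \le |w'(x)|/w(x) + |w'(x)|$ wherever one or the other term in the max dominates, and more carefully, splitting the interval $[0,1]$ according to whether $w(x)\ge 1$ or $w(x)<1$. On the region where $w(x)\ge 1$ we have $\max\{w,w^{1-\d}\}=w$, so the integrand is $|w'|/w = |{(\ln w)}'|$, and $\ln w = \ln q + d\ln x$, so $\LV$ on this region is at most $\LV(q) + d\,\LV(x^d\text{ restricted})$ — but the $d\ln x$ term needs care since $\ln x\to-\infty$. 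The cleaner route is: on $\{w\ge 1\}$ the integrand equals $|(\ln q + d\ln x)'| \le |(\ln q)'| + d/x$, but integrating $d/x$ over $\{w\ge1\}$ only where $x$ is bounded away from $0$...

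Actually the better decomposition is to note $w'(x) = q'(x)x^d + q(x)dx^{d-1} = w(x)\big((\ln q)'(x) + d/x\big)$, so $|w'(x)|/w(x) \le |(\ln q)'(x)| + d/x$. On the region $\{w\ge 1\}$ the integrand is exactly $|w'|/w$, contributing at most $\LV(q) + \int_{\{w\ge 1\}} (d/x)\,dx$; here one uses that on $\{w\ge 1\}$ the factor $x^d$ must compensate the (bounded) rational function $q$, but the real input is that $q(x)x^d \ge 1$ forces $x^d \ge 1/q(x)$, and since $q$ is a ratio of the $\a$'s with the numerator degree controlling things, $\int_{\{w\ge1\}} (d/x)\,dx$ is bounded by the total degree of the numerator, which is at most $\sum_{i\le k}(\text{largest exponent in }S_i)$ — no, that can be unbounded. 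The correct bound for this piece must come out as $kt$, suggesting one should instead bound $d\int_{\{w\ge 1\}} dx/x$ via a substitution or via $\a_{S_1}\cdots\a_{S_k}$ having at most $kt$ monomials so $\ln(\a_{S_1}\cdots\a_{S_k})$ contributes the $kt$ by a counting-of-monomials argument rather than a degree argument. On the complementary region $\{w < 1\}$ we have $\max\{w,w^{1-\d}\} = w^{1-\d}$, so the integrand is $|w'|\,w^{\d-1} = \frac1\d |(w^\d)'|$, and $\int_{\{w<1\}}\frac1\d|(w^\d)'|\,dx \le \frac1\d \cdot (\text{total variation of }w^\d) \le \frac2\d\cdot 1$ if $w^\d$ crosses a bounded number of times — but the claimed bound only has $\frac1\d$, so one needs $w^\d$ to be monotone on $\{w<1\}$, i.e. $w$ increasing near $0$ where it is small, which follows because near $x=0$ the $x^d$ factor (with $d\ge 1$) dominates and forces $w$ to increase from $0$.

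The main obstacle I expect is precisely this bookkeeping on the region $\{w\ge 1\}$: extracting the clean bound $2\LV(q) + kt$ rather than something involving polynomial degrees. The resolution should be that $|(\ln q)'| + $ (the contribution of the monomial factor $x^d$) reorganizes as $\le 2\LV(q) + \LV(\a_{S_1}\cdots\a_{S_k}\cdot x^{2d})$ along a monotone stretch, and $\LV$ of that last product along the region where it exceeds $1$ telescopes to $\ln(\text{number of monomials}) \le \ln(kt\cdot(\ldots))$ — so the $kt$ likely arises not as $\ln$ but directly from a cruder estimate $\ln N \le N$. I would therefore (i) carefully identify the one sign change of $w$ (from below $1$ to above $1$) using monotonicity near $0$, (ii) on $\{w\ge1\}$ bound the integrand by $|(\ln q)'| + d/x$ and bound $\int d/x$ over that region by $\ln(\a_{S_1}(1)\cdots\a_{S_k}(1)) \le \ln(t^k) = k\ln t \le kt$ using that $x^d \le w(x)\cdot(\text{max of }1/q) $ and $1/q \le \a_{S_{k+1}}\cdots\a_{S_\ell} \le $ a bounded product — (iii) on $\{w<1\}$ use the $\frac1\d|(w^\d)'|$ form with monotonicity to get $\frac1\d$, and (iv) add an extra $\LV(q)$ slack to absorb the logarithmic-derivative terms appearing in both regions, giving the stated $2\LV(q)+kt+\frac1\d$.
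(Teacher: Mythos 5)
Your first claim ($\LV(q)\le\frac12\ell\ln t$) is exactly the paper's argument and is fine. The second part, however, has a genuine gap, and it sits exactly where you suspected. Your plan splits $[0,1]$ into $\{w\ge 1\}$ and $\{w<1\}$. The piece on $\{w\ge 1\}$ can in fact be salvaged (if $x_*$ is the smallest point with $w(x_*)=1$, then $d\ln(1/x_*)=\ln q(x_*)\le\frac k2\ln t$, so $\int_{\{w\ge1\}}(|(\ln q)'|+d/x)\,dx\le\LV(q)+\frac k2\ln t$). But the piece on $\{w<1\}$ is not justified: you bound it by $\frac1\d\int|(w^\d)'|$ and then need $w^\d$ to be monotone on $\{w<1\}$ to get $\frac1\d$. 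That monotonicity is false in general. The function $q$ is a ratio of sums of squares, so $w=qx^d$ can oscillate arbitrarily often below the level $1$, the set $\{w<1\}$ need not be an interval attached to $0$, and for $d=0$ the "$x^d$ dominates near $0$" argument does not even apply. Without monotonicity, $\int_{\{w<1\}}|(w^\d)'|\,dx$ is a total variation that can be much larger than $1$, and the cruder route $\int_{\{w<1\}}\bigl(|(\ln q)'|w^\d+d\,q^\d x^{d\d-1}\bigr)dx$ only gives $q^\d\le t^{k\d/2}$, producing a factor $t^{k\d/2}/\d$ rather than $1/\d$ --- not the claimed bound. Your fallback ideas for producing the $kt$ ("counting of monomials", "$\ln N\le N$") are also not how the term arises.

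The missing idea is to split at a \emph{deterministic} point $\e$ chosen as a function of $k,t,d$, not at the level set $\{w=1\}$. The paper takes $\e:=e^{-kt/d}$ (treat $d=0$ separately, where the integrand is at most $|q'|/q$ and the bound is immediate) and uses on $[0,\e]$ that $\a_{S_i}(x)\le 1+\e^2t$, hence $q(x)\le(1+\e^2t)^{k/2}$ and
\begin{equation*}
\int_0^\e \frac{|w'|}{w^{1-\d}}\,dx\ \le\ \bigl((1+\e^2t)^{\frac k2}\e^{d}\bigr)^{\d}\Bigl(\LV(q)+\tfrac1\d\Bigr)\ \le\ \LV(q)+\tfrac1\d,
\end{equation*}
since $(1+\e^2t)^{k/2}\e^{d}\le e^{kt/2}e^{-kt}\le 1$; on $[\e,1]$ one uses $|w'|/w\le|(\ln q)'|+d/x$, so that piece is at most $\LV(q)+d\ln(1/\e)=\LV(q)+kt$. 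This is where the $kt$ comes from, and no monotonicity of $w$ is needed anywhere. Your decomposition could likely be repaired, but only by essentially reintroducing this choice of $\e$ to control the region where $w$ is small.
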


\begin{proof}
1. By Lemma~\ref{le:new}, we have 
$\LV(\a_{S_i}) \le \ln t$, since $\a_{S_i}$ is monotonically increasing.
Moreover, 
$\a_{S_i}(0)=1$, and $\a_{S_i}(1) \le t$. Again using Lemma~\ref{le:new}, 
we get 
$\LV(q) \le \frac12 \sum_{i=1}^{\ell} \LV(\a_{S_i}) \le \frac12 \ell \ln t$,
showing the first assertion. 

2. We will choose $\e=\e(k,t,d)\in (0,1)$ and bound 
$$
\int_0^1 \frac{|w'(x)|}{\max\{ w(x), w(x)^{1-\d}\}}\, dx  \ \le\ 
 \int_0^\e \frac{|w'(x)|}{w(x)^{1-\d}} \, dx  + \int_\e^1 \frac{|w'(x)|}{w(x)} \, dx  .
$$
For bounding the left-hand integral, we take logarithmic derivatives to get 
from $w(x)=q(x)x^d$ 
\begin{equation}\label{eq:log-der-bd}
 \frac{w'(x)}{w(x)} =  \frac{q'(x)}{q(x)} +  \frac{d}{x} ,
\end{equation}
and hence 
$$
 \frac{w'(x)}{w(x)^{1-\d}} =  \frac{q'(x)}{q(x)} w(x)^{\d}+  \frac{d}{x}  w(x)^{\d} 
 =  \frac{q'(x)}{q(x)} q(x)^{\d} x^{d\d} + d q(x)^{\d} x^{d\d -1} .
$$
For $0\le x \le \e$ we have 
$\a_{S_i}(x) \le 1 + \e^2(t-1) \le 1 + \e^2 t$,  
and hence 
$q(x) \le (1 + \e^2 t)^{\frac{k}{2}}$. 
We can therefore bound 
$$
 \frac{|w'(x)|}{w(x)^{1-\d}} \ \le \ 
  (1 + \e^2 t)^{\frac{k\d}{2}} \, \e^{d\d} \frac{|q'(x)|}{q(x)} 
 + \frac{1}{\d} (1 + \e^2 t)^{\frac{k\d}{2}} \frac{d}{dx}  x^{d\d} .
$$
Integrating over $[0,\e]$, we obtain
\begin{equation*}
\begin{split}
 \int_0^\e \frac{|w'(x)|}{|w(x)|^{1-\d}} \, dx \ &\le\ 
   (1 + \e^2 t)^{\frac{k\d}{2}} \, \e^{d\d} \int_0^\e \frac{|q'(x)|}{q(x)}\, dx 
   + \frac{1}{\d} (1 + \e^2 t)^{\frac{k\d}{2}} \e^{d\d} \\
   &\le\ \big((1 + \e^2 t)^{\frac{k}{2}} \, \e^d\big)^{\d} \big(\LV(q) + \frac{1}{\d} \big) .
\end{split}
\end{equation*}
We now choose $\e:=e^{-\frac{kt}{d}}$. 
Then $\e^d = e^{-kt}$ and 
$$
  (1 + \e^2 t)^{\frac{k}{2}} \e^d \ \le\  (1+t) ^{\frac{k}{2}} \e^d \ \le\ e^{\frac{kt}{2}} \e^d = e^{-\frac{kt}{2}} \ \le\  1 .
$$ 
With this choice of $\e$, we therefore have 
$$
 \int_0^\e \frac{|w'(x)|}{|w(x)|^{1-\d}} \, dx \ \le\ \LV(q) + \frac{1}{\d} .
$$
We next bound the integral over $[\e,1]$, again using \eqref{eq:log-der-bd}, 
$$
 \int_\e^1 \frac{|w'(x)|}{w(x)} \, dx  \ \le\ 
 \int_\e^1 \frac{|q'(x)|}{q(x)} \, dx  +d \int_\e^1 \frac{dx}{x} \ \le\ 
  \LV(q) + d \ln\frac{1}{\e} 
 =  \LV(q) + kt , 
$$
where we used $d \ln\frac{1}{\e} = kt$ by our choice of $\e$. 
Altogether, we obtain 
$$
 \int_0^1 \frac{|w'(x)|}{\max\{ w(x), w(x)^{1-\d}\}} \, dx  \ \le\ 
  \LV(q) + \frac{1}{\d} +  \LV(q) + kt \ \le\ 2\LV(q) + kt + \frac{1}{\d} 
$$
completing the proof.
\end{proof}

\section{Sum of products of sparse polynomials}
\label{se:GSPSP}

Let us first fix some notation. 
We assign to a finite subset $S\subseteq\Z$ of exponents 
and a collection of coefficients~$u_s$, for $s\in S$,  
the Laurent polynomial 
$$
 f_S(x) := \sum_{s\in S} u_sx^s.
$$
Note that 
$f_S(x^{-1}) = f_{-S}(x)$ and 
$f_{d+S}(x) = x^d f_{S}(x)$ for $d\in\Z$. 
This allows to achieve a normalization by shifting exponents: 
let $d$ be the minimum of $S$ 
and put $S':=S-d$. Then 
$S'\subseteq\N$ and $0 \in S'$. Since
$f_{S}(x) = x^d f_{S'}(x)$, 
the functions $f_S$ and $f_{S'}$ have the same number of 
nonzero roots.

Let now $k_1,\ldots, k_m$ and $t$ be positive integers
and fix supports $S_{ij}\subseteq\Z$ 
for $1\le i\le m$ and $1 \le j \le k_i$ such that $|S_{ij}| \le t$.
We study the number of nonzero real roots of the sum of products
$\sum_{i=1}^m f_{i1}\ddd f_{ik_i}$, 
where $f_{ij} := f_{S_{ij}}$. 
 
By shifting exponents, we assume without loss of generality 
$$
\forall i,j\quad S_{ij} \subseteq\N ,\quad 0\in S_{ij} \quad \mbox{and}\quad |S_{ij}| \le t ,
$$
and consider 
\begin{equation}\label{eq:def-F}
 F(x) := \sum_{i=1}^m f_{i1}(x)\ddd f_{ik_i}(x) x^{d_i} .
\end{equation}
where we allow for a degree pattern $0=d_1 \le d_2 \le \ldots \le d_m$ consisting of natural numbers~$d_i$.

The probabilistic setting is as follows. 
For each $i,j$ and $s\in S_{ij}$ we fix a convenient probability density $\varphi_{ijs}$ on $\R$ 
and assume that there are constants $A,B$ such that 
\begin{equation*}
 \forall i,j,s\quad \|\varphi_{ijs}\|_\infty \le A, \quad \E_{\varphi_{ijs}} \le B .
\end{equation*}
We suppose that we have random univariate polynomials 
\begin{equation}\label{eq:def_fij}
 f_{ij}(x) = \sum_{s\in S_{ij}} u_{ijs} x^s
\end{equation}
with independent real coefficients $u_{ijs}$ having the convenient density $\varphi_{ijs}$. 
The goal is to study the expected number of real zeros of the resulting random polynomial~$F$. 

We assign to the support $S_{ij}$ the following generating functions
\begin{equation}\label{eq:gen-fct}
\a_{ij}(x) := \sum_{s\in S_{ij}} x^{2s} , \quad 
 \b_{ij}(x) := \sum_{s\in S_{ij}} x^s .
\end{equation}
Note that $\E (f_{ij}(x)^2) = \a_{ij}(x)$ if $\E (u_{ijs}^2) =1$, since 
$\E(u_{ijs})=0$.



The next lemma makes sure we can apply Theorem~\ref{th:rice} in the above setting. 

\begin{lemma}\label{le:SingF}
Let $N:= \sum_{i=1}^m \sum_{j=1}^{k_i} |S_{ij}|$ denote the number of parameters.  
For $x\in\R$ consider the polynomial map $F(x)\colon\R^N\to\R$
sending a system $u=(u_{ijs})\in\R^N$ of coefficients to $F(x)$, 
as defined in~\eqref{eq:def-F}. For all $x\in\R$ we have: 
\begin{enumerate}
\item[(a)] $F(x)$ is surjective and thus nonconstant. 
\item[(b)] All nonzero $a\in\R$ are regular values of $F(x)$. 
\item[(c)] $0$ is a singular value of $F(x)$ unless $k_1=\ldots = k_m=1$.
\end{enumerate}
The conditional density $\rho_{F(x)}(a)$ is defined at every nonzero $a\in\R$. 
However, it is undefined at $a=0$, unless $k_1=\ldots k_m=1$.
\end{lemma}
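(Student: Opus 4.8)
The plan is to analyze the Jacobian (here, gradient) of the polynomial map $F(x)\colon\R^N\to\R$ for fixed $x\in\R$, using the product structure of~\eqref{eq:def-F}. Write $F(x) = \sum_{i=1}^m P_i(x) x^{d_i}$ with $P_i(x) := f_{i1}(x)\cdots f_{ik_i}(x)$. For a fixed coefficient $u_{ijs}$ the partial derivative is $\partial_{u_{ijs}} F(x) = x^{d_i+s}\cdot \prod_{j'\ne j} f_{ij'}(x)$, since $\partial_{u_{ijs}} f_{ij}(x) = x^s$ and the other factors are independent of $u_{ijs}$. I would carry this computation out first, as it underlies all three claims.

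For part~(a), the key observation is that $0\in S_{ij}$ for every $i,j$, so $x^{0}=1$ is one of the monomials in each $f_{ij}$; in particular the coefficient $u_{ij0}$ appears. Fixing all coefficients except $u_{110}$, the map $u_{110}\mapsto F(x)$ is affine with leading coefficient $x^{d_1}\prod_{j'\ne 1} f_{1j'}(x) = \prod_{j'\ge 2} f_{1j'}(x)$ (using $d_1=0$); for generic choices of the remaining coefficients this is nonzero (e.g.\ set each $f_{1j'}(x) = 1$ by taking $u_{1j'0}=1$ and all other coefficients zero), so $F(x)$ takes all real values, hence is surjective and nonconstant. This also shows $\{u : \nabla F(x)(u) = 0\}$ has measure zero, matching the hypothesis needed for Theorem~\ref{th:rice}.

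For parts~(b) and~(c), I would characterize the singular points: $\nabla F(x)(u) = 0$ iff every partial derivative vanishes, i.e.\ $x^{d_i+s}\prod_{j'\ne j} f_{ij'}(x) = 0$ for all $i,j,s$. Splitting into the cases $x=0$ and $x\ne 0$, and using that for fixed $i$ the collection of products $\prod_{j'\ne j} f_{ij'}(x)$ over $j$ all vanish iff at least two of the $f_{ij'}(x)$ vanish (when $k_i\ge 2$) or $f_{i1}(x)=0$ handled separately, one sees that at any singular point $u$ each block $i$ with $k_i\ge 2$ must have $P_i(x)=0$ (two factors vanish), and each block with $k_i=1$ must have $f_{i1}(x)=0$; in all cases $F(x)(u)=\sum_i P_i(x)x^{d_i}=0$. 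Hence every singular value equals~$0$, giving~(b). Conversely, if some $k_i\ge 2$, I exhibit a singular point mapping to~$0$: for the block~$i$ with $k_i\ge 2$ choose coefficients so that $f_{i1}(x)=f_{i2}(x)=0$ (possible since each is a nonconstant affine function of its own coefficients, given $x$ fixed — here a tiny case distinction $x=0$ versus $x\ne 0$ and $|S_{ij}|\ge 1$ is needed, but $0\in S_{ij}$ and the support has size at least one makes this routine), and set all other blocks to have $P_{i'}(x)=0$ likewise; then $\nabla F(x)(u)=0$ and $F(x)(u)=0$, so $0$ is a singular value, giving~(c). The last sentence of the statement then follows from the discussion in Section~\ref{se:cond-exp}: $\rho_{F(x)}(a)$ and the conditional expectation are defined exactly at regular values~$a$, so at all nonzero~$a$ by~(b), and at $a=0$ only when there are no singular points, i.e.\ when all $k_i=1$, by~(c).

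The main obstacle is part~(c): constructing an explicit singular point over~$0$ requires showing that two of the sparse polynomials $f_{ij}(x)$ in a common block can be made to vanish simultaneously at the given~$x$ while keeping them as independent affine functions of disjoint sets of coefficients — this is genuinely easy for $x\ne 0$ (any nonzero monomial can cancel the constant term) but needs the separate remark that at $x=0$ every $f_{ij}(0)=u_{ij0}$, so one simply sets those constant coefficients to zero. Everything else is a direct unwinding of the product rule and the definition of regular/singular value.
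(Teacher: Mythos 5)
Your overall route is the same as the paper's: differentiate \eqref{eq:def-F} by the product rule, note that the block gradients $\nabla_u f_{ij}(x)$ live in disjoint sets of coordinates and are nonzero because $0\in S_{ij}$, conclude that a singular point forces every block's contribution to vanish so that the only possible singular value is $0$, and prove (a) by restricting to a subfamily on which $F(x)$ is visibly surjective (the paper specializes $u_{ijs}=0$ for $s\ne 0$; you vary $u_{110}$ --- same idea).

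There is, however, a concrete flaw in your case analysis for (b)/(c), coming from the factors $x^{d_i+s}$ that you correctly compute and then drop. For a block with $k_i=1$ the partial derivatives are $\partial_{u_{i1s}}F(x)=x^{d_i+s}$, which do not involve the coefficients at all; their vanishing forces $x^{d_i}=0$ (since $0\in S_{i1}$), \emph{not} $f_{i1}(x)=0$ as you assert. Part (b) survives, but for a different reason than you give: such a block contributes $f_{i1}(x)\,x^{d_i}=0$ to the value because $x^{d_i}=0$; similarly, a block with $k_i\ge 2$ at $x=0$ with $d_i>0$ need not have two vanishing factors, yet still contributes $0$ through $x^{d_i}$. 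The real casualty is your argument for (c): ``set all other blocks to have $P_{i'}(x)=0$ likewise'' cannot kill the gradient contribution of a block with $k_{i'}=1$ and $x^{d_{i'}}\ne 0$ --- for instance the first block whenever $k_1=1$, since $d_1=0$ gives $\partial_{u_{110}}F(x)=1$ identically; in that situation there are no singular points at all and $0$ is in fact a regular value, so your construction of a singular point over $0$ breaks down. To be fair, the paper's own proof of (c) (taking $u=0$ as the singular point) has exactly the same blind spot in this mixed case, and (c), together with the final sentence of the lemma, is only accurate when no block with $k_i=1$ has $x^{d_i}\ne 0$ (e.g.\ when all $k_i\ge 2$). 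Since only (a) and (b) feed into Theorem~\ref{th:rice} and the main results, this does not damage anything downstream, but your write-up should either restrict (c) accordingly or handle the $k_i=1$ blocks separately rather than pretending they can be zeroed out.
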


\begin{proof} We fix $x\in\R$. 
(a) After specializing $u_{ijs}:=0$ for $s\ne 0$, $F(x)$ becomes the function 
mapping $(u_{ij0})$ to $\sum_{i=1}^m u_{i10} \ddd u_{ik_i0}$, which clearly is a surjective function. 

(b) The $f_{ij}(x)$ are linear functions in disjoints sets of variables and all have a nonzero coefficient.
Therefore, their gradients, viewed as vectors in $\R^N$, are linearly independent. 
Suppose now $u=(u_{ijs})\in\R^N$ is a singular point of $F(x)$. We have 
(dropping the argument~$u$) 
$$
 \nabla F(x) = \sum_{i=1}^m f_{i,1}\ddd f_{i,j-1}\nabla f_{i,j} f_{i,j+1}\ddd f_{i,k_i} .
$$ 
Since the $\nabla f_{i,j}$ are linearly independent, we must have 
$f_{i,1}\ddd f_{i,j-1}\nabla f_{i,j} f_{i,j+1}\ddd f_{i,k_i} =0$ for all $i,j$. 
This means that for all $i$ there are different $j$ and $j'$ such that 
$f_{ij}(x)=0$ and $f_{ij'}(x)=0$.  
In particular, we have $F(x)(u)=0$ for such $u$ and hence 
$0$ is the only possible singular value of $F(x)$. 
If $k_i>1$ for some~$i$, then $u=0$ is a singular point of $F(x)$ 
and thus $0$ is a singular value. 

(c) This follows from the reasoning in (b).
\end{proof}

For applying Theorem~\ref{th:rice}, 
the main work consists now in exhibiting a ``small'' integrable function $g(x)$ 
that upper bounds  the conditional expectations. 
We embark on this next.

\subsection{Products of sparse polynomials}

We analyze here the case $m=1$ of one product
$$
 g(x) := f_1(x)\ddd f_k(x) 
$$
of random $t$-sparse polynomials 
$f_j(x) = \sum_{s\in S_j} u_{js} x^s$, where for convenience, we drop the 
index $i=1$. In particular, we write 
$\b_{j}(x) := \sum_{s\in S_{j}} x^s$. 
So we assume $0\in S_j$ and $|S_j| \le t$ for all $j$. 

By Lemma~\ref{le:SingF}, every nonzero $a\in\R$ is a regular value of  
the map $g(x)\colon\R^N\to\R$, thus the conditional density $\rho_{g(x)}(a)$ 
is well defined and so are the conditional expectations with respect to the 
condition $g(x) = a$, provided $\rho_{g(x)}(a)>0$.  

\begin{lemma}\label{le:PProducts}
For all $x\in\R$ and all nonzero $a\in\R$ we have 
$$
 \E\Big( \left|\frac{f'_j(x)}{f_j(x)} \right| \mid g(x) = a \Big) \rho_{g(x)}(a) \ \le\  AB \,  \frac{\b'_j(x) }{|a|} ,
$$
$$
 \E (|g'(x)| \mid g(x) = a ) \rho_{g(x)}(a) \ \le\ AB\,\sum_{j=1}^k \b'_j(x) .
$$
\end{lemma}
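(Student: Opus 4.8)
The plan is to deduce the second inequality from the first via the identity $g'=g\sum_j f'_j/f_j$, and to prove the first by pivoting the conditioning machinery of Section~\ref{se:cond-exp} on the constant coefficient $u_{j0}$ of the $j$-th factor. For the reduction: fix $x$ and a nonzero $a$, and set $H:=g(x)^{-1}(a)\subseteq\R^N$, which by Lemma~\ref{le:SingF} (with $m=1$) is a smooth hypersurface since $a\ne 0$ is a regular value. At every point of $H$ all the factors $f_l(x)$ are nonzero (their product is $a$), so there $g'(x)=\sum_{j=1}^k f'_j(x)\prod_{l\ne j}f_l(x)=g(x)\sum_{j=1}^k f'_j(x)/f_j(x)$, whence $|g'(x)|\le |a|\sum_{j=1}^k|f'_j(x)/f_j(x)|$ on $H$. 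Integrating this against $\rho/\|\nabla g(x)\|$ over $H$ and using \eqref{eq:Erho} together with linearity gives $\E(|g'(x)|\mid g(x)=a)\,\rho_{g(x)}(a)\le |a|\sum_{j=1}^k\E(|f'_j(x)/f_j(x)|\mid g(x)=a)\,\rho_{g(x)}(a)$, and the factor $|a|$ cancels the $1/|a|$ in the first bound.

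For the first bound I would fix $j$, $x$, and $a\ne 0$ with $\rho_{g(x)}(a)>0$ (otherwise the left side is $0$ and there is nothing to prove). Since $0\in S_j$, write $f_j(x)=u_{j0}+\tilde f_j(x)$ with $\tilde f_j(x):=\sum_{s\in S_j,\,s\ge1}u_{js}x^s$. Two structural facts drive the argument: $\partial_{u_{j0}}g(x)=\prod_{l\ne j}f_l(x)=:Q_j$, and $f'_j(x)=\sum_{s\in S_j,\,s\ge1}u_{js}\,s\,x^{s-1}$ does \emph{not} involve $u_{j0}$. Since $Q_j\ne 0$ everywhere on $H$, the equation $g(x)=a$ solves for $u_{j0}$ as a function of the remaining $N-1$ coordinates and parametrizes $H$ off the null set $\{Q_j=0\}$, so \eqref{eq:Elin} applies with pivot variable $u_{j0}$. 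Because $|f_j(x)|\cdot|Q_j|=|a|$ on $H$, the quantity $|f'_j(x)/f_j(x)|$ divided by $|\partial_{u_{j0}}g(x)|=|Q_j|$ equals $|f'_j(x)|/|a|$, so
$$\E\Big(\Big|\tfrac{f'_j(x)}{f_j(x)}\Big|\ \Big|\ g(x)=a\Big)\,\rho_{g(x)}(a)\ =\ \frac{1}{|a|}\int_{\R^{N-1}}|f'_j(x)|\,\rho\ \prod_{(l,s)\ne(j,0)}du_{ls},$$
with $u_{j0}$ replaced inside $\rho$ by the solved value $a/Q_j-\tilde f_j(x)$.

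Then I would bound the factor $\varphi_{j0}(\,\cdot\,)\le\|\varphi_{j0}\|_\infty\le A$, which removes $u_{j0}$ from the integrand. The remainder depends only on $u_{js}$ with $s\in S_j$, $s\ge1$; estimating $|f'_j(x)|\le\sum_{s\in S_j,\,s\ge1}s|x|^{s-1}|u_{js}|$ and integrating term by term, each $\int_\R|u_{js}|\varphi_{js}(u_{js})\,du_{js}=\E_{\varphi_{js}}\le B$ while all the other densities integrate to $1$. This yields the upper bound $\frac{AB}{|a|}\sum_{s\in S_j,\,s\ge1}s|x|^{s-1}=\frac{AB}{|a|}\b'_j(x)$ for $x\ge0$ (for general $x$ one obtains the same estimate with $\b'_j(x)$ replaced by $\sum_{s\in S_j,\,s\ge1}s|x|^{s-1}$), which is the claim.

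The hard part is not the computation but the bookkeeping legitimizing the pivot on $u_{j0}$: it is the hypothesis $a\ne 0$ that forces $Q_j\ne 0$ throughout $H$, making the chart valid and exhausting $H$ up to measure zero as \eqref{eq:Elin} demands, and it is the accident that $f'_j(x)$ is free of $u_{j0}$ that lets the residual integral split into a product. The regularity and measure-zero technicalities are all covered by Lemma~\ref{le:SingF}, and only $\|\varphi_{js}\|_\infty\le A$ and $\E_{\varphi_{js}}\le B$ enter — the monotonicity and symmetry built into ``convenient'' play no role in this lemma.
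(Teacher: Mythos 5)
Your proof is correct, and it takes a genuinely more direct route than the paper's. Both arguments hinge on the same structural facts — that $0\in S_j$ makes $u_{j0}$ a legitimate pivot and that $f'_j(x)$ is free of $u_{j0}$ — but they deploy them at different levels. The paper first pushes forward to the $2k$ variables $(y_j,z_j)=(f_j(x),f'_j(x))$, applies \eqref{eq:Erho} on the product hypersurface $C_a\subset\R^k$, peels off an inner conditional expectation $\E(|z_1|\mid y_1)\psi_1(y_1)$, invokes Proposition~\ref{cor:Varbound} for that factor, and then evaluates the remaining $C_a$-integral to $1/|a|$. You skip the change of variables entirely: you apply \eqref{eq:Elin} once, directly in $\R^N$ with pivot $u_{j0}$, and the $1/|a|$ drops out immediately from the cancellation $\bigl|\tfrac{f'_j(x)}{f_j(x)}\bigr|/|Q_j|=|f'_j(x)|/|a|$, since $|f_j(x)Q_j|=|a|$ on $H$; the rest is a term-by-term integral of $|f'_j(x)|$ against the product density, with one $A$ from $\|\varphi_{j0}\|_\infty$ and one $B$ from $\E_{\varphi_{js}}$. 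The reduction of the second inequality to the first via $g'/g=\sum_j f'_j/f_j$ is the same in both. What the paper's detour buys is modularity — the $(y,z)$-machinery and the $C_a$-integral are reused in the proof of Theorem~\ref{th:SP1} and in Proposition~\ref{th:mainbound} — whereas your version is shorter and self-contained for this lemma. One small point you handle correctly: the bound you actually obtain is $\frac{AB}{|a|}\sum_{s\ge 1}s|x|^{s-1}$, which equals $\frac{AB}{|a|}\b'_j(x)$ only for $x\ge 0$; this is harmless since the lemma is only used on $[0,1]$, and the paper's own statement carries the same implicit restriction.
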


\begin{proof}
Fix $x\in\R$ and   
consider the random variables $y_j:=f_j(x)$ and $z_j:=f'_j(x)$.
If $\psi_j(y_j,z_j)$ denotes the joint density of $(y_j,z_j)$, then 
by the independence of $(y_1,z_1),\ldots,(y_k,z_k)$, 
the probability density of $(y,z)\in\R^k\times\R^k$ is given by 
 $\psi_1(y_1,z_1)\cdot\ldots\cdot \psi_k(y_k,z_k)$.
Note that 
$g(x)=y_1\ddd y_k$.

We are going to apply some insights from Section~\ref{se:prelim}. 
Namely, we apply Equation~\eqref{eq:Erho} to the function 
$f\colon\R^k\times\R^k\to\R,\, (y,z) \mapsto y_1\cdot\ldots\cdot y_k$ 
and the random variable $Z(y,z) := |\frac{z_1}{y_1}|$.
For nonzero $a\in\R$ we consider the hypersurface 
$C_a := \{ y \in\R^k : y_1\cdots y_k = a\}$ 
and note that 
$\|\nabla(y_1\ddd y_k)\| = |a| (\sum_{i=1}^k y_i^{-2})^{\frac12}$.
We obtain 
\begin{equation}\label{eq:OV}
\begin{split}
  & \E\left(\left| \frac{z_1}{y_1} \right| \ \Big|\ y_1\ddd y_k = a \right) \rho_{g(x)}(a) \\
  &= \int_{C_a\times\R^k} \left| \frac{z_1}{y_1} \right| \psi_1(y_1,z_1) \ddd\psi_k(y_k,z_k)\,
   \frac{d(C_a\times\R^k)}{|a| \big(\sum_{i=1}^k y_i^{-2}\big)^{\frac12}} \\
  &= \int_{y\in C_a} \left[ \int_{z\in\R^k} \left| \frac{z_1}{y_1} \right| 
 \psi_1(y_1,z_1) \ddd\psi_k(y_k,z_k)\, dz_1\cdots dz_k\, \right] \frac{dC_a}{|a| \big(\sum_{i=1}^k y_i^{-2}\big)^{\frac12}} . 
\end{split}
\end{equation}
For fixed $y\in C_a$, 
the inner integral can be simplified to
\begin{equation*}
\begin{split}
  &  \int_{z_1\in\R} \left| \frac{z_1}{y_1} \right| \psi_1(z_1,y_1)\, \left[
  \int_{(z_2,\ldots,z_k)\in\R^{k-1}} \psi_2(y_2,z_2)\ddd \psi_k(y_k,z_k)\, dz_2\cdots dz_k\ \right] \, dz_1 \\
  =& \int_{z_1\in\R} \left| \frac{z_1}{y_1} \right| \psi_1(y_1,z_1)\, dz_1\,  \cdot \psi_2(y_2)\ddd\psi_k(y_k) ,
\end{split}
\end{equation*}
with the marginal densities $\psi_i$ defined by 
$\psi_i(y_i) := \int_\R \psi_i(y_i,z_i) dz_i$. 
By~\eqref{eq:marg}
we have for $y_1\in\R^*$, 
$$
 \int_{z_1\in\R} \left| \frac{z_1}{y_1} \right| \psi_1(y_1,z_1) \, dz_1 
 = \E \Big(\left| \frac{z_1}{y_1} \right| \ \Big|\ y_1 \Big) \psi_1(y_1) .
$$
We thus obtain from \eqref{eq:OV}
\begin{equation}\label{eq:FOUR}
\begin{split}
  & \E\left(\left| \frac{z_1}{y_1} \right| \ \Big|\ y_1\ddd y_k = a \right) \rho_{g(x)}(a) \\
  & = \int_{y \in C_a} \E\Big(\left| \frac{z_1}{y_1} \right| \ \Big|\ y_1 \Big) \psi_1(y_1) \, \psi_2(y_2)\ddd \psi_k(y_k)\, 
                 \frac{dC_a}{|a| \big(\sum_{i=1}^k y_i^{-2}\big)^{\frac12}} . 
\end{split}
\end{equation}

Proposition~\ref{cor:Varbound} applied to the random linear combination 
$f_j(x) = \sum_{s\in S_j} u_{js} x^s$ implies 
$$
 \E (|z_1| \mid y_1)\, \psi_1(y_1) \ \le\ AB\,  \b'_1(x) . 
$$ 
Here we essentially use that, 
due to the assumption $0\in S_j$,
the polynomial $f_j(x) = u_{j0} + \ldots$ has a constant term.  
Using this bound, we get from \eqref{eq:FOUR},
\begin{equation*}
\begin{split}
 & \E\left(\left|\frac{z_1}{y_1} \right| \ \Big|\ y_1\ddd y_k = a \right) \rho_{g(x)}(a) \\
& \le\  AB\,  \cdot \b'_1(x) 
     \int_{y \in C_a} \frac{1}{|y_1|} \psi_2(y_2) \ddd \psi_k(y_k)\, 
          \frac{dC_a}{|a| \big(\sum_{i=1}^k y_i^{-2}\big)^{\frac12}} .
\end{split}
\end{equation*}
Using~\eqref{eq:CErho}, the  integral over $C_a$ simplifies to
\begin{eqnarray*}
\lefteqn{\int_{(y_2,\ldots,y_k)\in\R^{k-1}} \frac{|y_2\ddd y_k|}{|a|} \cdot \psi_2(y_2) \ddd \psi_k(y_k) 
  \frac{dy_2\cdots dy_k}{|y_2|\cdots |y_k|} } \\
 &=& \frac{1}{|a|} \int_\R \psi_2(y_2) dy_2 \ddd \int_\R \psi_k(y_k) dy_k  = \frac{1}{|a|} .
\end{eqnarray*}
Therefore, indeed
$$
 \E \Big(\left|\frac{f'_1(x)}{f_1(x)} \right| \ \Big|\ g(x) = a \Big)  \rho_{g(x)}(a)\ \le\  
  AB \, \frac{\b'_1(x) }{|a|} .
$$
The same argument works with $f_j$ instead of $f_1$, 
so that we have proved the first statement. 

In order to show the second statement, 
taking logarithmic derivatives, we get 
$$
 \frac{g'(x)}{g(x)} = \sum_{j=1}^k \frac{f'_j(x)}{f_j(x)} ,
$$
hence 
$$
 \left|\frac{g'(x)}{g(x)} \right| \ \le\  \sum_{j=1}^k \left|\frac{f'_j(x)}{f_j(x)} \right| .
$$ 
Therefore, 
$$
 \E\Big(\left|\frac{g'(x)}{g(x)} \right| \ \Big|\ g(x) = a \Big)  \ \le\  
  \sum_{j=1}^k \E\Big(\left|\frac{f'_j(x)}{f_j(x)} \right| \Big| g(x) = a \Big) 
$$
hence 
$$
 \E\Big( |g'(x)| \mid g(x) = a \Big) \rho_{g(x)}(a) \ \le\  
  \sum_{j=1}^k |a|\, \E\Big(\left|\frac{f'_j(x)}{f_j(x)} \right| \ \Big|\ g(x) = a \Big) \rho_{g(x)}(a) .
$$
Inserting here the bound of the first statement yields the second statement.
\end{proof}

\subsection{Polynomials with nonzero constant coefficient}
\label{se:SPSP}

We deal here with the special case $d_1=\ldots =d_m=0$.  
So we are in the situation where all the $f_{ij}$ almost surely have a nonzero constant coefficient.
It turns out that this situation is way easier to analyze than the general case. 

The next result shows that the real $\tau$-conjecture is true on average 
under the assumption $d_1=\ldots =d_m=0$, if we only count zeros in $[0,1]$.
It is worthwile noting that this results holds for any convenient distribution of the 
coefficients $u_{ijs}$, as long as they are independent. 

 
\begin{thm}\label{th:SP1}
Under the assumptions from the beginning of Section~\ref{se:GSPSP}, 
the random polynomial 
$F= \sum_{i=1}^m f_{i1}\ddd f_{ik_i}$
satisfies 
$$
 \E \#\{ x\in [0,1] : F(x) = 0\} \ \le\ A B \, (k_1 + \ldots + k_m) (t-1). 
$$
\end{thm}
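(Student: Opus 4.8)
The plan is to apply the Rice inequality (Theorem~\ref{th:rice}) on the interval $[0,1]$, so the main task is to produce an integrable bound $g(x)$ on the quantity $\E(|F'(x)| \mid F(x)=a)\,\rho_{F(x)}(a)$ valid for (almost) all nonzero $a$ near $0$. The key idea is to write $F$ as a random linear combination and invoke Lemma~\ref{le:PProducts}. Concretely, for fixed $x$ set $g_i(x) := f_{i1}(x)\cdots f_{ik_i}(x)$ so that $F(x) = \sum_{i=1}^m g_i(x)$; since the coefficient families $(u_{ijs})_{j,s}$ are independent across $i$, the random variables $g_1(x),\ldots,g_m(x)$ are independent. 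Because each $f_{ij}$ has a nonzero constant term ($0\in S_{ij}$), evaluating at $x=0$ gives $g_i(0)=\prod_j u_{ij0}$, which is a.s.\ nonzero, so $F(0)\ne 0$ a.s.; more importantly, the structure of a sum of independent products is exactly what Lemma~\ref{le:PProducts} was built to handle.

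The first step is to bound $|F'(x)|$. Taking the derivative, $F'(x)=\sum_{i=1}^m g_i'(x)$, so $|F'(x)| \le \sum_{i=1}^m |g_i'(x)|$, and hence
\[
 \E(|F'(x)| \mid F(x)=a)\,\rho_{F(x)}(a) \ \le\ \sum_{i=1}^m \E(|g_i'(x)| \mid F(x)=a)\,\rho_{F(x)}(a).
\]
For each $i$, I would condition further: fixing the value $g_i(x)=b$ and using that $F(x)=b + \sum_{i'\ne i} g_{i'}(x)$ with the second summand independent of the $i$-th block, one reduces the conditional expectation $\E(|g_i'(x)| \mid F(x)=a)\,\rho_{F(x)}(a)$ to an integral over $b$ of $\E(|g_i'(x)| \mid g_i(x)=b)\,\rho_{g_i(x)}(b)$ against the density of $\sum_{i'\ne i} g_{i'}(x)$ evaluated at $a-b$. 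By Lemma~\ref{le:PProducts} applied to the single product $g_i$, we have $\E(|g_i'(x)| \mid g_i(x)=b)\,\rho_{g_i(x)}(b) \le AB\sum_{j=1}^{k_i}\b_{ij}'(x)$, a bound \emph{independent of $b$}. Integrating out the density of $\sum_{i'\ne i}g_{i'}(x)$ (which has total mass $1$) therefore yields
\[
 \E(|g_i'(x)| \mid F(x)=a)\,\rho_{F(x)}(a) \ \le\ AB\sum_{j=1}^{k_i}\b_{ij}'(x),
\]
uniformly in $a$. Summing over $i$ gives $g(x) := AB\sum_{i=1}^m\sum_{j=1}^{k_i}\b_{ij}'(x)$ as the desired Rice bound.

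The final step is to integrate $g$ over $[0,1]$. Since $0\in S_{ij}$ and $|S_{ij}|\le t$, the polynomial $\b_{ij}(x)=\sum_{s\in S_{ij}}x^s$ is monotonically increasing on $[0,1]$ with $\b_{ij}(0)=1$ (the constant term from $s=0$) and $\b_{ij}(1)=|S_{ij}|\le t$, so $\int_0^1 \b_{ij}'(x)\,dx = \b_{ij}(1)-\b_{ij}(0) = |S_{ij}|-1 \le t-1$. Hence Theorem~\ref{th:rice} gives $\E\#\{x\in[0,1]:F(x)=0\} \le AB\sum_{i,j}(|S_{ij}|-1) \le AB(k_1+\cdots+k_m)(t-1)$, as claimed. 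The one point requiring care is the passage from the conditional expectation given $F(x)=a$ to the conditional expectation given $g_i(x)=b$: one must justify the disintegration of $\rho_{F(x)}$ through the independent summands and check that Lemma~\ref{le:PProducts}'s hypotheses (every nonzero $a$ is a regular value of $g_i(x)$, densities convenient and bounded) are in force — this is exactly where $0\in S_{ij}$ and the genericity statements of Lemma~\ref{le:SingF} are used, and it is the only place where anything beyond bookkeeping happens.
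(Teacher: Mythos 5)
Your proposal is correct and follows essentially the same route as the paper: decompose $F = \sum_i g_i$ into independent products, bound $|F'| \le \sum_i |g_i'|$, disintegrate the conditional expectation $\E(|g_i'(x)| \mid F(x)=a)\,\rho_{F(x)}(a)$ through the independent summand $g_i(x)=b$, apply Lemma~\ref{le:PProducts} to get the $b$-independent bound $AB\sum_j\b_{ij}'(x)$, and integrate over $[0,1]$ using $\b_{ij}(1)-\b_{ij}(0)\le t-1$. You correctly flag the disintegration identity as the one nontrivial step; this is precisely what the paper establishes as its claim~\eqref{eq:claim-g-gen} via the coarea formula.
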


\begin{proof}
Lemma~\ref{le:SingF} guarantees that $(u,x)\mapsto g(x)(u)$ 
satisfies the assumptions of Theorem~\ref{th:rice}. 

We are going to show that forall $x\in\R$ and all nonzero $a\in\R$,  
\begin{equation}\label{eq:fifi}
 \E( |F'(x)| \mid F(x) = a ) \rho_{F(x)}(a) \ \le\ A B \, \sum_{i=1}^m \sum_{j=1}^{k_i} \b'_{ij}(x),
\end{equation}
where we recall that $\b_{ij}(x)$ was defined in~\eqref{eq:gen-fct}. 
Then, taking into account Lemma~\ref{le:SingF} and 
$\int_0^1 \b'_{ij}(x) dx = \b_{ij}(1) - \b_{ij}(0) \le t -1$, 
the assertion will follow by Theorem~\ref{th:rice}.

Towards proving \eqref{eq:fifi}, we 
put $g_i(x) := f_{i1}(x)\ddd f_{ik_i}(x)$. 
Then we have $F(x) = g_1(x) + \ldots + g_m(x)$ and hence 
$|F'(x)| \le \sum_{i=1}^m |g'_i(x)|$. Therefore, 
$$
 \E (|F'(x)| \mid F(x) = a ) \rho_{F(x)}(a) \ \le\ 
\sum_{i=1}^m  \E (|g'_i(x)| \mid F(x) = a ) \rho_{F(x)}(a) .
$$
Lemma~\ref{le:PProducts} gives 
for nonzero $b\in\R$ that 
\begin{equation}\label{eq:EUG}
 \E (|g'_i(x)| \mid g_i(x) = b ) \rho_{g_i(x)}(b) \ \le\ AB\,\sum_{j=1}^{k_i} \b'_{ij}(x) .
\end{equation}
For proving \eqref{eq:fifi}, it suffices to show that the same bound holds 
when conditioning on $F(x)=a$, namely
\begin{equation}\label{eq:claim-g'}
 \E (|g'_i(x)| \mid F(x) = a ) \rho_{F(x)}(a) \ \le\ AB\,\sum_{j=1}^{k_i} \b'_{ij}(x) .
\end{equation}

For showing this, we fix $1\le i \le m$.
We put $y_i := g_i(x)$ and $z_i := g'_i(x)$, and denote by 
$\psi_i(y_i,z_i)$ the joint density of $(y_i,z_i)$. Moreover, 
we write $\psi_i(y_i) := \int_\R \psi_i(y_i,z_i)\, dz_i$ for 
the first marginal distribution. 
By construction, the pairs $(y_1,z_1),\ldots,(y_m,z_m)$ are independent. 
We claim that 
\begin{equation}\label{eq:claim-g-gen}
 \E (|z_1| \mid y_1 + \ldots +y_m = a )\, \rho_{y_1+\ldots +y_m}(a) = 
  \int_\R \E (|z_1| \mid y_1 = b )\, \psi_1(b)\,\rho_{y_2+\ldots + y_m}(a-b)\, db .
\end{equation}
It remains to prove this claim, since it implies, combined with \eqref{eq:EUG}, that 
\begin{equation*}
\begin{split}
& \E \Big(|g'_1(x)| \ \Big|\ \sum_{j=1}^m g_j(x) = a \Big) \rho_{F(x)}(a)   =  
 \int_{b\in\R} \E \big(|g'_1(x)| \ \big|\  g_1(x) = b \big) \rho_{g_1(x)}(b)\, \rho_{\sum_{j\ne 1}g_j(x)}(a-b)\, db \\ 
& \le\ AB\, \sum_{j=1}^{k_i} \b'_{1j}(x) \int_{b\in\R} \rho_{\sum_{j\ne i} g_j(x)}(a-b)\, db = AB\, \sum_{j=1}^{k_i} \b'_{1j}(x) ,
\end{split}
\end{equation*}
which is \eqref{eq:claim-g'} (for w.l.o.g.\ $i=1$). 

We deduce now the claim \eqref{eq:claim-g-gen}. By \eqref{eq:Elin} we have 
\begin{equation}\label{eq:aintegral}
\begin{split}
 &\hspace{6ex}\E (|z_1| \mid y_1 + \ldots +y_m = a )\, \rho_{y_1+\ldots +y_m}(a) \\
 &=\int_{\R^{m-1}} \int_{\R^{m}} |z_1| \psi_1(a-y_2-\ldots -y_m,z_1) \psi_2(y_2,z_2)\cdots\psi_m(y_m,z_m)\, dz_1\cdots dz_m \, dy_2\cdots dy_m \\
 &=\int_{\R^{m-1}} \left[\int_{\R} |z_1| \psi_1(a-y_2-\ldots -y_m,z_1)\, dz_1\, \right] \psi_2(y_2)\cdots\psi_m(y_m)\, dy_2\cdots dy_m .
\end{split}
\end{equation}
For fixed $y_2,\ldots,y_m$ and $b = a- y_2 \ldots -\ldots -y_m$, the 
expression in parenthesis equals 
$$
 \E (|z_1| \mid y_1 = b)\, \psi_1(b) .
$$
By applying Proposition~\ref{pro:coarea} to the map 
$T\colon\R^{m-1}\to\R, (y_2,\ldots,y_m) \mapsto a-y_2 -\ldots -y_m$, 
taking into account Lemma~\ref{le:dH_loc_coord}, 
we can express the above integral~\eqref{eq:aintegral} as 
$$
 \int_\R \E (|z_1| \mid y_1 = b)\, \psi_1(b) \left[\int_{T^{-1}(b)} \psi_2(y_2)\cdots\psi_m(y_m)\, \frac{dT^{-1}(b)}{\sqrt{m-1}} \right]\, db .
$$
By definition, the expression in the parenthesis equals the pushforward density 
$\rho_{y_2+\ldots y_m}(a-b)$, 
which shows the claim~\eqref{eq:claim-g-gen} and finishes the proof.
\end{proof}

\subsection{Proof of main result}
\label{se:SPSPgeneral}

We specialize the setting described at the beginning of Section~\ref{se:GSPSP} 
to the case where all the coefficients $u_{ijs}$ are standard Gaussian. 

For $1\le i \le m$ we define the auxiliary analytic weight functions
\begin{equation}\label{def:qi}
 q_i(x) := \prod_{j=1}^{k_i} \left(\frac{\a_{ij}(x)}{\a_{1j}(x)}\right)^{\frac12} ,
\end{equation}
and recall that $\a_{ij}(x)$ was defined in~\eqref{eq:gen-fct}. 
Note that $q_i(x) >0$ for all $x\in\R$ and $q_1(x)=1$. 
We define the analytic weight function 
$w_i(x) := q_i(x) x^{d_i}$ for $1\le i \le m$ and note that $w_1(x)=1$. 

We will reduce the problem of counting the expected number of zeros of the structured 
random polynomial $F(x)$ to the study of the expected number of zeros of random 
linear combinations 
\begin{equation*}
 R(x) := \sum_{i=1}^m u_i q_i(x) x^{d_i} = u_1 + u_2 q_2(x)x^{d_2} + \ldots + u_m q_m(x) x^{d_m},
\end{equation*}
of the weight functions~$w_i(x)$, where 
the coefficients $u_i$ are 
independent and follow the distribution $\varpi_{k_i}$ of 
a product of $k_i$ standard Gaussians
(cf.\ Section~\ref{se:prod_gauss}). 
We note that, due to Lemma~\ref{le:LIN}, for almost all  $u\in\R^m$, 
the function~$R$ has only finitely many zeros in $[x_0,x_1]$. 
Thus it satisfies the assumptions stated at the beginning of Section~\ref{se:zerosLC}.

\begin{prop}\label{th:mainbound}
For $x\in\R$ and nonzero $a\in\R$, we have 
\begin{eqnarray*}
 \E\big( |F'(x)| \mid F(x) = a \big) \rho_{F(x)}(a) &\le&\frac{1}{\sqrt{2\pi}} \sum_{i=1}^m \sum_{j=1}^{k_i} \b'_{ij}(x) 
 +\sum_{i=1}^m \left| \frac{q'_i(x)}{q_i(x)} \right| \\
& & +\phantom{x} \E\Big( |R'(x)| \mid R(x) = a \Big) \rho_{R(x)}(a) .
\end{eqnarray*}
\end{prop}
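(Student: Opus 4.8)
The plan is to split $F'(x)$ into a main term and two corrections and to estimate each one under the functional $Z\mapsto\E(Z\mid F(x)=a)\,\rho_{F(x)}(a)$, which for the regular value $a\ne 0$ is linear and monotone in $Z\ge 0$, so that the triangle inequality applies. Write $g_i(x):=f_{i1}(x)\cdots f_{ik_i}(x)$ and $G_i(x):=g_i(x)\,x^{d_i}$, so $F=\sum_i G_i$. Since $w_i(x)=q_i(x)\,x^{d_i}$, one has the logarithmic-derivative identity $d_i/x=w_i'/w_i-q_i'/q_i$, and inserting it into $G_i'(x)=\tfrac{d_i}{x}G_i(x)+x^{d_i}g_i'(x)$ gives
\begin{equation*}
 F'(x)=\sum_{i=1}^m\frac{w_i'(x)}{w_i(x)}G_i(x)\ -\ \sum_{i=1}^m\frac{q_i'(x)}{q_i(x)}G_i(x)\ +\ \sum_{i=1}^m x^{d_i}g_i'(x),
\end{equation*}
where for $d_i>0$ the first two terms are read through their polynomial continuous extensions $x^{d_i}(\tfrac{q_i'}{q_i}+\tfrac{d_i}{x})g_i$ and $x^{d_i}\tfrac{q_i'}{q_i}g_i$ (they have no pole at $x=0$). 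Call the first sum $M(x)$.

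Next I identify $M(x)$ distributionally with $R'(x)$. Fix $x\ne 0$. Since the $f_{ij}(x)$ are independent with $f_{ij}(x)\sim\ND(0,\a_{ij}(x))$, the normalized products $U_i:=g_i(x)\big/\prod_j\sqrt{\a_{ij}(x)}$ are independent with $U_i\sim\varpi_{k_i}$, so $(U_1,\dots,U_m)$ has exactly the joint law of the coefficient vector of $R$. By the definition of $q_i$ one has $x^{d_i}\prod_j\sqrt{\a_{ij}(x)}=Q(x)\,w_i(x)$ with the common deterministic factor $Q(x):=\prod_j\sqrt{\a_{1j}(x)}\ge 1$, hence $G_i(x)=Q(x)w_i(x)U_i$ and $\tfrac{w_i'}{w_i}G_i(x)=Q(x)w_i'(x)U_i$. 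Summing, the pair $\bigl(F(x),M(x)\bigr)$ has the law of $\bigl(Q(x)R(x),Q(x)R'(x)\bigr)$, so by the scaling rule \eqref{eq:TR}
\begin{equation*}
 \E\big(|M(x)| \mid F(x)=a\big)\,\rho_{F(x)}(a)=\E\big(|R'(x)| \mid R(x)=a/Q(x)\big)\,\rho_{R(x)}(a/Q(x)),
\end{equation*}
which is the third term on the right-hand side; the value $a/Q(x)$ in place of $a$ is immaterial, since the proposition is used only through the supremum of this quantity over a fixed neighbourhood of $0$ and $Q(x)$ is bounded on $[0,1]$.

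For the two corrections I propagate per-product estimates through the convolution $F=\sum_i G_i$. Since $G_i(x)$ has a density of the form $b\mapsto P^{-1}\varpi_{k_i}(b/P)$ with $P=|x^{d_i}|\prod_j\sqrt{\a_{ij}(x)}>0$, writing $F=G_i+W$ with $W:=\sum_{i'\ne i}G_{i'}(x)$ independent of $(g_i,g_i')$ and applying Lemma~\ref{le:trick}(2) (legitimate because $\varpi_{k_i}$ is convenient) gives $\E(|G_i(x)|\mid F(x)=a)\,\rho_{F(x)}(a)\le 1$; hence the second sum contributes at most $\sum_i|q_i'(x)/q_i(x)|$. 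For the third sum, from $F=x^{d_i}g_i+W$ the same Fubini/coarea manipulation as in \eqref{eq:claim-g-gen} yields
\begin{equation*}
 \E\big(|g_i'(x)| \mid F(x)=a\big)\,\rho_{F(x)}(a)=\int_\R\E\big(|g_i'(x)| \mid g_i(x)=b\big)\,\rho_{g_i(x)}(b)\,\rho_W(a-x^{d_i}b)\,db,
\end{equation*}
and Lemma~\ref{le:PProducts} bounds the conditional factor by $\tfrac{1}{\sqrt{2\pi}}\sum_j\b'_{ij}(x)$ for a.e.\ $b$; since $\int_\R\rho_W(a-x^{d_i}b)\,db=|x^{d_i}|^{-1}$, we obtain $|x^{d_i}|\,\E(|g_i'(x)|\mid F(x)=a)\,\rho_{F(x)}(a)\le\tfrac{1}{\sqrt{2\pi}}\sum_j\b'_{ij}(x)$. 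Summing over $i$ and combining with the previous paragraph via the triangle inequality proves the claim (the single point $x=0$ being irrelevant, or handled by continuity).

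The delicate step is the last one: one can only condition on $F(x)=a$, not on the individual $g_i(x)$, so the estimates of Lemma~\ref{le:PProducts} and the bound $\E(|G_i|\mid F=a)\rho_F(a)\le 1$ must be carried through the sum $F=\sum_i G_i$. Because the densities $\varpi_{k_i}$ blow up at the origin, crude $L^\infty$ bounds are useless, and this propagation must go through the monotonicity-based inequality Lemma~\ref{le:trick}(2); this is also what produces the clean constant $\tfrac{1}{\sqrt{2\pi}}$ rather than something larger. A secondary bookkeeping point is that the identification in the second paragraph requires the deterministic scale $Q(x)$ to be independent of $i$, which is precisely what the normalization by $\a_{1j}$ in the definition of $q_i$ guarantees.
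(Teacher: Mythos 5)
Your proof is correct and follows essentially the same route as the paper: you normalize the Gaussian products $g_i(x)=Q(x)w_i(x)U_i$ with $U_i\sim\varpi_{k_i}$ to reduce to the linear combination $R$, handle the $x^{d_i}g_i'$ terms via the convolution identity of type \eqref{eq:claim-g-gen} together with Lemma~\ref{le:PProducts}, and control the $q_i'/q_i$ corrections with the convenient-density bound (Lemma~\ref{le:trick}(2), i.e.\ Proposition~\ref{pro:basicEbound}), just as in the paper, the only difference being a cosmetic rearrangement of the decomposition of $F'$. The one deviation you flag --- that the last term comes out evaluated at $a/Q(x)=p_1(x)a$ rather than at $a$ --- is present in the paper's own computation as well (there $\z=p_1(x)a$), and it is indeed immaterial because the bound from Proposition~\ref{th:EBound} applied downstream is uniform in $a$; you are in fact more explicit about this point than the paper.
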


\begin{proof}
We write $F(x) = h_1(x)+\ldots + h_m(x)$, where 
$$
 h_i(x) := g_i(x) x^{d_i} \quad  \mbox{and}\quad  g_i(x) := f_{i1}(x)\ddd f_{ik_i}(x) .
$$
Note that 
$h'_i (x) = g'_i(x) x^{d_i} + g_i(x)  d_i x^{d_i-1}$. 
We bound with the triangle inequality: 
$$
|F'(x)| \ \le\  \sum_{i=1}^m \left| g'_i(x) x^{d_i} \right| 
  +  \left| \sum_{i=1}^mg_i(x) d_i x^{d_i-1} \right| .
$$
Here, it is essential not to upper bound further the right-hand 
contribution by $\sum_{i=1}^m |g_i(x) d_i x^{d_i-1}|$.  
Continuing, we get 
\begin{equation}\label{eq:RHC}
\begin{split}
  \E\big( |F'(x)| \mid F(x) = a \big) \rho_{F(x)}(a) \ \le\ &  
   \sum_{i=1}^m  \E\big( |g'_i(x) x^{d_i}| \mid F(x) = a \big) \rho_{F(x)}(a) \\
  &+\; \E\Big(\big| \sum_{i=2}^m g_i(x) d_i x^{d_i-1} \big| \mid F(x) = a \big) \rho_{F(x)}(a) .
\end{split}
\end{equation}
By the same reasoning as for \eqref{eq:claim-g-gen}, we have for nonzero $a\in\R$
$$
 \E\big( |g'_i(x) x^{d_i}| \mid F(x) = a \big) \rho_{F(x)}(a) = 
 \int_{\R} \E\big( |g'_i(x) x^{d_i}| \mid h_i(x) = b \big) \rho_{h_i(x)}(b) \rho_{H_i(x)}(a-b)\, db ,
$$
where $H_i(x) := \sum_{j\ne i} h_j(x)$. Moreover, setting 
$\tilde{b} :=  \frac{b}{x^{d_i}}$, we get for nonzero $b\in\R$
$$
 \E\big( |g'_i(x) x^{d_i}| \mid h_i(x) = b \big) \rho_{h_i(x)}(b)
 = \E\big( |g'_i(x) x^{d_i}| \mid g_i(x) = \tilde{b} \big) 
\frac{1}{x^{d_i}}\rho_{g_i(x)}(\tilde{b}) .
$$
The $x^{d_i}$ cancels and by Lemma~\ref{le:PProducts}, we have with 
$AB= \frac{1}{\sqrt{2\pi}}$,  
$$
\E\big( |g'_i(x) x^{d_i} | \mid h_i(x) = b \big) \rho_{h_i(x)}(b) 
\le \frac{1}{\sqrt{2\pi}} \sum_{j=1}^{k_i} \b'_{ij}(x) .
$$
Note that the right hand-side does not depend on~$b$. 
Multiplying with $\rho_{H_i(x)}(a-b)$, integrating over $b$ (which doesn't change anything)
and summing over $i$, yields the first contribution in the Theorem's upper bound. 

It remains to bound the right-hand contribution in \eqref{eq:RHC}, 
for fixed $x\in\R$ and nonzero $a\in\R$. 
For this, note that $f_{ij}(x)$ is a centered Gaussian random variable having 
the variance $\a_{ij}(x)$ (recall \eqref{eq:def_fij} and \eqref{eq:gen-fct}). 
So we may write 
$f_{ij}(x) = \a_{ij}(x)^{\frac12} v_{ij}$ with independent standard Gaussian random variables~$v_{ij}$. 
Hence, if we abbreviate $u_i :=  v_{i1}\cdots v_{ik_i}$ and put 
$p_i(x) := (\a_{i1}(x)\ddd \a_{ik_i}(x))^{-\frac12}$, then 
$$
 g_i(x) = f_{i1}(x)\ddd f_{ik_i}(x) = \a_{i1}(x)^{\frac12} \ddd \a_{ik_i}(x)^{\frac12} v_{i1}\ddd v_{ik_i}  
 = p_i(x)^{-1} u_i .  
$$
By its definition, the random variable $u_i$ has the distribution $\varpi_{k_i}$ 
(cf.\ Section~\ref{se:prod_gauss}). 
It is a convenient distribution (cf.\ Definition~\ref{def:convenient}). 
We also note that 
$q_i(x)=\frac{p_1(x)}{p_i(x)}$
by \eqref{def:qi}. 
With these notations, we can write 
$$
 F(x) = \sum_{i=1}^m g_i(x) x^{d_i} 
 = \sum_{i=1}^m \frac{u_i}{p_i(x)} x^{d_i} 
 = \frac{1}{p_1(x)} \sum_{i=1}^m u_i q_i(x) x^{d_i}
= \frac{1}{p_1(x)} R(x) .
$$
Hence 
$\rho_{F(x)}(a) = p_1(x) \rho_{R(x)}(\z)$, where $\z := p_1(x)a$.  
We analyze now the right-hand contribution in \eqref{eq:RHC}: 
$$
 \E\Big(\big| \sum_{i=2}^m \frac{u_i}{p_i(x)} d_i x^{d_i-1} \big|\mid F(x) = a \Big) \rho_{F(x)}(a) 
= \E\Big(\big| \sum_{i=2}^m u_i q_i(x) d_i x^{d_i-1} \big|\mid F(x) = a \Big) \rho_{R(x)}(\z) .
$$
Using
$$
 R'(x) = \sum_{i=1}^m u_i q'_i(x) x^{d_i} + \sum_{i=1}^m u_i q_i(x) d_i x^{d_i-1} , 
$$
we can bound 
$$
 \Big|\sum_{i=2}^m u_i q_i(x) d_i x^{d_i-1} \Big| \ \le\ 
\sum_{i=1}^m \Big| u_i q'_i(x) x^{d_i} \Big| + | R'(x)| .
$$ 
Therefore,
\begin{eqnarray*}
 \E\Big( \big|\sum_{i=2}^m u_i q_i(x) d_i x^{d_i-1} \big| \mid R(x) = \z \Big) \rho_{R(x)}(\z) 
  &\le& \sum_{i=1}^m \E(\big| u_i q'_i(x) x^{d_i} \big| \mid R(x) = \z \big) \rho_{R(x)}(\z)  \\
 &+& \E\big( |R'(x)| \mid R(x)=\z \big) \rho_{R(x)}(\z) .\\
\end{eqnarray*}
Note that the right-hand contribution equals 
$\E\big( |R'(x)| \mid R(x)=a\big) \rho_{R(x)}(a)$
as desired. In order to bound the left-hand sum, 
we can apply Proposition~\ref{pro:basicEbound} since the densities of the $u_i$ are convenient,  
and we thus obtain 
$$
 |q_i(x) x^{d_i}| \cdot \E( |u_i| \mid R(x) = \z \big) \rho_{R(x)}(\z)  \ \le \ 1 .
$$
This yields 
$$
 |q'_i(x) x^{d_i}| \cdot \E(\big|u_i \big| \mid R(x) = \z \big) \rho_{R(x)}(\z) \ \le \ 
 \frac{|q'_i(x)|}{q_i(x)} .
$$
Summarizing, we have shown that 
$$
 \E\Big(\big| \sum_{i=2}^m g_i(x) d_i x^{d_i-1} \big| \mid F(x) = a \big) \rho_{F(x)}(a) \ \le\ 
   \sum_{i=1}^m \frac{|q'_i(x)|}{q_i(x)} 
    + \E\big( |R'(x)| \mid R(x)=\z \big) \rho_{R(x)}(\z) ,
$$
which completes the proof. 
\end{proof}

We can finally provide the proof of the main result.

\begin{proof}[Proof of Theorem~\ref{th:main}] 
The right-hand term in the statement of Proposition~\ref{th:mainbound}
can be bounded with Proposition~\ref{th:EBound}. 
Indeed, due to Lemma~\ref{le:d2bounds} we know that 
$\varpi_{k_1}(a) \ \le\  e\; |a|^{\frac{1}{2k_1}-1}$ for all~$a$.
Applying Proposition~\ref{th:EBound} with the parameters $B=1$, 
$C=e$, and $\d=(2k_1)^{-1}$ yields 
$$
 \E \big(|R'(x)| \mid R(x) = a \big) \rho_{R(x)}(a) \ \le\ 
  e(2k_1 + 1) \, \sum_{i=2}^m \frac{|w'_i(x)|}{\max\{ |w_i(x)|, |w_i(x)|^{1-\frac{1}{2k_1}}\} }.
$$
Applying Proposition~\ref{th:mainbound} implies 
for $x\in\R$ and $a\in\R^*$, recalling that $w_i(x) :=q_i(x) x^{d_i}$, 
\begin{eqnarray} \notag
 \E \big(|F'(x)| \mid F(x) = a \big) \rho_{F(x)}(a) \ &\le& \ 
 \frac{1}{\sqrt{2\pi}} \sum_{i=1}^m \sum_{j=1}^{k_i} \b'_{ij}(x) 
 +\sum_{i=1}^m \left| \frac{q'_i(x)}{q_i(x)} \right| \\ \label{eq:STAR}
&& \phantom{x} +\phantom{x} 
  e(2k_1 + 1) \, \sum_{i=2}^m \frac{|w'_i(x)|}{\max\{ |w_i(x)|, |w_i(x)|^{1-\frac{1}{2k_1}}\} } 
 =: g(x) .
\end{eqnarray} 
$$
 \int_{0}^{1} \frac{|w_i'(x)|}{\max\{ w_i(x), w_i(x)^{1-\frac{1}{2k_1}}\}}\, dx \ \le\ 
  2 \LV(q_i) + k_it + 2k_1 .
$$
The function $g(x)$ on the right-hand side of \eqref{eq:STAR} is integrable:
$$ 
 \int_0^1 g(x)\, dx \ \le\ 
  \frac{1}{\sqrt{2\pi}} \sum_{i=1}^m \sum_{j=1}^{k_i} (t-1)
  +  \sum_{i=1}^m \LV (q_i)
  + e(2k_1 + 1) \, \sum_{i=2}^m (2\LV(q_i) + k_i t + 2k_1) < \infty .
$$
By Proposition~\ref{pro:new} we can bound
$\LV(q_i) \le \frac12 2 k_i \ln t$.  
Moreover, Theorem~\ref{th:rice} 
can be applied (see Lemma~\ref{le:SingF})
and states that 
$\E(\#\{x\in [0,1] : F(x) = 0 \}) \le \int_0^1 g(x)\, dx$.
Hence, 
\begin{eqnarray}\label{eq:AF} \notag
  \E(\#\{x\in [0,1] : F(x) = 0 \}) 
 \ &\le\ & \frac{1}{\sqrt{2\pi}} (k_1+\ldots + k_m) (t-1)  +  (k_1+\ldots + k_m) \ln t \\ \notag
 && +\phantom{x} e(2k_1 + 1) \big(
 (k_2 + \ldots + k_m) (2\ln t + t)   + (m-1) 2k_1 \big)\\
 && = \Oh( k^2 m t) ,
\end{eqnarray}
where $k$ denotes the maximum of the $k_i$. 

The number of zeros of $F$ in $[1,\infty)$ equals the number of zeros $x\in (0,1]$ of $F(x^{-1})$. 
Moreover, $F(x^{-1})$ has the same structure as $F$ except that the supports $S_{ij}$ 
are replaced by $-S_{ij}$. Since we can shift the degrees without changing the number of positive zeros, 
we conclude that 
$\E(\#\{x\in [1,\infty) : F(x) = 0 \})$ is also bounded by~\eqref{eq:AF}.
Therefore, $\E(\#\{x\in \R : F(x) = 0 \})$ is upper bounded by four times~\eqref{eq:AF}.
\end{proof}


\end{document}